\def\N{{\mathbb N}}        
\def\R{{\mathbb R}}        
\def\C{{\mathbb C}}        
\def\L{{\mathbb L}}
\def\P{{\mathbb P}}        
\def\E{{\mathbb E}}        
\def\1{{\mathbf 1}}        
\def\ep{\varepsilon}
\DeclareMathOperator{\tr}{tr} \DeclareMathOperator{\var}{Var}
\DeclareMathOperator{\supp}{Supp}
\theoremstyle{plain}
\newtheorem{thm}{Theorem}
\newtheorem{cor}[thm]{Corollary}
\newtheorem{lem}[thm]{Lemma}
\newtheorem{prop}[thm]{Proposition}
\newtheorem{deft}[thm]{Definition}
\newcommand{\ebno}{{E_b/N_0}}
\newcommand{\febno}{\frac{E_b}{N_0}}
\newcommand{\So}{S_0}
\newcommand{\Sinf}{\mathcal{S}_\infty}
\newcommand{\Linf}{\mathcal{L}_\infty}
\newcommand{\vct}[1]{\boldsymbol{#1}}
\newcommand{\Mat}[1]{\boldsymbol{#1}}
\newcommand{\abs}[1]{\left\vert#1\right\vert}
\newcommand{\tendvers}[1]{\xrightarrow[n\rightarrow\infty]{}#1}
\newcommand{\nrm}[1]{\left\lVert#1\right\rVert}
\DeclareMathOperator{\trace}{tr}
\begin{document}

\title{On Certain Large Random Hermitian Jacobi Matrices with Applications to Wireless Communications}

\author{\authorblockN{Nathan Levy\authorrefmark{1}\authorrefmark{2}, Oren Somekh\authorrefmark{3}, Shlomo Shamai (Shitz)\authorrefmark{2}} and Ofer Zeitouni\authorrefmark{4}\authorrefmark{5}\\
\authorblockA{\authorrefmark{1}
D\'epartement de Math\'ematiques et Applications, Ecole Normale Sup\'erieure, Paris 75005,
France}\\
\authorblockA{\authorrefmark{2}
Department of Electrical Engineering, Technion, Haifa 32000,
Israel}\\
\authorblockA{\authorrefmark{3}
Department of Electrical Engineering, Princeton University,
Princeton, NJ 08544, USA}\\
\authorblockA{\authorrefmark{4}
School of Mathematics, University of Minnesota, Minneapolis, MN 55455,
USA}\\
\authorblockA{\authorrefmark{5}
Department of Mathematics, Weizmann Institute of Science, Rehovot 76100, Israel}\\
}

\maketitle

\begin{abstract}
In this paper we study the spectrum of certain large random
Hermitian \emph{Jacobi} matrices. These matrices are known to
describe certain communication setups. In particular we are
interested in an uplink cellular channel which models mobile users
experiencing a soft-handoff situation under joint multicell
decoding. Considering rather general fading statistics we provide
a closed form expression
for the per-cell
sum-rate of this channel
in high-SNR,
when an intra-cell TDMA protocol is
employed. Since the matrices of interest are \emph{tridiagonal},
their eigenvectors can be considered as sequences with second
order linear recurrence. Therefore, the problem is reduced to the
study of the exponential growth of products of two by two
matrices. For the case where $K$ users are simultaneously active
in each cell, we obtain a series of lower and upper bound on the
high-SNR power offset of the per-cell sum-rate, which are
considerably tighter than previously known bounds.
\end{abstract}

\section{Introduction}
The growing demand for ubiquitous access to high-data rate
services, has produced a huge amount of research analyzing the
performance of wireless communications systems. Cellular systems
are of major interest as the most common method for providing
continuous services to mobile users, in both indoor and outdoor
environments. Techniques for
providing better service and
coverage in cellular mobile communications are currently being
investigated by industry and academia. In particular, the use of
joint multi-cell processing (MCP), which allows the base-stations
(BSs) to jointly process their signals, equivalently creating a
distributed antenna array, has been identified as a key tool for
enhancing system performance (see
\cite{Shamai-Somekh-Zaidel-JWCC-2004}\cite{Somekh-Simeone-Barness-Haimovich-Shamai-BookChapt-07}
and references therein for surveys of recent results on multi-cell
processing).

Most of the works on the uplink channel of cellular systems deal
with a single-cell setup. References that consider multi-cell
scenarios tend to adopt complex multi-cell system models which
render analytical treatment extremely hard (if not, impossible).
Indeed, most of the results reported in these works are derived
via intensive numerical calculations which provide little insight
into the behavior of the system performance as a function of
various key parameters (e.g.
\cite{Viterbi-Vehic-1991}-\nocite{Hanly-Whiting-Telc-1993}\nocite{Zorzi-Vehic-1996}\nocite{Zorzi-Comm-1996}\nocite{Rimoldi-Li-1996}\cite{Jafar-Foschini-Goldmisth-2004}).

Motivated by the fact that mobiles users in a cellular system
``see" only a small number of BSs,
and by the desire to provide analytical results, an
attractive analytically tractable model for a multi-cell system
was suggested by Wyner in \cite{Wyner-94} (see also
\cite{Hanly-Whiting-Telc-1993} for an earlier relevant work).
In this model, the system's cells are ordered in either an infinite
linear array, or in the familiar two-dimensional hexagonal pattern
(also infinite). It is assumed that only adjacent-cell
interference is present and characterized by a single parameter, a
scaling factor $\alpha\in[0,1]$. Considering non-fading channels
and a ``wideband'' (WB) transmission scheme, where all bandwidth
is available for coding (as opposed to \emph{random} spreading),
the throughput obtained with optimum and linear MMSE \emph{joint}
processing of the received signals from \emph{all} cell-sites are
derived. Since it was first presented in \cite{Wyner-94},
``Wyner-like" models have provided a framework for many works
analyzing various transmission schemes in both the uplink and
downlink channels (see
\cite{Somekh-Simeone-Barness-Haimovich-Shamai-BookChapt-07} and
references therein).

In this work we consider a simple ``Wyner-like" cellular setup
presented in \cite{Somekh-Zaidel-Shamai-CWIT-05} (see also
\cite{Somekh-Zaidel-Shamai-IT-2005}). According to this setup, the
cells are arranged on a circle (or a line), and the mobile users
``see" only the two BSs which are located on their cell's
boundaries. All the BSs are assumed to be connected through an
ideal backhaul network to a central multi-cell processor (MCP),
that can \emph{jointly} process the uplink received signals of all
cell-sites, as well as pre-process the signals to be transmitted
by all cell-sites in the downlink channel. The users are hence in
what is referred to as a ``soft-handoff" situation, which is very
common in practical real-life cellular systems, and is therefore
of real practical as well as theoretical interest (see for example
\cite{Nasser-Hasswa-Hassanein-Comm-Mag-2006} for a recent survey
on handoff schemes).  With simplicity and analytical tractability
in mind, and in a similar manner to previous
 work, the model
provides perhaps the simplest framework for a soft-handoff setting
in a cellular system, that still represents real-life phenomena
such as intercell interference and fading.

Unfortunately, the analysis of ``Wyner-like" models in general and
the ``soft-handoff" setup in particular presents some analytical
difficulties (see Section \ref{sec: Analysis Difficulty}) when
fading is present. These difficulties render conventional analysis
methods such as large random matrix theory impractical. Indeed the
per-cell sum-rate rates supported by MCP in the uplink channel of
the ``soft-handoff" setups are known only for limited scenarios
such as non-fading channels, phase-fading channels, fading
channels but with large number of users per-cell, and Rayleigh
fading channels with single user active per-cell
\cite{Somekh-Zaidel-Shamai-CWIT-05}\cite{Somekh-Zaidel-Shamai-IT-2005}\cite{Jing-Tse-Hou-Soriaga-Smee-Padovani-ISIT-2007}.
The latter result is due to a remarkable early work by Narula
\cite{Narula-1997} dealing with the capacity of a two-tap time
variant ISI channel. Calculating the per-cell sum-rate capacity
supported by the uplink channel of the ``soft-handoff" setup in
the presence of \emph{general} fading channels (not necessarily
Rayleigh fading channels), when \emph{finite} number of users are
active simultaneously in each cell remains an open problem (see
\cite{Somekh-Zaidel-Shamai-IT-2005}\cite{Lifang-Goldsmith-Globecom06}
for bounds on this rate). As will be shown in the sequel, this
problem is closely related to calculating the spectrum of certain
large random Hermitian Jacobi matrices.
The high-SNR
characterization of the sum-rate capacity, previously unknown, is
the main focus of this work.

In particular we calculate the high-SNR slope and power offset of
the rate with a single user active per-cell (intra-cell TDMA)
under a rather generic fading distribution. We also prove the
following results for any given number of active users per-cell.
We prove the existence of a limiting sum-rate capacity when the number of
cells goes to infinity and calculate the high-SNR slope in Theorem
\ref{principalKtext}. Moreover, we give bounds on the high-SNR
power offset in Proposition \ref{boundsK}. In particular, we give
a sequence of explicit upper- and lower-bounds; the gap between
the lower and the upper bounds is decreasing with the bounds'
order and complexity.

The rest of the paper is organized as follows. In Section
\ref{sec: Problem Statement} we present the problem statement and
main results. Section \ref{sec: Background} includes a
comprehensive
review of previous works. Several applications
of the main result are discussed in Section \ref{sec:
Applications}. Concluding remarks are included in Section
\ref{sec: Concluding Remarks}. Various derivations and proofs
are deferred to the Appendices.

\section{Problem Statement and Main Results}\label{sec: Problem Statement}

\subsection{System Model}
In this paper we consider a linear version of the cellular
``soft-handoff" setup introduced in
\cite{Somekh-Zaidel-Shamai-CWIT-05}\cite{Somekh-Zaidel-Shamai-IT-2005},
according to which $M+1$ cells with $K$ single antenna users per cell are
arranged on a line, where the $M$ single antenna BSs are located
on the boundaries of the cells (see Fig. \ref{fig: Soft-handoff
setup} for the special case of $M=3$). Starting with the WB
transmission scheme where all bandwidth is devoted for coding and
all $K$ users are transmitting simultaneously each with average
power $\rho$, and assuming synchronized communication, a vector
baseband representation of the signals received at the system's
BSs is given for an arbitrary time index by
\begin{equation}\label{eq: Cahnnel definition}
    \vct{y}=\Mat{H_M}\vct{x}+\vct{n}\ .
\end{equation}
The $M\times K(M+1)$ channel transfer matrix $\Mat{H}_M$ is a two
block diagonal matrix defined by
\begin{equation}\label{eq: General channel transfer matrix}
\Mat{H}_{M}=\left(
\begin{array}{ccccc}
\vct{a}_1   & \vct{b}_1     & \vct{0}         & \cdots            & \vct{0}       \\
\vct{0}     & \ddots        & \ddots          & \ddots            & \vdots        \\
\vdots      & \ddots        & \ddots          & \ddots            & \vct{0}       \\
\vct{0}     & \cdots        & \vct{0}         & \vct{a}_{M}       & \vct{b}_{M}   \\
\end{array}
\right)\ ,
\end{equation}
where $\vct{a}_m$ and $\vct{b}_m$ are $1\times K$ row vectors
denoting the channel complex fading coefficients, experienced by
the $K$ users of the $m$th and $(m+1)$th cells, respectively, when
received by the $m$th BS antenna.
$\vct{n}$
represents the $M\times M$ zero mean circularly symmetric Gaussian
noise vector $\vct{n}\sim \mathcal{CN}(\vct{0},\Mat{I}_M)$.

We assume throughout
that the fading processes are i.i.d.\ among different users and BSs,
with
$a_{m,k}\sim \pi_a$ and $b_{m,k}\sim \pi_b$, and can be viewed for
each user as ergodic processes with respect to the time index. We
denote by $\P$ the probability associated with those random
sequences and by $\E$ the associated expectation. We will be
working throughout with a subset of the following assumptions.
\begin{enumerate}
    \item[(H1)]\label{integrability}
        $\E_{\pi_a}\left(\log\abs{x}\right)^2<\infty$\footnote{A natural base logarithm is used throughout
this work unless explicitly denoted otherwise.} and
$\E_{\pi_b}\left(\log\abs{x}\right)^2<\infty$.
    \item[(H2)]\label{ac} $\pi_a$ and $\pi_b$
        are absolutely continuous with respect to Lebesgue measure on $\C$.
    \item[(H3)]\label{support} There exists a real
        ${\cal M}$ such that if $x$ is distributed
        according to $\pi_a$ (resp. $\pi_b$)
        then the density of $\abs{x}^2$ is
        strictly positive on the interval $[{\cal M};\infty)$.
  \item[(H3')] There exist $m_a<{\cal M}_a\in\R^+\cup\{\infty\}$
      (resp. $m_b<{\cal M}_b\in\R^+\cup\{\infty\}$)
      such that if $x$ is distributed according to
      $\pi_a$ (resp. $\pi_b$) then the density of
      $\abs{x}^2$ and the Lebesgue-measure on $[m_a;{\cal M}_a]$
      (resp. $[m_b;{\cal M}_b]$) are mutually absolutely continuous.
  \item[(H4)] There exists a ball in $\C$ such that the Lebesgue measure outside that ball is absolutely continuous with respect to $\pi_a$ and $\pi_b$.
\end{enumerate}

We further assume that the channel state information (CSI) is
available to the MCP only, while the transmitters know only the
channel statistics, and cannot cooperate their transmissions in
any way. Therefore, independent zero mean circularly symmetric
Gaussian codebooks conform with the capacity achieving statistics,
where $\vct{x}$ denotes the $(M+1)K\times 1$ transmit vector
$\vct{x}\sim \mathcal{CN}(\vct{0},\rho\Mat{I}_{MK})$, and $\rho$
is the average transmit power of each user
\footnote{Note that since the channel transfer matrix
$\Mat{H}_{M}$ is a \emph{column-regular} gain matrix (see
definition in \cite{Tulino-Lozano-Verdu-Ant-Corr-IT-2005}) when
$M\rightarrow\infty$, the capacity achieving statistics remains
the same in this case, even if we allow the users to cooperate as
long as they are unaware of the CSI.}
($\rho$ is thus equal to the transmit SNR of the users).

With the above assumptions, the system \eqref{eq:
Cahnnel definition} is a multiple access channel (MAC). We are
interested in the per-cell sum-rate capacity
\begin{equation}\label{eq: Rate definition}
    C_M(P) =\frac{1}{M}
    \E\left(\log\det{\Mat{G}_M}\right)\quad[\mathrm{nats/channel\ use}]\
    ,
\end{equation}
where $P\triangleq K\rho$ is the per-cell transmitted average
power,
\begin{equation}\label{eq: G def}
    \Mat{G}_M\triangleq\Mat{I}_M+\rho\Mat{H}_M\Mat{H}_M^\dagger\ ,
\end{equation}
and the expectation is taken over the channel transfer matrix
entries.
(Here and in the sequel, for a scalar $z\in \C$, $z^\dagger$ denotes
the complex conjugate, while for a matrix $A$, $A^\dagger$ denotes the
matrix with $A^\dagger(i,j)=A(j,i)^\dagger$.)
The non-zero entries of the \emph{Hermitian Jacobi}
matrix $\Mat{G}_M$ are equal to
\begin{equation}\label{eq: explicit sums - no fading}
\begin{aligned}
\left[\Mat{G}_M\right]_{m,m-1}&=\rho<\vct{b}_{m-1};\vct{a}_m>\,, \\
\left[\Mat{G}_M\right]_{m,m}&=1+\rho \left(\abs{\vct{a}_{m}}^2 +
 \abs{\vct{b}_{m}}^2\right)\,,\\
\left[\Mat{G}_M\right]_{m,m+1}&=\rho<\vct{a}_{m+1};\vct{b}_m>\,,
\end{aligned} \
\end{equation}
where out-of-range indices should be ignored, and for any two
arbitrary $L$ length vectors $\vct{a},\ \vct{b}$ we define
$<\vct{a};\vct{b}> \triangleq \sum_{l=1}^L a^\dagger_{l} b_{l}$,
and $\abs{\vct{a}}^2\triangleq <\vct{a};\vct{a}>$.

Since we shall focus on the asymptotes of infinite number of cells
$M\rightarrow\infty$, boundary effects can be neglected and
symmetry implies that the rate \eqref{eq: Rate definition} equals
the maximum equal rate (or symmetric capacity) supported by the
channel \cite{HW}.

The above description relates to the WB protocol where all users
transmit simultaneously. According to the intra-cell TDMA protocol
only one user is simultaneously active per-cell, transmitting
$1/K$ of the time using the \emph{total} cell transmit power $P$.
In this case it is easily verified that with no loss of
generality, we can consider a single user per cell in terms of the
per-cell sum-rate, setting $K=1$ in \eqref{eq: Cahnnel definition}
and \eqref{eq: General channel transfer matrix}.

\subsection{Analysis Difficulty}\label{sec: Analysis Difficulty}

Many recent studies have analyzed the rates of various channels
using results from (large) random matrix theory (see
\cite{Tulino-Verdu-Random-Matrix-Review-2004} for a recent
review). In those cases, the number of random variables involved is
of the order of the number of elements in the matrix $\Mat{G}_M$
(or $\Mat{H}_M$), and self-averaging is strong enough to ensure
convergence of the empirical measure of eigenvalues, and to derive
equations for the limit (or its Stieltjes transform). In
particular, this is the case if the normalized continuous power
profile of $\Mat{H}_M$, which is defined as
\begin{equation}  \label{eq: MIMO definition of the power profile}
\mathcal{P} _M(r,t)\triangleq \E(\abs{[\Mat{H}_M]_{i,j}}^2) \quad
; \quad \frac{i}{M} \le r < \frac{i+1}{M}\ , \ \frac{j}{(M+1)K}
\le t < \frac{j+1}{(M+1)K}\ ,
\end{equation}
converges uniformly to a bounded, piecewise continuous function as
$ M\to\infty$, see e.g. \cite[Theorem
2.50]{Tulino-Verdu-Random-Matrix-Review-2004} and
\cite{anderson-zeitouni} for fluctuation results. In the case
under consideration here, it is easy to verify that for $K$ fixed,
$\mathcal{P} _M(r,t)$ does \emph{not} converge uniformly, and
other techniques are required.

\subsection{Extreme SNR Regime Characterization}

As mentioned earlier, the per-cell sum-rate capacity of the
``soft-handoff" setup is known only for certain limited
cases to be elaborated in the next section, and in general analytical results are hard to derive. As an
alternative to deriving exact analytical results we focus here on
extracting parameters which characterize the channel rate under
extreme SNR scenarios. The reader is referred to
\cite{Shamai-Verdu-2001-fading}\ -
\nocite{Verdu-paper-low-snr-regime-02}
\cite{Lozano-Tulino-Verdu-high-SNR-IT05} for an elaboration on the
extreme SNR characterization.

\paragraph{The Low-SNR Regime}
This regime is usually the operating
regime for wide-band systems \cite{Verdu-paper-low-snr-regime-02}.

The average per-cell spectral efficiency in bits/sec/Hz, expressed
as a function of the system average transmit SNR, $\ebno$, is
evaluated by solving the implicit equation obtained by
substituting
\begin{equation}\label{}
    P=\mathsf{C}_M\left(\febno\right)\, \febno\
\end{equation}
in \eqref{eq: Rate definition}, where
$\mathsf{C}_M(\ebno)=C_M(P)/\log 2$\ stands for the uplink
spectral efficiency measured in [bits/sec/Hz]. The low-SNR regime
is characterized through the minimum transmit $\ebno$ that enables
reliable communications,
\begin{equation}\label{eq: EbNo-min Definition}
\febno_{\min}\triangleq\frac{\log 2}{\dot{C}_M(0)} \ ,
\end{equation}
and the low-SNR spectral efficiency slope
\begin{equation}\label{eq: So Definition}
\So\triangleq \frac{2\left[\dot{C}_M(0)\right]^2}{-\ddot{C}_M(0)}\
,
\end{equation}
yielding the following low-SNR affine approximation
\begin{equation}\label{eq: Rate lo-snr}
\mathsf{C}_M\left(\febno\right)\approx \frac{\So}{3|_\text{dB}}
\,\left(\left.\febno\right|_\text{dB}-
\left.\febno_{\min}\right|_\text{dB}\right)\quad
[\mathrm{bits/sec/Hz}] .
\end{equation}
In the above definitions $3|_\text{dB}=10\log_{10}2$, and
$\dot{C}_M(0)$ and $\ddot{C}_M(0)$ are the first and second
derivatives (whenever exist) with respect to $P$ of the per-cell
sum-rate capacity, respectively, evaluated at $P=0$. Focusing on
Gaussian channels with receiver CSI only, it can be shown
\cite{Verdu-paper-low-snr-regime-02} that there is no need to
calculate the two derivatives of the rate in $P=0$, and that the
low-SNR parameters are simply given by
\begin{equation}\label{eq: low SNR mimo}
\febno_{\min} = \frac{MK \log
2}{\trace\left(\E\Mat{H}_M^{\dagger}\Mat{H}_M\right)}\quad
;\quad S_0 =
\frac{2}{M}\frac{\left(\trace\left(\E\Mat{H}_M^{\dagger}\Mat{H}_M\right)\right)^2}
{\trace\left(\E\left(\Mat{H}_M^{\dagger}\Mat{H}_M\right)^2\right)}\
.
\end{equation}

\paragraph{The High-SNR Regime}
This is usually the operating
regime for high-data rate (high spectral efficiency) systems
(that is the case actually in all 2.5/3 G standards).

The high-SNR regime is characterized through the high-SNR slope
(also referred to as the ``multiplexing gain", or ``pre-log")
\begin{equation}\label{eq: Sinf Definition}
\Sinf\triangleq \lim_{P\rightarrow\infty} \frac{C_M(P)}{\log
P}=\lim_{P\rightarrow\infty} P \dot{C}_M(P)\ ,
\end{equation}
and the high-SNR power offset
\begin{equation}\label{eq: Linf Definition}
\Linf \triangleq \lim_{P\rightarrow \infty} \frac{1}{\log
2}\left(\log P -\frac{C_M(P)}{\Sinf}\right)\ ,
\end{equation}
yielding the following affine capacity approximation
\begin{equation}\label{}
C_M(P) \approx
  \frac{\Sinf\log 2}{3|_\text{dB}}\left(P|_\text{dB} - 3|_\text{dB}\Linf\right)\ .
\end{equation}
Note that the high-SNR approximation reference channel here is
that of a single isolated cell, with no fading, and total average
transmit power $P$.

The high-SNR characterization of the per-cell sum-rate supported
by the ``soft-handoff" uplink channel is known only in certain
limited scenarios (see Section \ref{sec: Background}) and is the
main focus of this work.

\renewcommand{\labelenumi}{\alph{enumi})}

\subsection{Main Results}\label{sec: Main results}

Recall the definition of $C_M(P)$, c.f.
(\ref{eq: Rate definition}).
Starting with intra-cell TDMA scheme where only one user is active
per-cell transmitting with power $P$ we have the following.
\begin{thm}
\label{principaltext}[intra-cell TDMA scheme $K=1$, high-SNR
characterization] Assume (H1) and (H2)\,.
\begin{enumerate}
    \item For every $P>0$, $C_M(P)$ converges as
        $M$ goes to infinity. We call the limit $C(P)$.
    \item We get the following bounds on $C(P)$,
    \[\max(\E_{\pi_a}\log(1+P\abs{x}^2), \E_{\pi_b}\log(1+P\abs{y}^2))\leq C(P)\leq\E_{\pi_a,\pi_b}\log(1+P(\abs{x}^2+\abs{y}^2)).\]
    \item Further assume [(H3) or
        (H3')]. As $P$ goes to infinity,
        \[C(P)=\log P+
        2\max\left(\E_{\pi_a}\log\abs{x},\E_{\pi_b}\log\abs{x}\right)+o(1).\]
        In particular, ${\cal S}_\infty=1$ and ${\cal L}_\infty=
        -2\max\left(\E_{\pi_a}\log_2\abs{x},\E_{\pi_b}\log_2\abs{x}\right)$.
\end{enumerate}
\end{thm}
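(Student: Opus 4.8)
The plan is to exploit the tridiagonal structure of $\Mat{G}_M$ to reduce the computation of $\det\Mat{G}_M$ to a one–dimensional recursion driven by the i.i.d.\ fading coefficients, and then treat that recursion by three different devices: a subadditivity argument for (a), elementary determinant inequalities for (b), and an exact Cauchy--Binet expansion together with a random–walk estimate and a spectral–edge regularity estimate for (c). Throughout $K=1$, so $P=\rho$ and $\Mat{H}_M$ is the $M\times(M+1)$ bidiagonal matrix with $[\Mat{H}_M]_{m,m}=a_m$, $[\Mat{H}_M]_{m,m+1}=b_m$. Writing $D_m$ for the determinant of the leading $m\times m$ principal block of $\Mat{G}_M$, positive definiteness and tridiagonality give $D_0=1$, $D_1=1+\rho(\abs{a_1}^2+\abs{b_1}^2)$ and, with $r_m:=D_m/D_{m-1}>0$,
\[
D_m=\bigl(1+\rho(\abs{a_m}^2+\abs{b_m}^2)\bigr)D_{m-1}-\rho^2\abs{b_{m-1}}^2\abs{a_m}^2\,D_{m-2},\qquad r_m=1+\rho(\abs{a_m}^2+\abs{b_m}^2)-\frac{\rho^2\abs{b_{m-1}}^2\abs{a_m}^2}{r_{m-1}} .
\]
For \emph{part (a)}, Fischer's inequality for the positive–definite block matrix $\Mat{G}_{M+N}$ gives $\det\Mat{G}_{M+N}\le D_M\cdot\det C_N$, where $C_N$ is the bottom-right $N\times N$ block of $\Mat{G}_{M+N}$; by shift-invariance of the fading sequence $C_N$ has the law of $\Mat{G}_N$, so $u_M:=\E\log\det\Mat{G}_M$ is subadditive and finite (by (H1), $0\le u_M\le M\,\E\log(1+\rho(\abs{a_1}^2+\abs{b_1}^2))<\infty$), and Fekete's lemma gives $C_M(P)=\tfrac1M u_M\to\inf_M\tfrac1M u_M=:C(P)\ge0$. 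For \emph{part (b)}, Hadamard's inequality for the positive–definite $\Mat{G}_M$ gives $\det\Mat{G}_M\le\prod_m(1+\rho(\abs{a_m}^2+\abs{b_m}^2))$, hence $C(P)\le\E_{\pi_a,\pi_b}\log(1+P(\abs{x}^2+\abs{y}^2))$; and an easy induction on the recursion for $r_m$ (base $r_1\ge1+\rho\abs{b_1}^2$; step: $r_{m-1}\ge1+\rho\abs{b_{m-1}}^2$ forces $\rho^2\abs{b_{m-1}}^2\abs{a_m}^2/r_{m-1}\le\rho\abs{a_m}^2$) yields $r_m\ge1+\rho\abs{b_m}^2$, so $\det\Mat{G}_M\ge\prod_m(1+\rho\abs{b_m}^2)$ and $C(P)\ge\E_{\pi_b}\log(1+P\abs{y}^2)$; running the same induction on the principal minors taken from the bottom-right corner (which interchanges the roles of $\{a_m\}$ and $\{b_m\}$) gives also $C(P)\ge\E_{\pi_a}\log(1+P\abs{x}^2)$, proving (b).

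For \emph{part (c)}, write $\det\Mat{G}_M=\rho^M\det(\tfrac1\rho\Mat{I}_M+\Mat{H}_M\Mat{H}_M^\dagger)$, so $C_M(P)=\log P+\tfrac1M\E\log\det(\tfrac1P\Mat{I}_M+\Mat{H}_M\Mat{H}_M^\dagger)$. The key algebraic fact is that, because $\Mat{H}_M$ is bidiagonal, deleting its $k$-th column leaves a block-triangular $M\times M$ matrix of determinant $\prod_{m<k}a_m\prod_{m\ge k}b_m$, so Cauchy--Binet gives the closed form
\[
\det(\Mat{H}_M\Mat{H}_M^\dagger)=\sum_{k=1}^{M+1}\prod_{m=1}^{k-1}\abs{a_m}^2\prod_{m=k}^{M}\abs{b_m}^2=\Bigl(\prod_{m=1}^{M}\abs{b_m}^2\Bigr)\sum_{j=0}^{M}e^{W_j},\qquad W_j:=\sum_{m=1}^{j}\log\frac{\abs{a_m}^2}{\abs{b_m}^2}.
\]
Here $(W_j)$ is a random walk with drift $\mu:=\E_{\pi_a}\log\abs{x}^2-\E_{\pi_b}\log\abs{y}^2$ (finite by (H1)), and since $\max_{0\le j\le M}W_j\le\log\sum_{j=0}^Me^{W_j}\le\log(M+1)+\max_{0\le j\le M}W_j$, the strong law (when $\mu>0$) resp.\ finiteness of $\sup_{j\ge0}W_j$ (when $\mu\le0$) give $\tfrac1M\log\sum_je^{W_j}\to\max(\mu,0)$; combined with the law of large numbers for $\tfrac1M\sum_m\log\abs{b_m}^2$ this yields, a.s.\ and — the square-integrability in (H1) bounding the relevant suprema of the walk, hence ensuring uniform integrability — also in $L^1$,
\[
\frac1M\log\det(\Mat{H}_M\Mat{H}_M^\dagger)\xrightarrow[M\to\infty]{}\E_{\pi_b}\log\abs{y}^2+\max(\mu,0)=2\max\bigl(\E_{\pi_a}\log\abs{x},\E_{\pi_b}\log\abs{x}\bigr)=:L^\star .
\]
Since $\tfrac1P\Mat{I}_M+\Mat{H}_M\Mat{H}_M^\dagger\succeq\Mat{H}_M\Mat{H}_M^\dagger\succ0$ (a.s., by (H2)), this already gives $C(P)\ge\log P+L^\star$ for every $P$, so $\Sinf=1$ follows and only a matching upper bound is left. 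With $\lambda_i(M)$ the eigenvalues of $\Mat{H}_M\Mat{H}_M^\dagger$,
\[
0\le C(P)-\log P-L^\star=\lim_{M\to\infty}\frac1M\,\E\sum_{i=1}^{M}\log\Bigl(1+\frac{1}{P\,\lambda_i(M)}\Bigr),
\]
and one must show the right-hand side vanishes as $P\to\infty$. Splitting the sum at $\delta=\delta(P)$ with $P\delta(P)\to\infty$ and $\delta(P)\to0$, the part over $\lambda_i>\delta$ is $\le\log(1+1/(P\delta))\to0$, so everything reduces to showing that the functions $1$ and $\log^-\lambda$ are uniformly integrable with respect to the averaged empirical spectral distributions $\E\bigl(\tfrac1M\sum_i\delta_{\lambda_i(M)}\bigr)$, i.e.\ that $\sup_M\tfrac1M\E\sum_{i:\lambda_i(M)\le\ep}\bigl(1+\log\tfrac1{\lambda_i(M)}\bigr)\to0$ as $\ep\to0$.

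This last step is the heart of the matter and the reason the hypotheses (H2) and [(H3) or (H3')] appear: it is a Wegner-type estimate, controlling how many eigenvalues of the random bidiagonal matrix $\Mat{H}_M\Mat{H}_M^\dagger$ — equivalently, of the associated ergodic Jacobi operator — can accumulate near the bottom edge $0$ of the spectrum, and with what logarithmic weight. Absolute continuity of $\pi_a,\pi_b$ rules out atoms of the integrated density of states, while (H3)/(H3') — positivity of the density of $\abs{x}^2$ on a half-line, respectively mutual absolute continuity with Lebesgue measure on an interval — supply the quantitative near-edge regularity that makes the truncated $\log^-$-mass small uniformly in $M$. A convenient packaging is via the $2\times2$ transfer matrices attached to the recursion above: the Lyapunov exponent $\gamma(z)$ of their products at spectral parameter $z$ is linked to the density of states by a Thouless-type formula, and what is needed is its continuity as $z\uparrow0$; the regularity hypotheses are tailored exactly to this, and $\Linf=-2\max(\E_{\pi_a}\log_2\abs{x},\E_{\pi_b}\log_2\abs{x})$ then reads off from the expansion. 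A secondary, routine point is the interchange of $M\to\infty$ and $P\to\infty$, which is legitimate because $\tfrac1M\E\log\det(\tfrac1P\Mat{I}_M+\Mat{H}_M\Mat{H}_M^\dagger)$ is monotone in $P$ and, via (H1), uniformly integrable; I expect this edge estimate, rather than any of the earlier steps, to be where the real work lies.
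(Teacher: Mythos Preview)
Your arguments for (a), (b), and the lower half of (c) are correct and take genuinely different --- in several places more elementary --- routes than the paper. For (a) the paper establishes almost-sure convergence of $\tfrac1M\log\det\Mat{G}_M$ via Furstenberg--Kesten theory for products of random $2\times 2$ transfer matrices and then passes to expectations by uniform integrability; your Fischer/Fekete subadditivity gets the convergence of $C_M(P)=\E[\cdot]$ in one stroke (at the price of not yielding a.s.\ convergence, which the paper needs elsewhere). For (b) the paper derives the lower bound through a chain-rule mutual-information decomposition, whereas your Cholesky-ratio induction $r_m\ge 1+\rho\abs{b_m}^2$ is a clean pointwise inequality. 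For the lower bound in (c), the Cauchy--Binet identity $\det(\Mat{H}_M\Mat{H}_M^\dagger)=\sum_{k=1}^{M+1}\prod_{m<k}\abs{a_m}^2\prod_{m\ge k}\abs{b_m}^2$ together with the random-walk evaluation of its exponential rate is a neat shortcut the paper does not exploit.

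There is, however, a genuine gap: the upper half of (c) is not proved. You correctly reduce it to a uniform-in-$M$ edge estimate $\sup_M\int_{\{t\le\epsilon\}}(1+\log^-t)\,d\nu_M(t)\to 0$, and you rightly observe that this is (via a Thouless relation) continuity of the top Lyapunov exponent $\gamma(\lambda)$ of the transfer cocycle as $\lambda\uparrow 0$ --- but that continuity is precisely the content of the theorem, and you have only named it. The paper does \emph{not} go through a Wegner estimate or the Thouless formula. It rewrites the transfer recursion as a $(0,1)$-valued Markov chain
\[
e_n=\frac{\delta+\abs{a_{n-1}}^2 e_{n-1}}{\delta+\abs{b_{n-1}}^2+\abs{a_{n-1}}^2 e_{n-1}},\qquad \delta=1/P,
\]
proves under (H2) and (H3)/(H3') that this chain is positive Harris with unique invariant law $\mu_\delta$, and then --- the substantial step --- shows $\mu_\delta([\epsilon,1])\to 0$ as $\delta\to 0$ for each fixed $\epsilon>0$, by dominating $z_n=\log e_n$ with an auxiliary reflected walk $w_n\ge z_n$ on $[\log\epsilon',0]$ and controlling its return time to $0$ via a martingale-variance argument; (H3)/(H3') enter to guarantee irreducibility of the chain, and the assumption $\E_{\pi_a}\log\abs{x}\le\E_{\pi_b}\log\abs{x}$ to give the walk nonpositive drift. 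The Wegner route you sketch is not straightforward here: the Jacobi matrix has \emph{correlated} diagonal ($\abs{a_m}^2+\abs{b_m}^2$) and off-diagonal ($a_{m+1}^\dagger b_m$) disorder, so standard single-site Wegner bounds do not apply directly, and (H3)/(H3') on $\abs{a}^2,\abs{b}^2$ separately do not obviously yield the density regularity those bounds require. Until one of these analyses is actually carried out, part (c) is only half done.
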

Note that point c) shows that the lower bound of point b) is tight
in the high-SNR regime.
\begin{proof}
The proof of points a) and c) follows from Theorem \ref{principal} of Appendix
\ref{sec:DefinitionsAndMainResult}, where we prove that the
variable $\mathcal{C}_M(P)\triangleq 1/M\log\det\Mat{G}_M$
converges almost surely. Note however that
\begin{equation}
\label{hadamard}
0\leq\frac{1}{M}\log\det\Mat{G}_M\leq\frac{1}{M}\sum_{m=1}^M\log\left(1+
\rho(\abs{a_m}^2+\abs{b_m}^2)\right),
\end{equation}
and the second inequality is due to Hadamard's inequality for
semi-positive definite (SPD) hermitian matrices.
With (H1), it follows that $\mathcal{C}_M(P)$ is uniformly
integrable, and hence the almost sure convergence implies
convergence in expectation. Recalling that $C_M(P)=\E\
\mathcal{C}_M(P)$ completes the proof of point a) and c).

Let us show point b) using the tools of \cite{SOW}. We first show the lower bound.  We consider $\vct{n}$, $\vct{x}$ and $\vct{y}$ as in (\ref{eq: Cahnnel definition}).
\begin{align*}
\mathcal{C}_M(P)&=\frac{1}{M}I\left(\vct{x};\vct{y}|(a_i)_{1\leq i\leq M},(b_i)_{1\leq i\leq M}\right)\\
&=\frac{1}{M}\sum_{j=1}^M I\left(x_j;\vct{y}|(x_i)_{1\leq i<j},(a_i)_{1\leq i\leq M},(b_i)_{1\leq i\leq M}\right)\\
&\geq\frac{1}{M}\sum_{j=1}^M I\left(x_j;y_{j-1}|(x_i)_{1\leq i<j},(a_i)_{1\leq i\leq M},(b_i)_{1\leq i\leq M}\right)\\
&=\frac{1}{M}\sum_{j=1}^M I(x_j;b_{j-1}x_j+n_{j-1}|b_{j-1}),
\end{align*}
which is the per-cell sum-rate capacity of a single user fading
channel. Therefore, the lower bound is
\cite{Biglieri-Proakis-Shamai-IEEE-T-IT-98}
$\E_{\pi_b}\log(1+P\abs{y}^2)$. As argued in the proof of Theorem
\ref{principal} in Appendix \ref{sec:DefinitionsAndMainResult}, we
can exchange the role of $\pi_a$ and $\pi_b$, thereby getting the
claimed lower bound.
Finally, the upper bound of b) follows immediately from Hadamard's
inequality for SPD hermitian matrices.
\end{proof}

In the proof of Theorem \ref{principal} (intra-cell TDMA scheme),
we use ideas from the theory of product of random matrices.
Note that $\mathcal{C}_M(P)=1/M \sum_{m=1}^M\log(1+P\lambda_m)$
where $\{\lambda_m\}_{m=1}^M$ are the eigenvalues of
$\Mat{H}_M\Mat{H}_M^\dagger$, and the analysis of capacity hinges
upon the study of spectral properties of
$\Mat{H}_M\Mat{H}_M^\dagger$. The main idea is to link the
spectral properties of the latter matrix with the exponential
growth of the elements of its eigenvectors. Since
$\Mat{H}_M\Mat{H}_M^\dagger$ is a \emph{Hermitian Jacobi} matrix,
hence tridiagonal, its eigenvectors can be considered as sequences
with second order linear recurrence. Therefore, the problem boils
down to the study of the exponential growth of products of two by
two matrices. This is closely related to the evaluation of the top
Lyapunov exponent of the product; The explicit link between
$\mathcal{C}_M(P)$ and the
top Lyapunov exponent is the Thouless
formula (see \cite{enorme} or \cite{groslivre}), a version of
which we prove in Appendix \ref{sec:ProductOfRandomMatrices}. We
emphasize however that we do not use the Thouless formula or
Lyapunov exponents explicitly in the proof of Theorem
\ref{principal}.

Like in the result of Narula \cite{Narula-1997} described below in
Section \ref{sec: Background}, our approach uses the analysis of a
certain Markov Chain. Unlike \cite{Narula-1997}, we are not able
to explicitly evaluate the invariant measure of this chain.
Instead, we use the theory of Harris chains to both prove
convergence and continuity results for the chain. The appropriate
definitions are introduced in the course of proving Theorem
\ref{principal}.

We remark that Theorem \ref{principaltext} continues to hold in a
real setup, that is if instead of (H2), we assume
\begin{enumerate}
  \item[(H2')] $\pi_a$ and $\pi_b$ are supported on $\R$ and are absolutely continuous with respect to Lebesgue measure on $\R$.
\end{enumerate}
Since the argument is identical, we do not discuss this case
further. It is also noted that unlike the non-fading case, where
intra-cell TDMA scheme is optimal (see \cite{Wyner-94}), it is
proved to be suboptimal for $K>1$ in the presence of fading
\cite{Somekh-Shamai-2000}, yet TDMA it is one of the
most common access protocols in cellular systems.

Turning to the WB scheme (which is the capacity achieving scheme
\cite{Somekh-Shamai-2000}), where all the bandwidth is used for
coding, and all $K$ users are transmitting simultaneously with
average power $\rho$ (and total cell average power $P=K \rho$), we
have the following less explicit high-SNR characterization.
\begin{thm}
\label{principalKtext}[WB scheme $K>1$, high-SNR characterization]
Assume (H1), (H2) and (H4), and $K> 1$.
\begin{enumerate}
    \item For every $P>0$, $C_M(P)$ converges
        as $M$ goes to infinity. We call the limit $C(P)$.
    \item We get the following bounds on $C(P)$,
    \[\max(\E\log(1+P\abs{\vct{a}}^2/K), \E\log(1+P\abs{\vct{b}}^2/K))\leq C(P)\leq\E\log(1+P(\abs{\vct{a}}^2+\abs{\vct{b}}^2)/K),\]
    where the expectation is taken in the following way:\
    the random variables $\vct{a}$ and $\vct{b}$ are independent, and
    $\vct{a}$ (resp. $\vct{b}$) is a complex $K$-vector whose coefficients
    are independent and distributed according to $\pi_a$ (resp. $\pi_b$).
    \item As $P$ goes to infinity,
    \begin{equation}\label{eq: Capacity WB hi SNR}
    C(P)=\log P+\E\log\left(\frac{e+\abs{\vct{b}}^2}{K}\right)+o(1)\ ,
\end{equation}
    where the expectation is taken in the following way:\
    the random variables $e$ and $\vct{b}$ are independent, and
    $\vct{b}$ is a complex $K$-vector whose coefficients
    are independent and distributed according to $\pi_b$.
    The law of $e$ is $m_0$, which is the unique invariant
    probability of the Markov chain defined by
    \begin{equation}
        \label{eq-130707a}e_{n+1}=\abs{\vct{a}_n}^2\left(\frac{e_n+
    \abs{\vct{b}_{n-1}}^2\sin^2(\vct{a}_n,\vct{b}_{n-1})}{e_n+\abs{\vct{b}_{n-1}}^2}\right)\
    ,
    \end{equation}
    where for any two arbitrary equal length vectors $\vct{a},\ \vct{b},$
    \begin{equation}\label{eq: sin def}
    \sin^2(\vct{a},\vct{b})\triangleq1-\frac{\abs{<\vct{a};\vct{b}>}^2}{\abs{\vct{a}}^2\abs{\vct{b}}^2}\
    .
\end{equation}
In particular, ${\cal S}_\infty=1$ and ${\cal L}_\infty=
    -\E\log_2\left(\frac{e+\abs{\vct{b}}^2}{K}\right)$.
\end{enumerate}
\end{thm}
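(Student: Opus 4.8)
\medskip
\noindent\textbf{Proof proposal.}
The plan is to run the programme of Theorem~\ref{principaltext}, tracking the extra structure produced by having $K>1$ users per cell. Write $\rho=P/K$, so that $\Mat{G}_M=\Mat{I}_M+\rho\,\Mat{H}_M\Mat{H}_M^\dagger$ is the Hermitian Jacobi matrix with the entries \eqref{eq: explicit sums - no fading}: diagonal $1+\rho(\abs{\vct{a}_m}^2+\abs{\vct{b}_m}^2)$ and off-diagonal $\rho<\vct{a}_{m+1};\vct{b}_m>$. Expanding the leading $m\times m$ principal minor $D_m$ of $\Mat{G}_M$ along its last row gives the three-term recursion $D_m=(1+\rho(\abs{\vct{a}_m}^2+\abs{\vct{b}_m}^2))D_{m-1}-\rho^2\abs{<\vct{b}_{m-1};\vct{a}_m>}^2D_{m-2}$, so that $\mathcal{C}_M(P)=\frac1M\sum_{m=1}^M\log r_m$ with $r_m=D_m/D_{m-1}$ and
\[
r_m=1+\rho(\abs{\vct{a}_m}^2+\abs{\vct{b}_m}^2)-\rho^2\abs{<\vct{b}_{m-1};\vct{a}_m>}^2/r_{m-1}.
\]
Exactly as in the product-of-$2\times2$-transfer-matrices picture used for $K=1$, the pair $(\vct{a}_m,\vct{b}_{m-1})$ driving the step $r_{m-1}\mapsto r_m$ is independent of $r_{m-1}$, so $(r_m)$ is a Markov chain with i.i.d.\ innovations. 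Setting $r_m=\rho\,s_m$ and then $e_m=s_m-\abs{\vct{b}_m}^2$, a short computation using \eqref{eq: sin def} and $\abs{<\vct{b}_{m-1};\vct{a}_m>}^2=\abs{\vct{a}_m}^2\abs{\vct{b}_{m-1}}^2(1-\sin^2(\vct{a}_m,\vct{b}_{m-1}))$ turns the recursion into
\[
e_m=\frac{K}{P}+\abs{\vct{a}_m}^2\left(\frac{e_{m-1}+\abs{\vct{b}_{m-1}}^2\sin^2(\vct{a}_m,\vct{b}_{m-1})}{e_{m-1}+\abs{\vct{b}_{m-1}}^2}\right),
\]
which is the chain \eqref{eq-130707a} plus the vanishing correction $K/P$; since $e_m$ is a function of $(\vct{a}_i)_{i\le m}$ and $(\vct{b}_i)_{i\le m-1}$ only, it is independent of $\vct{b}_m$, and $\mathcal{C}_M(P)=\log(P/K)+\frac1M\sum_{m=1}^M\log(e_m+\abs{\vct{b}_m}^2)$.

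\emph{Part a.} As for Theorem~\ref{principaltext}, it suffices to show that $\mathcal{C}_M(P)$ converges a.s.\ to a deterministic constant $C(P)$: the Hadamard bound \eqref{hadamard} (with $\abs{\vct{a}_m}^2,\abs{\vct{b}_m}^2$ in place of the scalars) together with (H1) yields uniform integrability, so a.s.\ convergence upgrades to $C_M(P)=\E\,\mathcal{C}_M(P)\to C(P)$. The a.s.\ convergence will come from the $K>1$ analogue of the product-of-random-matrices result of Appendix~\ref{sec:ProductOfRandomMatrices}: the eigenvector recursion of $\Mat{G}_M$ is a product of $2\times2$ matrices whose normalized log-norm converges a.s.\ by Kingman's subadditive ergodic theorem (integrability from (H1)), and the value of the limit, together with its continuity in $P$, is identified through the Harris-chain analysis of $(e_m)$. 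This is where (H2) and (H4) enter: (H2) provides the density smoothing needed for a Doeblin minorization on a compact set, and (H4) --- Lebesgue measure outside some ball being absolutely continuous with respect to $\pi_a$ and $\pi_b$ --- provides the strong irreducibility (the fading moduli have positive density on $[R,\infty)$), playing the role that (H3)/(H3') play for $K=1$ and forcing the chain back into a compact petite set. One then gets a unique invariant probability $m_0^{(P)}$ with $C(P)=\log(P/K)+\E_{e\sim m_0^{(P)},\,\vct{b}}\log(e+\abs{\vct{b}}^2)$, and $C(P)$ continuous in $P$.

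\emph{Parts b and c.} The lower bound of b) is obtained exactly as in Theorem~\ref{principaltext}: conditioning on the fading, telescoping over cells, then keeping only the observation $y_{j-1}$ and cancelling the known interference $\vct{a}_{j-1}\vct{x}_{j-1}$,
\[
\mathcal{C}_M(P)=\frac1M\sum_j I\!\left(\vct{x}_j;\vct{y}\mid(\vct{x}_i)_{i<j},(\vct{a}_i,\vct{b}_i)_i\right)\ge\frac1M\sum_j I\!\left(\vct{x}_j;\vct{b}_{j-1}\vct{x}_j+n_{j-1}\mid\vct{b}_{j-1}\right),
\]
and each summand is the ergodic sum-rate of a $K$-user single-antenna MAC with per-user power $\rho$ and gains $\vct{b}_{j-1}$, namely $\E\log(1+\rho\abs{\vct{b}}^2)=\E\log(1+P\abs{\vct{b}}^2/K)$; exchanging the roles of $\pi_a$ and $\pi_b$ (as in the proof of Theorem~\ref{principal}) gives the stated maximum. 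The upper bound of b) is Hadamard's inequality for SPD Hermitian matrices, $\det\Mat{G}_M\le\prod_m(1+\frac{P}{K}(\abs{\vct{a}_m}^2+\abs{\vct{b}_m}^2))$, so $C_M(P)\le\E\log(1+\frac{P}{K}(\abs{\vct{a}}^2+\abs{\vct{b}}^2))$ for every $M$. For c), I let $P\to\infty$ in $C(P)=\log(P/K)+\E_{m_0^{(P)},\vct{b}}\log(e+\abs{\vct{b}}^2)$: the kernels of the $(e_m)$-chains converge Feller-continuously (uniformly on compacts, the $K/P$ term vanishing) to that of \eqref{eq-130707a}, the family $\{m_0^{(P)}\}$ is tight because $e_m\le K/P+\abs{\vct{a}_m}^2$, so every weak limit point is invariant for \eqref{eq-130707a} and hence equals $m_0$ by uniqueness; since $\log(e+\abs{\vct{b}}^2)$ is sandwiched between the uniformly integrable $\log\abs{\vct{b}}^2$ and $\log(K/P+\abs{\vct{a}}^2+\abs{\vct{b}}^2)$, the expectations converge, giving $C(P)=\log P+\E\log((e+\abs{\vct{b}}^2)/K)+o(1)$, whence $\Sinf=1$ and $\Linf=-\E\log_2((e+\abs{\vct{b}}^2)/K)$.

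\emph{Main obstacle.} The crux is the Markov-chain analysis behind a) and c): for $K>1$ the weights $<\vct{b}_{m-1};\vct{a}_m>$ are random in both modulus and phase and can be atypically small (nearly orthogonal consecutive fading vectors), so one must prove that \eqref{eq-130707a}, and its $P$-dependent perturbations, is positive Harris recurrent with a unique invariant law, and that this law depends continuously on $P$ up to and including $P=\infty$. This is exactly the purpose of the extra hypothesis (H4), which is why Theorem~\ref{principalKtext} requires it while Theorem~\ref{principaltext} does not; by contrast the tightness and uniform-integrability inputs are essentially free, coming from (H1) and the deterministic sandwich $\abs{\vct{b}_m}^2\le e_m+\abs{\vct{b}_m}^2\le K/P+\abs{\vct{a}_m}^2+\abs{\vct{b}_m}^2$.
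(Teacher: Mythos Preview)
Your proposal is essentially the paper's own argument: you reduce $\mathcal{C}_M(P)$ to an ergodic average over the same Markov chain $(e_m)$ with recursion \eqref{eq-130707a} plus the $K/P$ correction, invoke Harris-chain theory under (H2) and (H4) to get a unique invariant law $m_\delta$ and an ergodic theorem, then pass to the limit $\delta=K/P\to0$ by weak continuity; parts a) and c) come from this together with Hadamard-based uniform integrability, and part b) is identical to the $K=1$ argument. The only difference is cosmetic: you reach $(e_m)$ via the Cholesky/determinant ratios $r_m=D_m/D_{m-1}$ (Narula's route), whereas the paper uses the characteristic-polynomial sequence $x_n$ and $c_n=x_n/x_{n-1}$; both yield the same $e_m$.

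One slip to fix: your sentence ``the pair $(\vct{a}_m,\vct{b}_{m-1})$ driving the step $r_{m-1}\mapsto r_m$ is independent of $r_{m-1}$, so $(r_m)$ is a Markov chain'' is false as stated, since $r_{m-1}=\rho(e_{m-1}+\abs{\vct{b}_{m-1}}^2)$ already contains $\vct{b}_{m-1}$ and the step also uses $\vct{b}_m$. This is harmless because you immediately pass to $e_m$, which \emph{is} a Markov chain with i.i.d.\ innovations $(\vct{a}_m,\vct{b}_{m-1})$ (independent of $e_{m-1}$), exactly as you need; the paper handles this by working with the augmented chain $(e_n,\vct{b}_{n-1})$. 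Just drop the Markov claim for $(r_m)$ and make the Markov property explicit at the level of $(e_m)$.
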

As with the case $K=1$, point a) and c) of Theorem
\ref{principalKtext} follow from the almost sure convergence
stated in Theorem \ref{principalK} of Appendix
\ref{sec:KUsersModel}, using (H1) and (\ref{hadamard}). As with
Theorem \ref{principal}, we do not use the Thouless formula or
Lyapunov exponents explicitly in the proof of Theorem
\ref{principalK}. The proof of point b) is the same as the proof
of Theorem \ref{principaltext}.b).
It is worth mentioning that in contrast to Theorem
\ref{principaltext}, the non-asymptotic lower bound b) is not
tight in general for large SNR. This is since it is an increasing
function of $K$ and converges to a rate of a single-user Gaussian
scalar channel, which is smaller than the asymptotic rate of
\eqref{eq: WB large K rate}.

Note that although the roles of the sequences $\{\vct{a}_n\}$ and
$\{\vct{b}_n\}$ in \eqref{eq-130707a} are not symmetric, the
expression \eqref{eq: Capacity WB hi SNR} is symmetric in $\pi_a$
and $\pi_b$, as is the case for $K=1$.

We conclude this section by noting that while Theorem
\ref{principalKtext} (WB scheme $K>1$) does not give explicit
expressions for the high-SNR power offset as Theorem
\ref{principaltext}, its proof leads immediately to easily
computable bounds. In the following, the notation is as in Theorem
\ref{principalKtext}, and we let $e_n(a)$ denote the Markov chain
(\ref{eq-130707a}), with initial condition $e_0(a)=a$.
\begin{prop}
\label{boundsK} Assume (H1), (H2) and (H4), and $K> 1$. Then,
\[\E\log\left(\frac{e_n(0)+\abs{\vct{b}}^2}{K}\right)
\leq\lim_{P\rightarrow\infty}\left[C(P)-\log P\right]
\leq\E\log\left(\frac{e_n(\infty)+\abs{\vct{b}}^2}{K}\right),\]
where the expectation is taken in the following way. $e_n(0)$
(resp. $e_n(\infty)$) and $\vct{b}$ are independent. $\vct{b}$ is
a complex $K$-vector whose coefficients are independent and
distributed according to $\pi_b$. $e_n(0)$ (resp. $e_n(\infty)$)
is the $n$-th step of the Markov chain defined by
(\ref{eq-130707a}) with initial condition $e_0(0)=0$ (resp.
$e_0(\infty)=\infty$).
\end{prop}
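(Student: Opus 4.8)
The plan is to turn the statement into a monotone sandwich: the one-step map of the chain \eqref{eq-130707a} is non-decreasing in the state variable, so running it from the two extreme initial conditions $0$ and $\infty$ brackets a stationary copy of the chain, and the desired bounds then follow by applying the increasing functional $e\mapsto\E\log\big((e+\abs{\vct b}^2)/K\big)$ and taking expectations.

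First I would record that \eqref{eq-130707a} is a genuine Markov chain driven by i.i.d.\ noise: the transition $e_n\mapsto e_{n+1}$ uses only $\zeta_n:=(\vct a_n,\vct b_{n-1})$, and since each $\vct a_j$ and each $\vct b_j$ enters exactly one transition, the sequence $(\zeta_n)$ is i.i.d. Writing $e_{n+1}=F_{\zeta_n}(e_n)$ with
\[
F_{(\vct a,\vct b)}(e)=\abs{\vct a}^2\,\frac{e+\abs{\vct b}^2\sin^2(\vct a,\vct b)}{e+\abs{\vct b}^2},\qquad F_{(\vct a,\vct b)}(\infty):=\abs{\vct a}^2,
\]
a one-line computation gives $F_{(\vct a,\vct b)}'(e)=\abs{\vct a}^2\abs{\vct b}^2\big(1-\sin^2(\vct a,\vct b)\big)/(e+\abs{\vct b}^2)^2\ge0$, so each $F_\zeta$ is non-decreasing on $[0,\infty]$; moreover $F_\zeta(e)\le\abs{\vct a}^2$ for every $e$, hence $e_n(\infty)\le\abs{\vct a_{n-1}}^2$ for all $n\ge1$ (and $m_0$ is supported on $[0,\infty)$).

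Next I would build a coupling. Fix $n\ge1$ (for $n=0$ the upper bound is $+\infty$ and the lower bound is immediate since $m_0$ lives on $[0,\infty)$ and $h$ below is non-decreasing). On one probability space, draw $(\zeta_0,\dots,\zeta_{n-1})$ and, independently, $e^*_0\sim m_0$, and run all three chains $e_\cdot(0)$, $e^*_\cdot$, $e_\cdot(\infty)$ with these same noises. Because $F_{\zeta_{n-1}}\circ\cdots\circ F_{\zeta_0}$ is non-decreasing and $0\le e^*_0\le\infty$, we get $e_n(0)\le e^*_n\le e_n(\infty)$ almost surely, while invariance of $m_0$ gives $e^*_n\sim m_0$. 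Set $h(e):=\E\log\big((e+\abs{\vct b}^2)/K\big)$ with $\vct b$ a fresh independent $K$-vector with i.i.d.\ $\pi_b$-coordinates (concretely $\vct b=\vct b_n$, which is independent of $e_n(0),e^*_n,e_n(\infty)$, since these involve only $\vct b$-indices $\le n-2$). As $h$ is non-decreasing, $h(e_n(0))\le h(e^*_n)\le h(e_n(\infty))$ a.s.; taking expectations gives
\[
\E\log\frac{e_n(0)+\abs{\vct b}^2}{K}\ \le\ \E\log\frac{e^*_n+\abs{\vct b}^2}{K}\ \le\ \E\log\frac{e_n(\infty)+\abs{\vct b}^2}{K},
\]
and by Theorem \ref{principalKtext}.c) the middle quantity equals $\lim_{P\to\infty}[C(P)-\log P]$, which is the claim. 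Integrability of the outer terms follows from (H1): $h(e_n(0))\ge h(0)=\E\log(\abs{\vct b}^2/K)>-\infty$ because $\E_{\pi_b}\abs{\log\abs x}<\infty$, and $h(e_n(\infty))\le h(\abs{\vct a_{n-1}}^2)$ with $\E h(\abs{\vct a}^2)=\E\log\big((\abs{\vct a}^2+\abs{\vct b}^2)/K\big)<\infty$.

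The argument is short; its only real content is the monotonicity of $F_\zeta$ in the state variable, which reduces everything to the sandwich $e_n(0)\le e^*_n\le e_n(\infty)$ with $e^*_n\sim m_0$. The point that needs a little care is the bookkeeping with the extra independent copy of $\vct b$ appearing in the statement, together with checking integrability at the $e=\infty$ endpoint (handled via $e_n(\infty)\le\abs{\vct a_{n-1}}^2$); neither is difficult. As a byproduct of the same coupling I would also note that $e_n(0)$ is stochastically non-decreasing in $n$ and $e_n(\infty)$ stochastically non-increasing, so the two bounds tighten as $n$ grows, matching the remark in the introduction.
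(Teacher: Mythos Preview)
Your argument is correct and is precisely the approach the paper takes: the paper's proof (the paragraph immediately following the statement) consists of the observation that the one-step map \eqref{eq-130707a} is monotone in $e_n$, whence the stationary law $m_0$ is stochastically sandwiched between the laws of $e_n(0)$ and $e_n(\infty)$, and the bounds follow by applying the increasing functional and Theorem \ref{principalKtext}.c). You have simply made explicit the coupling, the independence bookkeeping for the extra $\vct b$, and the integrability checks that the paper leaves implicit; your remark that the bounds tighten in $n$ also matches the paper's closing observation.
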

Indeed, since the expression (\ref{eq-130707a}) for $e_{n+1}$ is
monotone increasing in $e_n$, the law of $e$ in Theorem
\ref{principalKtext} is stochastically dominated below by the law
of $e_n$ with intial condition $0$, and stochastically dominated
above by the law of $e_n$ with initial condition $\infty$. That
same monotonicity also shows that the sequences of laws of
$e_n(0)$ (resp., $e_n(\infty)$) are monotone increasing (resp.,
decreasing) with respect to stochastic order.

As a direct consequence of Proposition \ref{boundsK} with $n=1$
and \eqref{eq: Linf Definition}, we get the following bounds on
the high-SNR power offset
\begin{align}
\label{bound1}
-\E\log_2\left(\frac{\abs{\vct{a}}^2+\abs{\vct{b}}^2}{K}\right)
\leq\Linf\leq-\E\log_2\left(\frac{\abs{\vct{a}}^2\sin^2(\vct{a},\vct{b})+\abs{\vct{b}'}^2}{K}\right),
\end{align}
where the expectation is taken in the following way: $\vct{a}$,
$\vct{b}$ and $\vct{b}'$ are independent, and $\vct{a}$ (resp.
$\vct{b}$,\ $\vct{b}'$) is a complex $K$-vector whose coefficients
are independent and distributed according to $\pi_a$ (resp.
$\pi_b$).
Note that for $K$ going to infinity, if we assume $\pi_a=\pi_b$
and zero mean, then $\sin^2(\vct{a},\vct{b})$ converges to 1,
therefore the ratio between the upper- and lower-bound of (19),
converges to 1, which also agrees with the asymptotic result of
\eqref{eq: WB extreme SNR general fading}.

\subsubsection*{Numerical Results}\label{sec: Numerical results}

In Figures \ref{fig: Linf order K=2} and \ref{fig: Linf order
K=10} we present the high-SNR power offset bounds of Proposition
\ref{boundsK} in the special case of Rayleigh fading (real and
imaginary parts are independent Gaussian random variables with
zero mean and variance $1/\sqrt{2}$), for $K=2$ and $K=10$ users
per-cell respectively. The curves are produced by Monte Carlo
simulation with $10^5$ samples. The figures
include also the
lower bound of
\cite{Somekh-Zaidel-Shamai-IT-2005}, see
 \eqref{eq: WB extreme SNR Rayleigh
fading}, and the asymptotic results (and lower bound) for large
number of users per-cell $\Linf = -1$ (achieved by taking $K$ to
infinity in \eqref{eq: WB extreme SNR Rayleigh fading}). Examining
the figures it is observed that the new bounds are getting tighter
with
their order $n$ and that the new lower bound is  tighter
than \eqref{eq: WB extreme SNR Rayleigh fading}
already for $n=2$. Moreover, fixing the order $n$, the new bounds
are getting tighter with the number of users per-cell $K$. This
observation is also evident from Fig. \ref{fig: Linf users n=2},
where the bounds are plotted for a fixed order $n=2$ versus the
number of users per-cell $K$. Finally,
since the upper
bound of Fig. \ref{fig: Linf order K=2}
is negative,
    we conclude that the presence of Rayleigh fading is beneficial over
    non-fading channels in the high-SNR region already for $K=2$.
 (See
  \cite{Somekh-Zaidel-Shamai-IT-2005}
 for a similar conclusion in the low-SNR region.)

\section{Background, Previous Results and Bounds}
\label{sec: Background}

In this section we briefly summarize previous work on the
``soft-handoff" uplink cellular model introduced in
\cite{Somekh-Zaidel-Shamai-CWIT-05}\cite{Somekh-Zaidel-Shamai-IT-2005}.
For conciseness,
 we restrict the discussion to the case
where $\pi_a=\pi_b$. Most of the results in the sequel can be
extended to include the general case where $\pi_a\neq\pi_b$.

Starting with non-fading channels (i.e., when $\pi_a$ and $\pi_b$
are singletons at 1), the per-cell sum-rate capacity of the uplink
channel is given for $M\rightarrow\infty$ by
\cite{Somekh-Zaidel-Shamai-IT-2005}
\begin{equation}\label{eq: Rate no
fading} R_{\mathrm{nf}}=\log \left( \frac{1+2
P+\sqrt{1+4P}}{2}\right)\ .
\end{equation}
This rate is achieved by any symmetric intra-cell protocol with
average transmit power of $P$ (e.g. intra-cell TDMA, and WB
protocols). It is noted that the same result holds also for phase
fading processes
\cite{Jing-Tse-Hou-Soriaga-Smee-Padovani-ISIT-2007}.

The extreme SNR characterization of \eqref{eq: Rate no fading} is
summarized for the non-fading setup by
\begin{equation}\label{eq: DL SH no fading extreme SNR}
S_{0}=\frac{4}{3}\,, \quad \frac{E_{b}}{N_{0}}_{\min }=\frac{\log
2}{2}\,,\quad
S_{\infty }=1\,,\quad
\mathcal{L}_{\infty }=0\,.
\end{equation}

Returning to the flat fading setup, the channel coefficients are
taken as i.i.d.\  random variables, denoting by
\begin{equation}\label{eq: General Fading - Moments Notation}
\begin{aligned}
   m_1 &\triangleq \E(a_{m,k})=\E(b_{m,k}) \quad ;
  \quad
   m_2 \triangleq \E(\abs{a_{m,k}}^2)=\E(\abs{b_{m,k}}^2) \, \quad
   \\
  m_4 &\triangleq \E(\abs{a_{m,k}}^4)=\E(\abs{b_{m,k}}^4)  \quad ; \quad \mathcal{K}\triangleq
  \frac{m_4}{m_2^2}
\end{aligned} \quad , \ \forall\ m,k
\end{equation}
the mean, second power moment, fourth power moment and the
{kurtosis} of an individual fading coefficient.

The per-cell sum-rate capacity of the WB scheme with fixed $P$ and
increasing number of users and cells $M,K\rightarrow\infty$, is
given by \cite{Somekh-Zaidel-Shamai-IT-2005}\footnote{Here,
the number of users $K$ is taken to infinity and then
the number of cells $M$ is taken to infinity.}
\begin{equation}\label{eq: WB large K rate}
    R_{\mathrm{wb-f}}=
    \log\left(\frac{1+2P m_2 + \sqrt{1+4P m_2+4P^2(m_2^2-\abs{m_1}^4})}{2} \right)
    \ .
\end{equation}
 The rate is maximized for a zero mean fading distribution
and is given by
\begin{equation}\label{eq: WB large K rate zero mean}
R_{\mathrm{wb-f}}=\log (1+2m_2P )\ .
\end{equation}
Comparing \eqref{eq: Rate no fading} and \eqref{eq: WB large K
rate zero mean} (with $m_2=1$),
it follows that the presence of
fading is beneficial in case the number of users is large.
We note that \eqref{eq: WB large K rate} is also shown in
\cite{Somekh-Zaidel-Shamai-IT-2005} to upper bound the respective
rate for any finite number of users $K$.

Returning to the intra-cell TDMA ($K=1$), for which standard
random matrix theory is not suitable (see Sec. \ref{sec: Analysis
Difficulty}), the  powerful moment bounding technique employed in
\cite{Somekh-Shamai-2000} for the Wyner model, can be utilized to
obtain lower and upper bounds on the per-cell sum-rate.

An alternative approach which replaces the role of the singular
values with the diagonal elements of the \emph{Cholesky}
decomposition of the the matrix $\Mat{G}_M$, was presented by
Narula \cite{Narula-1997} for a two diagonal nonzero channel
matrix $\Mat{H}_M$ whose entries are i.i.d. zero-mean complex
Gaussian (Rayleigh fading). Originally, Narula had studied the
capacity of a time varying
two taps inter-symbol-interference
(ISI) channel, where the channel coefficients are i.i.d. zero-mean
complex Gaussian. With the above assumptions regarding the ISI
channel coefficients it is easy to verify that the capacity of
this model is equal to the per-cell sum-rate capacity of an uplink
intra-cell TDMA scheme employed in the ``soft-handoff'' model.

Following \cite{Narula-1997}, we use the
\emph{Cholesky} decomposition applied to the covariance matrix of
the uplink intra-cell TDMA scheme output vector $\Mat{G}_M
=\boldsymbol{L}_M\boldsymbol{D}_M\boldsymbol{U}_M$, where $\boldsymbol{L}_M$ (resp. $\boldsymbol{U}_M$) is a lower triangular (resp. upper triangular) matrix with 1 on the diagonal. The diagonal entries of $\Mat{G}_M$ are given
(with $K=1$) by
\begin{equation}\label{eq: RM Narula Cholesky}
d_{m}=1+P\left\vert a_{m}\right\vert ^{2}+P\left\vert
b_{m}\right\vert ^{2}\left( 1-P\frac{\left\vert a_{m-1}\right\vert
^{2} }{d_{m-1}}\right) \ ,\ m=2,\ldots ,M\ ,
\end{equation}
where the initial condition of \eqref{eq: RM Narula Cholesky} is
$d_{1}=1+P\left\vert a_{1}\right\vert ^{2}+P \left\vert
b_{1}\right\vert ^{2}$.
Thus, the  diagonal entries $\{d_{m}\}$ form
a discrete-time continuous space Markov chain; Narula's main observation
was that this chain possesses a unique
ergodic
stationary distribution, given by
\begin{equation}\label{eq: RM Narula
stationary distribution} f_{d}(x)=\frac{\log
(x)e^{-\frac{x}{P}}}{\text{Ei}\left( \frac{1}{\bar{ P}}\right)
P}\quad ;\quad x\geq 1\ ,
\end{equation}
where $\text{Ei}(x)=\int_{x}^{\infty }\frac{\exp (-t)}{t}dt$ is
the exponential integral function. Further, as is proved in
\cite{Narula-1997}, the strong
law of large numbers (SLLN) holds for the sequence $\{\log {
d_{m}}\}$ as $M\rightarrow \infty$.
Hence, the average per-cell sum-rate capacity of the
intra-cell TDMA scheme ($K=1$) can be expressed as
\begin{equation}\label{eq:
RM TDMA uplink capacity}
\begin{aligned}
R_{\mathrm{tdma-f}}&=\lim_{M\rightarrow\infty}\E\left(\frac{1}{M}\log\det
\Mat{G}_M\right)\\
&=\lim_{M\rightarrow\infty}\E\left(\frac{1}{M}\log\det\left(\boldsymbol{L}_M
\boldsymbol{D}_M\boldsymbol{U}_M\right)\right)\\
&=\lim_{M\rightarrow\infty}\E\left(\frac{1}{M}\sum_{m=0}^M \log
d_m\right) = \E_{\pi_d}\left(\log d\right)\ ,
\end{aligned}
\end{equation}
where the last expectation is taken with respect to $f_{d}(x)$, as
defined in \eqref{eq: RM Narula stationary distribution}. In particular,
\begin{equation}  \label{eq: RM TDMA uplink capacity explicit}
R_{\mathrm{tdma-f}}=\int_1^\infty\frac{(\log (x))^2
e^{-\frac{x}{P}} }{\text{Ei}\left( \frac{1}{P}\right) P}dx\ .
\end{equation}
Narula's approach is based on an explicit calculation of
the invariant distribution $f_d$, and is thus
tied to Rayleigh fading.
Modifications of key parameters
(such as the entries' PDF, and the number of nonzero
diagonals) lead to analytically intractable expressions.

Another result derived by following the footsteps of
\cite{Narula-1997} is an upper bound on the per-cell sum-rate of
the WB scheme with finite $K$ and infinite number of cells
$M\rightarrow\infty$, in the presence of a general fading
distribution, given by
\begin{equation}\label{eq: WB finite K ub}
 R_{\mathrm{wbk-f}}\leq \log \left(
\frac{1+2Pm_2+\sqrt{1+4Pm_2+4P^2\left(1-\frac{1}{K}\right)\left(m_2^2-\abs{m_1}^4\right)}}{2}\right)
\ .
\end{equation}
and in the special case of zero mean unit power ($m_1=0,\ m_2=1$)
fading distribution (e.g. Rayleigh fading) the bound reduces to
\begin{equation}\label{eq: WB finite K ub Rayligh fading}
R_{\mathrm{wbk-f}}\leq \log \left(
\frac{1+2P+\sqrt{(1+2P)^{2}-(4P^{2}/K)}}{2}\right) \ .
\end{equation}
This result which is proved in \cite{Narula-1997} for $K=1$
(intra-cell TDMA protocol) and expanded to an arbitrary $K$ in
\cite{Lifang-Goldsmith-Globecom06}, is derived by noting that the
average of the determinant of the received vector covariance
matrix $\Mat{G}_{M}$ can be recursively expressed by
\begin{equation}\label{eq: Narula diff eq}
    \E(\det \Mat{G}_m)  = A\ \E(\det \Mat{G}_{m-1})-B\ \E(\det\Mat{G}_{m-2})\
    ;\ m=3,\ldots,M\ ,
\end{equation}
with initial conditions
\begin{equation}\label{eq: Narula diff eq initial conditions}
\E(\det \Mat{G}_{1}) = A\quad ; \quad \E(\det \Mat{G}_{2})= A^2-B\
,
\end{equation}
where
\begin{equation}\label{eq: Narula diff eq consts}
A = 1+2P m_2\quad ; \quad B =
\frac{P^{2}}{K}\left(m_2^2+(K-1)\abs{m_1}^4\right)\ .
\end{equation}
See Appendix \ref{app: Jacobi det} for more
details. The solution to \eqref{eq: Narula diff eq} is given by
\begin{equation}\label{eq: General diff eq solution1}
    \E(\det \Mat{G}_m) = \varphi\ r^m - \phi\ s^m\ ,
\end{equation}
where
\begin{equation}\label{}
r = \frac{1}{2}\left(A+\sqrt{A^2-4B}\right)\quad ;\quad s =
\frac{1}{2}\left(A-\sqrt{A^2-4B}\right)\ ,
\end{equation}
are real and positive, and $\varphi$, $\phi$ are determined by the
initial conditions \eqref{eq: Narula diff eq initial conditions}.
Finally,
\eqref{eq: WB finite K ub} is derived by the following set of
inequalities
\begin{equation}
R_{\mathrm{wbk-f}}=\lim_{M\rightarrow \infty }\frac{1}{M}\E\left(
\log \det \boldsymbol{G}_{M}\right) \leq \lim_{M\rightarrow \infty
}\frac{1}{M} \log \E\left( \det \boldsymbol{G}_{M}\right) \ =\log
r\ ,
\end{equation}
where the inequality is due to Jensen's inequality, and the
last equality follows from the fact that $r>s$, and $M\rightarrow
\infty $. In the case of $K=1$, the upper bound of \eqref{eq: WB
finite K ub Rayligh fading} coincides with the per-cell sum-rate
capacity of the non-fading setup \eqref{eq: Rate no fading}.
Thus,
 the presence of Rayleigh fading
decreases the rates of the intra-cell TDMA protocol supported by
the ``soft-handoff" model. Nevertheless, it is shown in
\cite{Somekh-Zaidel-Shamai-IT-2005} that already for $K=2$ the
presence of fading may be beneficial at least for low SNR values.
The tightness of the bound is demonstrated by noting the for
$K\rightarrow \infty$ it coincides with the asymptotic
expression of \eqref{eq: WB large K rate}.

The extreme SNR characterization of the WB rate for
$M\rightarrow\infty$ in the presence of a general fading
distribution is summarized by \cite{Somekh-Zaidel-Shamai-IT-2005}
\begin{equation}\label{eq: WB extreme SNR general fading}
\begin{array}{cc}
S_{0}=\frac{2}{\frac{\mathcal{K}}{2K}+\frac{\abs{m_1}^4}{2
m_2^2}+1}\ ; &
\frac{E_{b}}{N_{0}}_{\min }= \frac{\log 2}{2 m_2}\\
S_{\infty }\le 1\ ; &
-\log_2\left(m_2+\sqrt{\left(1-\frac{1}{K}\right)\left(m_2^2-\abs{m_1}^4\right)}\right)\le\mathcal{L}_{\infty
}\ .
\end{array}
\end{equation}
The bounds of the high-SNR parameters are tight
for $K\gg 1$. For the special case of Rayleigh fading the extreme
SNR characterization are given by
\cite{Somekh-Zaidel-Shamai-IT-2005}
\begin{equation}  \label{eq: WB extreme SNR Rayleigh fading}
\begin{array}{cc}
S_{0}=\frac{2}{1+\frac{1}{K}}\ ; &
\frac{E_{b}}{N_{0}}_{\min }= \frac{\log 2}{2}\\
S_{\infty }=1\ ; & -\log_2\left(1+\sqrt{1-\frac{1
}{K}}\right)\le\mathcal{L}_{\infty }\le \frac{\gamma}{\log 2} \ ,
\end{array}
\end{equation}
where $\gamma\approx0.5772$ is the Euler-Mascheroni constant. It
is noted that the right inequality of the high-SNR power offset is
tight for $K=1$, while the left inequality is tight for $K\gg 1$.
The beneficial effects of Rayleigh fading and increasing number of
users are evident when compared to the non-fading extreme-SNR
parameters of the respective non-fading setup \eqref{eq: DL SH no
fading extreme SNR}.

To conclude this section we emphasize that calculating exact
expressions for the high-SNR parameters of the WB protocol rate
with finite number of users per-cell and \emph{general} fading
distribution remains an open problem.

\section{Applications}\label{sec: Applications}
In this section we present several applications of the main
results presented in this work (see Section \ref{sec: Main
results}).

\paragraph{Intra-Cell TDMA and Rayleigh Fading}

Assuming that only one user is active per-cell $K=1$ and symmetric
Rayleigh fading channels (i.e. $\pi_{\abs{a}^2}$ and
$\pi_{\abs{b}^2}$ are exponential distributions with parameter 1),
the high-SNR power offset is given according to Theorem
\ref{principaltext}, by
\begin{equation}\label{eq: Linf Natan}
    \Linf =
    -\max\left(\E(\log_2\abs{a}^2),\E(\log_2\abs{b}^2)\right)=\frac{-1}{\log
    2}\int_0^\infty e^{-x}\log x\  dx=\frac{\gamma}{\log 2}
\end{equation}
where the last equality is due to  \cite[pp.
567, formula 4.331.1]{Gradshteyn-Ryzhik-6th}. Obviously this result coincides with
the high-SNR power-offset derived by applying the definition of
$\Linf$ (see \eqref{eq: Linf Definition}) directly to the exact
expression derived in \cite{Narula-1997} (see expression
\eqref{eq: RM TDMA uplink capacity explicit}).

Note that the same result holds if an attenuation factor is added
to one of the fading paths, e.g. $\tilde{b}_m = \alpha b_m$ where
$b_m\sim \mathcal{CN}(0,1)$ and $\alpha\in[0,1]$; this follows
directly from Theorem \ref{principaltext}, but not from
\cite{Narula-1997},  which requires symmetric fading paths (i.e.
$\alpha=1$).

\paragraph{Intra-Cell TDMA and General Fading Statistic}
Consider the following single user single-input single-output
(SISO) flat fading channel for an arbitrary time index
\begin{equation}\label{eq: SISO channel}
    y=ax+n\ ,
\end{equation}
where $x$ is the input signal $x\sim \mathcal{CN}(0,P)$, and $n$
is the additive circularly symmetric Gaussian noise $n\sim
\mathcal{CN}(0,1)$. In addition, $a$ is the fading coefficient
$a\sim \pi_a$ satisfying conditions (H1)\dots(H3)
and known only  to the
receiver (receiver CSI). Assuming that the fading process is also
ergodic in the time domain, the ergodic capacity of the channel is
given by \cite{Biglieri-Proakis-Shamai-IEEE-T-IT-98}
\begin{equation}\label{eq: SISO channel capacity}
    C=\E_{\pi_a}\log(1+P\abs{a}^2)\ ,
\end{equation}
where the expectation is taken over the fading distribution
$\pi_a$. Accordingly, under the mild conditions (H1)\dots(H3),
the high-SNR
regime of this channel is characterized by
\begin{equation}\label{eq: SISO hi-snr}
    \Sinf = 1\quad ;\quad \Linf = -\E_{\pi_a}\log_2\abs{a}^2\ .
\end{equation}
Using Theorem \ref{principaltext},
we can now establish the following analogy
between the multi-cell setup and the SISO channel at hand.
\begin{cor}
The high-SNR characterization of the intra-cell TDMA per-cell
sum-rate supported by the ``soft-handoff" setup with fading
distributions $\pi_a, \pi_b$ such that
$\E_{\pi_a}\log_2\abs{a}^2>\E_{\pi_b}\log_2\abs{b}^2$, coincides
with those of a scalar single-user fading channel with fading
distribution $\pi_a$.
\end{cor}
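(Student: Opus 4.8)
The plan is to derive the Corollary as a direct consequence of Theorem \ref{principaltext}.c) together with the high-SNR characterization \eqref{eq: SISO hi-snr} of the scalar fading channel \eqref{eq: SISO channel}. The starting point is the observation that both "high-SNR characterizations" in question are, by definition, a pair $(\Sinf,\Linf)$, so proving the Corollary amounts to showing that these two pairs coincide under the stated hypothesis $\E_{\pi_a}\log_2\abs{a}^2 > \E_{\pi_b}\log_2\abs{b}^2$.

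First I would invoke Theorem \ref{principaltext}.c), which (under (H1), (H2), and [(H3) or (H3')]) gives the asymptotic expansion $C(P) = \log P + 2\max(\E_{\pi_a}\log\abs{x}, \E_{\pi_b}\log\abs{x}) + o(1)$, hence $\Sinf = 1$ and $\Linf = -2\max(\E_{\pi_a}\log_2\abs{x}, \E_{\pi_b}\log_2\abs{x})$. Under the hypothesis of the Corollary the maximum is attained at $\pi_a$, and using $2\log_2\abs{a} = \log_2\abs{a}^2$ this simplifies to $\Linf = -\E_{\pi_a}\log_2\abs{a}^2$ and $\Sinf = 1$. Next I would compare with \eqref{eq: SISO hi-snr}, which states precisely that the scalar single-user fading channel \eqref{eq: SISO channel} with fading law $\pi_a$ has $\Sinf = 1$ and $\Linf = -\E_{\pi_a}\log_2\abs{a}^2$ (this in turn follows from the ergodic-capacity formula $C = \E_{\pi_a}\log(1+P\abs{a}^2)$ of \cite{Biglieri-Proakis-Shamai-IEEE-T-IT-98} by splitting $\log(1+P\abs{a}^2) = \log P + \log\abs{a}^2 + \log(1 + 1/(P\abs{a}^2))$ and noting the last term tends to $0$ in expectation under (H1)). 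The two pairs agree, which is the assertion of the Corollary.

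There is essentially one mild technical point rather than a genuine obstacle: one must make sure the hypotheses line up. The Corollary as phrased presupposes we are in the regime covered by Theorem \ref{principaltext}.c), i.e. (H1), (H2) and [(H3) or (H3')] hold for $\pi_a,\pi_b$; and the scalar-channel statement \eqref{eq: SISO hi-snr} is quoted under (H1)--(H3) for $\pi_a$. So I would simply state the Corollary under the standing assumptions (H1), (H2), [(H3) or (H3')], under which both \eqref{eq: SISO hi-snr} and Theorem \ref{principaltext}.c) are available, and remark that the strict inequality $\E_{\pi_a}\log_2\abs{a}^2 > \E_{\pi_b}\log_2\abs{b}^2$ is only used to resolve which of the two distributions realizes the maximum (a non-strict inequality would suffice, since then either choice gives the same value, but the strict version makes the identification with "the scalar channel with law $\pi_a$" unambiguous). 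That is the whole proof; no product-of-random-matrices machinery is needed here since all the work has already been done in Theorem \ref{principal}/\ref{principaltext}.
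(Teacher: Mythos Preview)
Your proposal is correct and is exactly the argument the paper intends: the Corollary is stated immediately after \eqref{eq: SISO hi-snr} as a direct consequence of Theorem \ref{principaltext}.c), and the paper gives no proof beyond the sentence ``Using Theorem \ref{principaltext}, we can now establish the following analogy\ldots''. Your write-up simply makes explicit the comparison of the two $(\Sinf,\Linf)$ pairs and the resolution of the $\max$ under the hypothesis, which is all that is needed.
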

This observation allows us to use the vast body of work done for
the celebrated scalar flat fading channel
\cite{Biglieri-Proakis-Shamai-IEEE-T-IT-98}. In particular, the
high-SNR characterization of flat fading channels with the
following fading statistics have been considered in previous
works: (a) Rayleigh distribution, (b) Rice distribution, (c)
log-normal distribution, and (d) Nakagami distribution (see
\cite{Biglieri-Proakis-Shamai-IEEE-T-IT-98} and references
therein).

\paragraph{Intra-Cell TDMA and Opportunistic Scheduling}

Throughout this work we have assumed that the instantaneous
channel state information is known to the MCP receiver only. Here
we further assume that some sort of ideal feedback channel is
available between the MCP receiver and the $K$ mobile users
included in each cell. This feedback channel is used to schedule
the ``best" local user in each cell for transmission during the
current time slot\footnote{See
\cite{Somekh-Simeone-Barness-Haimovich-Shamai-IT07} for a similar
scheduling deployed in the Wyner cellular uplink channel.}. In
other words, in each cell the user with the strongest channel fade
towards the BS located on the right boundary of each cell is
scheduled for transmission\footnote{Since the right most cell
indexed (M+1), has no BS on its right boundary it randomly
schedules a user for transmission.} with power $P$. Hence, the
index of the selected user in the $m$th cell reads
\begin{equation}\label{eq: Best user}
\tilde{k}_m=\underset{k=1,2,\ldots
K}{\mathrm{argmax}}\abs{a_{m,k}}^2\quad m=1,2,\ldots, M\ .
\end{equation}
The resulting $M\times (M+1)$ channel transfer matrix
$\Mat{\tilde{H}}_M$ of this scheduling scheme is a two diagonal
matrix with independent entries. The probability density function
of the main diagonal i.i.d. entries' amplitudes is given by
\begin{equation}\label{eq: Max PDF}
    d\pi_{K,\abs{a}^2}=K\pi_{\abs{a}^2}^{K-1}d\pi_{\abs{a}^2}\ ,
\end{equation}
following the maximum order statistics \cite{David-Book-1981}. On
the other hand, the i.i.d. entries of the second non-zero diagonal
are distributed according to the original fading statistics
$\pi_b$.

Assuming that $\pi_{K,\abs{a}^2}$ and $\pi_b$ satisfy conditions
(H1)\dots(H3), we can apply Theorem \ref{principaltext} in order
to derive the high-SNR characteristics of the per-cell sum-rate
achievable by this opportunistic scheduling
\begin{equation}\label{eq: Best local fade hi-snr}
    \Sinf=1\quad ; \quad
    \Linf=-\max\left(\E_{\pi_{K,\abs{a}^2}}(\log_2
    y),\E_{\pi_b}(\log_2 \abs{b}^2)\right)\ .
\end{equation}
For Rayleigh fading channels and in the case where the number of
users per-cell is large $K\gg 1$, we can use the well known fact
that the square of the maximum of the $K$ amplitudes behaves like
$\log K$ with high-probability (see
\cite{Sharif-Hassibi-Feb-2005}). Hence, the rate high-SNR power
offset of this scheme is
\begin{equation}\label{}
    \Linf \approx -\log_2\log K\ ,
\end{equation}
revealing a multi-user diversity gain of $\log\log K$. It is noted
that allowing additional power control to this scheme will yield
better performances. However, we are unable to apply Theorem
\ref{principaltext} for this situation. Finally, choosing the BS
located on the right boundary of the cell is arbitrary; taken the
BS located on the left boundary of the cell yields the same
results.

\section{Concluding Remarks}\label{sec: Concluding Remarks}

In this paper we study the high-SNR characterization of the
per-cell sum-rate capacity of the ``soft-handoff" uplink cellular
channel with multi-cell processing. Taking advantage of the special topology induced
by the setup, the problem reduces to the study of the spectrum of
certain large random Hermitian Jacobi matrices. For the intra-cell
TDMA protocol where only one user is active simultaneously
per-cell we provide an exact closed form expression for the
per-cell sum-rate high-SNR power offset for rather general fading
distribution. Examining the result, it is concluded that in the
high-SNR regime, the rate of the cellular setup at hand is
equivalent to the one of a single user SISO channel with similar
fading statistics.

Turning to the capacity achieving WB protocol, where all $K$ users
are active simultaneously in each cell, we derive a series of
lower and upper bounds to the rate. These bounds are shown (via
Monte-Carlo simulations) to be tighter than previously known bounds.

Note that in Theorem \ref{principalKtext} points a) and c) and in Proposition \ref{boundsK},
we take the fading coefficients relative to the users of one cell to be independent. Those results
continue to be true
if we assume correlation between the fading coefficients relative to the users of the same cell
(but independence between cells).
The proof is identical to the proof given in the paper.

Some of the analysis reported here can be extended to include the
case where $\Mat{G}_M$ is $(2p-1)$-diagonal for some $p>2$ (e.g.
$p=3$ for the channel matrix of the Wyner model), using an
adaptation of the ``Thouless formula for the strip" derived originally
in \cite{Craig-Simon}. Using this approach, bounds similar
to those of Prop. \ref{boundsK} may be provided on the rate.
Details will appear elsewhere \cite{nathanthesis}.

\section*{Acknowledgments}
N. L. was partially supported
by the fund for promotion of research at the Technion.

O. S. was partially supported
    by a Marie Curie Outgoing International Fellowship within the 6th
    European Community Framework Program.

S. S. was partially supported by the REMON Consortium and NEWCOM++.

O. Z. was partially supported by NSF grant number DMS-0503775;
Part of this work was done while he  was with the
Department of Electrical Engineering, Technion.

We thank David Bitton for his help with the implementation of the Monte Carlo
simulation used in Section \ref{sec: Numerical results}
(application of Proposition \ref{boundsK}).

\appendix

\subsection{Proof of Theorem \ref{principaltext}}
\label{sec:DefinitionsAndMainResult}

In order to streamline the proof we  somewhat modify notation. We
consider two random sequences of complex numbers $(a_n)$ and
$(b_n)$. The $(a_n)$ (resp. $(b_n)$) are i.i.d of law $\pi_a$
(resp. $\pi_b$) and the $(a_n)$ are independent of the $(b_n)$. We
set $\Omega\triangleq((a_n),(b_n))$. We denote by $\P$ the
probability associated with those random sequences and by $\E$ the
associated expectation. For a given integer $n$, we consider a
channel transfer matrix $\Mat{H}_M$ of size $M\times (M+1)$.

\[\Mat{H}_M=
\begin{pmatrix}
a_1&b_1&0&\cdots&0\\
0&\ddots&\ddots&\ddots&\vdots\\
\vdots&\ddots&\ddots&\ddots&0\\
0&\cdots&0&a_M&b_M\\
\end{pmatrix}.\]

We consider the following variable
\[\mathcal{C}_{M}(P)=\frac{1}{M}\tr\left\{\log\left(I+P \Mat{H}_M\Mat{H}_M^\dagger\right)\right\}.\]
Note that,
\[\Mat{H}_M\Mat{H}_M^\dagger=\begin{pmatrix}
\abs{a_1}^2+\abs{b_1}^2&a_2^\dagger b_1&0&\cdots&0\\
a_2b_1^\dagger &\abs{a_2}^2+\abs{b_2}^2&a_3^\dagger b_2&\ddots&\vdots\\
0&\ddots&\ddots&\ddots&0\\
\vdots&\ddots&\ddots&\ddots&a_M^\dagger b_{M-1}\\
0&\cdots&0&a_M b_{M-1}^\dagger &\abs{a_M}^2+\abs{b_M}^2\\
\end{pmatrix}.\]
With this notation, as explained in Section
\ref{sec: Main results},
Theorem \ref{principaltext} follows from the
following.
\begin{thm}
\label{principal}[$K=1$]
 Assume (H1) and (H2)\,.
\begin{enumerate}
    \item For every $\rho>0$, $\mathcal{C}_M(P)$ converges $\P$-a.s as
        $M$ goes to infinity. We call the limit $\mathcal{C}(P)$.
    \item Further assume [(H3) or
        (H3')]. As $\rho$ goes to infinity,
        \[\mathcal{C}(P)=\log P+
        2\max\left(\E_{\pi_a}\log\abs{x};\E_{\pi_b}\log\abs{x}\right)+o(1).\]
\end{enumerate}
\end{thm}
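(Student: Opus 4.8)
The plan is to reduce the study of $\mathcal{C}_M(P) = \frac{1}{M}\log\det(I+P\Mat{H}_M\Mat{H}_M^\dagger)$ to the asymptotic behavior of solutions of a second-order linear recurrence, and then to analyze that recurrence via an associated Markov chain on projective space. First I would use the fact that $\Mat{H}_M\Mat{H}_M^\dagger$ is a tridiagonal (Jacobi) Hermitian matrix, so $\det(I+P\Mat{H}_M\Mat{H}_M^\dagger)$ can be computed by the standard three-term determinant recursion $D_m = \alpha_m D_{m-1} - \beta_m D_{m-2}$, where $\alpha_m = 1 + P(\abs{a_m}^2+\abs{b_m}^2)$ and $\beta_m = P^2\abs{a_m}^2\abs{b_{m-1}}^2$. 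Equivalently — and this is the form I would actually push — writing $u_m = D_m/D_{m-1}$ gives a first-order nonlinear recursion $u_m = \alpha_m - \beta_m/u_{m-1}$, and $\mathcal{C}_M(P) = \frac{1}{M}\sum_{m=1}^M \log u_m$. (This is precisely Narula's Cholesky recursion \eqref{eq: RM Narula Cholesky} in disguise, with $d_m = u_m$.) So point a) amounts to a strong law of large numbers for the additive functional $\log u_m$ of the chain $\{u_m\}$, and point b) amounts to identifying the $P\to\infty$ asymptotics of the stationary mean of $\log u_m$.

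The key step is to establish that $\{u_m\}$ (equivalently a projectivized version, since $u_m$ can become large and one wants a compact state space) is a Harris-recurrent Markov chain with a unique invariant probability measure $\nu_P$, and that the ergodic theorem applies to $\log u_m$ so that $\mathcal{C}_M(P) \to \int \log u \, d\nu_P(u)$ $\P$-a.s.; integrability of $\log u$ against $\nu_P$, and the uniform integrability needed to pass to expectations, follow from (H1) together with the Hadamard bound \eqref{hadamard}. To get Harris recurrence I would use (H2) (absolute continuity of $\pi_a,\pi_b$): the one-step transition kernel of the chain, viewed on an appropriate compactified state space (e.g. $u \in (0,\infty]$ or the projective line for the lift to $2\times 2$ matrix products $\binom{D_m}{D_{m-1}} = \bigl(\begin{smallmatrix}\alpha_m & -\beta_m\\ 1 & 0\end{smallmatrix}\bigr)\binom{D_{m-1}}{D_{m-2}}$), inherits absolute continuity and one can exhibit a small set / minorization condition making the chain $\psi$-irreducible and aperiodic; positive recurrence then comes from a Lyapunov–Foster drift estimate, again using (H1) to control $\E\log u$. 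This is the structural heart of the argument and the part most likely to require real work: verifying the minorization and drift conditions carefully, and handling the non-compactness (the excursions of $u_m$ towards $\infty$).

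For point b), the high-SNR asymptotics, I would rescale: set $u_m = P \tilde u_m$ (or work with $\log u_m - \log P$). From $u_m = 1 + P(\abs{a_m}^2+\abs{b_m}^2) - P^2\abs{a_m}^2\abs{b_{m-1}}^2/u_{m-1}$, dividing by $P$ and letting $P\to\infty$ formally yields the limiting recursion
\[
\tilde u_m = \abs{a_m}^2 + \abs{b_m}^2 - \frac{\abs{a_m}^2\abs{b_{m-1}}^2}{\tilde u_{m-1}},
\]
which is the $K=1$ specialization of the chain \eqref{eq-130707a}. One checks (H3)/(H3') guarantees this limiting chain is itself Harris recurrent with a unique invariant law, and — crucially — that the convergence of the stationary means is uniform enough (continuity of $\nu_P$ in $P$ up to $P=\infty$ in a suitable weak sense, plus uniform integrability of $\log$) to conclude $\mathcal{C}(P) = \log P + \E_{\nu_\infty}\log \tilde u + o(1)$. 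Finally one identifies $\E_{\nu_\infty}\log\tilde u = 2\max(\E_{\pi_a}\log\abs{x}, \E_{\pi_b}\log\abs{x})$: the term $\abs{a_m}^2 + \abs{b_m}^2 - \abs{a_m}^2\abs{b_{m-1}}^2/\tilde u_{m-1}$ is, at stationarity and in the log-scale, dominated by the larger of the two fading paths — essentially because the Lyapunov exponent of the product of $2\times 2$ matrices $\bigl(\begin{smallmatrix}\abs{a_m}^2+\abs{b_m}^2 & -\abs{a_m}^2\abs{b_{m-1}}^2\\ 1 & 0\end{smallmatrix}\bigr)$ equals $2\max(\E\log\abs{a}, \E\log\abs{b})$, which one can see either by a direct telescoping/invariance computation against $\nu_\infty$ or by comparison with the triangular-matrix case. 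I expect this last identification, and the uniform passage to the $P=\infty$ limit, to be where the argument must be most careful; the Thouless-formula machinery alluded to in the introduction would give an alternative route to the same constant but is not strictly needed.
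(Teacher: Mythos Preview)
Your overall architecture---reduce $\mathcal{C}_M(P)$ to an additive functional of a one-dimensional Markov chain coming from the three-term recursion, establish Harris recurrence, and apply the ergodic theorem---is the same as the paper's second approach. Two points deserve comment.

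For part a), the paper's primary proof does \emph{not} go through Harris recurrence of the ratio chain: it uses the theory of products of random $2\times 2$ matrices (a multiplicative ergodic/Furstenberg-type argument, showing irreducibility of the multiplicative system) and this works under (H1)--(H2) alone. The Harris-chain route you sketch is valid but, as the paper itself notes, requires the extra assumption (H3) or (H3') to get the minorization/irreducibility you need. So your proposal proves a slightly weaker statement for a).

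For part b) there is a genuine gap. You write down the limiting recursion $\tilde u_m = |a_m|^2+|b_m|^2 - |a_m|^2|b_{m-1}|^2/\tilde u_{m-1}$ and then assert that its stationary log-mean equals $2\max(\E\log|a|,\E\log|b|)$ ``by telescoping/invariance'' or ``comparison with the triangular case''. This is exactly the hard step, and neither suggestion is a proof; the matrices are not triangular, and no telescoping identity drops out of the raw $\tilde u$-chain. The paper's device is a further change of variables: with $d_n$ your $-\tilde u$-type quantity, set $e_n = 1 + |b_{n-1}|^2/d_n$, which lands in the \emph{compact} interval $[0,1]$ and satisfies
\[
e_n=\frac{\delta+|a_{n-1}|^2 e_{n-1}}{\delta+|b_{n-1}|^2+|a_{n-1}|^2 e_{n-1}},\qquad \delta=1/P.
\]
The point of this substitution is that it \emph{separates} the log of the ratio as
\[
\log|c_n| \;=\; \log\Bigl|\tfrac{b_{n-1}}{a_n}\Bigr| \;-\; \log(1-e_n),
\]
so the desired constant $\E\log|b|-\E\log|a|$ appears immediately by the LLN, and the remaining task is to show $\int_0^1 -\log(1-x)\,d\mu_\delta(x)\to 0$ as $\delta\to 0$ (equivalently $\mu_\delta\Rightarrow\delta_0$). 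That vanishing is itself nontrivial---the paper proves it via a dominating coupled chain and martingale estimates---but it is a much more concrete target than directly computing the Lyapunov exponent of your $\tilde u$-chain. Without this (or an equivalent) decomposition, your step ``identify $\E_{\nu_\infty}\log\tilde u$'' is where the argument would stall; note also that at $\delta=0$ the $e_n$-chain has $0$ as an absorbing point, so ``the invariant law of the limiting chain'' is degenerate and cannot be read off naively.
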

\noindent {\it Proof of Theorem \ref{principal}} Without loss of
generality, in the proof we can assume
\begin{enumerate}
  \item[(H5)] $\E_{\pi_a}\log\abs{x}\leq\E_{\pi_b}\log\abs{x}$.
\end{enumerate}
Indeed, we may exchange the role of entries $a_i$ and $b_i$ for $1\leq
i\leq M$ by a right-left reflection, namely the transformation
$\hat a_j=b_{M-j+1}$, $\hat b_j=a_{M-j+1}$, $1\leq j\leq M$.

For part a), only  (H1) and (H2) are needed. Since part a) is a
consequence of general facts concerning products of random
matrices and does not use much of the special structure in the
problem, we bring it in Appendix
\ref{sec:ProductOfRandomMatrices}.

Part b) uses the theory of Markov chains and is specific to the
particular matrix $\Mat{H}_M$. We note that as a by product of
this approach, we obtain a second proof of part a), however under
the additional assumption [(H3) or (H3')]. We provide a proof of
Theorem \ref{principal} under the assumptions (H1), (H2) and [(H3)
or (H3')] in Appendices \ref{sec:DefinitionsAndMainResult} and
\ref{sec:StudyOfTheMarkovChain}.

The structure of the proof is as follows.
We first introduce an auxiliary sequence which allows us to
reformulate the problem in terms of a special Markov chain.
The study of the latter, which forms the bulk of the proof of
Theorem \ref{principal}, is carried out in Section
\ref{sec:StudyOfTheMarkovChain}.

\subsubsection{Auxiliary sequence}
\label{sec:AuxiliarySequence} We begin with a technical lemma.

\begin{lem}
\label{discriminant} Assume (H2). $\P$-a.s,
$\Mat{H}_M\Mat{H}_M^\dagger$ does not have multiple eigenvalues.
\end{lem}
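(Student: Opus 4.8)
The plan is to express the condition "has a multiple eigenvalue" as the vanishing of a single nontrivial polynomial (the discriminant of the characteristic polynomial) in the entries of the matrix, and then to argue that under (H2) this event has probability zero. The matrix $\Mat{H}_M\Mat{H}_M^\dagger$ is tridiagonal and Hermitian with real diagonal entries $d_m = \abs{a_m}^2 + \abs{b_m}^2$ and off-diagonal entries $c_m = a_{m+1}^\dagger b_m$ (and its conjugate). Its characteristic polynomial $\chi_M(\lambda) = \det(\lambda I - \Mat{H}_M\Mat{H}_M^\dagger)$ satisfies the standard three-term recursion $\chi_m(\lambda) = (\lambda - d_m)\chi_{m-1}(\lambda) - \abs{c_{m-1}}^2 \chi_{m-2}(\lambda)$, so its coefficients are polynomials in the real variables $d_1,\dots,d_M$ and $\abs{c_1}^2,\dots,\abs{c_{M-1}}^2$. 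The matrix has a repeated eigenvalue if and only if the discriminant $\mathrm{Disc}(\chi_M)$ — itself a polynomial in these coefficients, hence a polynomial $Q$ in the $d_m$ and $\abs{c_m}^2$ — vanishes.

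First I would verify that $Q$ is not identically zero: it suffices to exhibit one choice of the parameters $(d_m)$, $(\abs{c_m}^2)$ for which the tridiagonal matrix has $M$ distinct eigenvalues. Taking, say, $\abs{c_m}^2 > 0$ for all $m$ makes the matrix a Jacobi matrix with nonzero off-diagonal entries, and such matrices are well known to have simple spectrum (the eigenvalue equation is a second-order recursion, so an eigenvector is determined up to scaling by its first coordinate, forcing each eigenspace to be one-dimensional). Hence $Q \not\equiv 0$ as a polynomial.

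Next, I would push this back to the underlying complex fading variables. The quantities $d_m$ and $\abs{c_m}^2$ are (real-analytic, in particular polynomial) functions of the real and imaginary parts of the complex variables $a_1,\dots,a_M,b_1,\dots,b_M$, so $Q$ pulls back to a nonzero real-analytic (indeed polynomial) function $\widetilde{Q}$ of the $4M$ real coordinates $(\Re a_m,\Im a_m,\Re b_m,\Im b_m)$ — one must check $\widetilde Q\not\equiv 0$, which follows because the map from these real coordinates onto the achievable $(d_m,\abs{c_m}^2)$ has image with nonempty interior, so it cannot be killed by the nonzero polynomial $Q$. The zero set of a nonzero real-analytic function on $\R^{4M}$ has Lebesgue measure zero. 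Under (H2), $\pi_a$ and $\pi_b$ are absolutely continuous with respect to Lebesgue measure on $\C \cong \R^2$, so the joint law of $(a_1,\dots,a_M,b_1,\dots,b_M)$ is absolutely continuous with respect to Lebesgue measure on $\R^{4M}$; therefore the event $\{\widetilde{Q} = 0\}$, which contains $\{\Mat{H}_M\Mat{H}_M^\dagger \text{ has a repeated eigenvalue}\}$, has probability zero.

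The only genuinely delicate point — the "main obstacle" — is the non-degeneracy claim that $Q$ (equivalently $\widetilde Q$) is not the zero polynomial; everything else is the standard "analytic variety has measure zero" argument together with absolute continuity from (H2). That non-degeneracy I would settle cleanly by the simple-spectrum property of Jacobi matrices with strictly positive sub-diagonal, as sketched above, rather than by any explicit computation of the discriminant.
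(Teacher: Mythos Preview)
Your proposal is correct and follows essentially the same approach as the paper: express the multiple-eigenvalue condition as the vanishing of the discriminant, observe it is a polynomial in the real and imaginary parts of the $a_i,b_i$, check it is not identically zero, and conclude via (H2) and the fact that the zero set of a nonzero polynomial has Lebesgue measure zero. The only difference is cosmetic: for the non-degeneracy step the paper simply takes $b_i=0$ and $a_i=i$, obtaining a diagonal matrix with distinct eigenvalues $1^2,\dots,M^2$, whereas you invoke the simple-spectrum property of Jacobi matrices with nonzero subdiagonal --- both are valid, and the paper's explicit example makes the ``image has nonempty interior'' detour in your write-up unnecessary.
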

\begin{proof}
We let $D$ denote the discriminant of
$\Mat{H}_M\Mat{H}_M^\dagger$, it is a
polynomial in\\
$\{\abs{a_i}^2+\abs{b_i}^2,a_{i+1}b_i^\dagger,a_{i+1}^\dagger
b_i\}$ which vanishes when there is a multiple eigenvalue.
Therefore, it is a polynomial in $\Re a_i$, $\Im a_i$, $\Re b_i$
and $\Im b_i$ It is not identically 0 because for $b_i=0$ and
$a_i=i$, the eigenvalues of $\Mat{H}_M\Mat{H}_M^\dagger$ are
distinct. The result follows directly from the following lemma
which is an easy consequence of Fubini's theorem.
\end{proof}
\begin{lem}
\label{giab} Let $Q$ be a function from $\C^n$ to $\C$. We assume
that $Q$ is not identically 0 and that $Q(z_1,\ldots,z_n)$ is a
polynomial in the $\Re z_i$ and the $\Im z_i$. Then the set of the
roots of $Q$ has Lebesgue measure 0.
\end{lem}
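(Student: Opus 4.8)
The statement to prove is Lemma \ref{giab}: if $Q:\C^n\to\C$ is not identically zero and $Q(z_1,\dots,z_n)$ is a polynomial in the real and imaginary parts of the $z_i$, then its zero set has Lebesgue measure zero in $\C^n \cong \R^{2n}$.

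The plan is to reduce to the one-variable case and then induct, using Fubini's theorem exactly as the paper hints. First I would set up coordinates: writing $z_j = x_j + i y_j$, the hypothesis says $Q$ is a genuine polynomial $\tilde Q(x_1,y_1,\dots,x_n,y_n)$ in $2n$ real variables, not identically zero, and we want $\{\tilde Q = 0\}$ to be Lebesgue-null in $\R^{2n}$.

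The base case is $N=1$ real variable: a nonzero polynomial in one real variable has only finitely many roots, hence a zero set of measure zero. For the inductive step with $N$ real variables, I would write $\tilde Q$ as a polynomial in the last variable $t$ with coefficients that are polynomials in the first $N-1$ variables: $\tilde Q(u,t) = \sum_{k=0}^d c_k(u)\, t^k$, where $u \in \R^{N-1}$. Since $\tilde Q \not\equiv 0$, at least one $c_k$ is not identically zero; let $A = \{u : c_k(u) = 0 \text{ for all } k\}$ be the bad set in $\R^{N-1}$. By the induction hypothesis (applied to the nonzero polynomial $c_{k_0}$), $A$ has $(N-1)$-dimensional Lebesgue measure zero, so $A \times \R$ has $N$-dimensional measure zero. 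For $u \notin A$, the polynomial $t \mapsto \tilde Q(u,t)$ is a nonzero one-variable polynomial, hence has finitely many roots, i.e. the slice $\{t : \tilde Q(u,t) = 0\}$ has measure zero in $\R$. By Tonelli/Fubini applied to the indicator of the (measurable, since semialgebraic or simply Borel as the zero set of a continuous function) set $\{\tilde Q = 0\}$, its measure equals $\int_{\R^{N-1}} \mathrm{Leb}_1(\{t : \tilde Q(u,t)=0\})\, du$, and the integrand is zero for $u \notin A$ while $A$ itself is null; hence the integral is zero. This completes the induction and yields Lemma \ref{giab}; applied with $N = 2n$ it gives the claim, and combined with the observation in the proof of Lemma \ref{discriminant} that the discriminant $D$ is such a nonzero real polynomial, Lemma \ref{discriminant} follows.

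I do not expect any serious obstacle here; the only point requiring a word of care is measurability of the zero set (immediate, as $\tilde Q$ is continuous so $\{\tilde Q=0\}$ is closed hence Borel) so that Fubini–Tonelli applies, and the bookkeeping that ``not identically zero'' is preserved when passing to a nonzero coefficient polynomial in fewer variables. Everything else is the standard fact that nonzero univariate polynomials have finitely many roots plus a routine induction on dimension.
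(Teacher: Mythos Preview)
Your argument is correct and is precisely the standard Fubini-induction proof the paper has in mind: the paper does not spell out any details for Lemma~\ref{giab}, merely asserting it is ``an easy consequence of Fubini's theorem,'' and your proposal supplies exactly that consequence.
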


In the sequel, we denote by $\lambda_1\geq \ldots\geq \lambda_M$ the
ordered eigenvalues of $\Mat{H}_M\Mat{H}_M^\dagger$. For a given
$\lambda$, we consider the following sequence (indexed by $n$) of
complex numbers (the dependence in $\lambda$ will only be
mentioned when it is relevant): $x_0=0$, $x_1=1$, and for
$n\geq1$,
\[a_n b_{n-1}^\dagger x_{n-1}+(\abs{a_n}^2+\abs{b_n}^2)x_n+a_{n+1}^\dagger b_{n}x_{n+1}=\lambda x_n,\]
that is
\begin{align}
\label{xn1}
x_{n+1}=\frac{\lambda-\abs{a_n}^2-\abs{b_n}^2}{a_{n+1}^\dagger
b_{n}} x_n-\frac{a_n b_{n-1}^\dagger}{a_{n+1}^\dagger
b_{n}}x_{n-1}.
\end{align}
Note that $x_{M+1}(\lambda)=0$ if and only if $\lambda$ is an
eigenvalue of $\Mat{H}_M\Mat{H}_M^\dagger$. Moreover, $x_{n+1}$ is
a polynomial in $\lambda$ of degree $n$ with highest coefficient
$1/\prod_{i=1}^{n}(a_{i+1}^\dagger {b_i})$. One can thus write
using Lemma \ref{discriminant}
\[x_{n+1}(\lambda)=\prod_{i=1}^{n}(a_{i+1}^\dagger{b_i})^{-1}\prod_{i=1}^{n}(\lambda-\lambda_i)\ \ \ \ \P-\textrm{a.s},\]

Hence, for $\lambda=-1/P$,
\begin{align}
\label{xn}
\mathcal{C}_M(P)=\log(P)+\frac{1}{M}\log\abs{x_{M+1}(\lambda)}+\frac{1}{M}\sum_{i=1}^M\log\abs{a_{i+1}b_i}\
\ \ \ \ \ \ \P-\textrm{a.s}.
\end{align}
By the Law of Large Numbers (LLN),
\[\lim_{M\rightarrow\infty}\frac{1}{M}\sum_{i=1}^M\log\abs{a_{i+1}b_i}=\E_{\pi_a}\log\abs{x}+\E_{\pi_b}\log\abs{x}\ \ \ \ \P-\textrm{a.s}.\]

Because of (\ref{xn}), to prove Theorem \ref{principal}, we only
need to show the following lemma.
\begin{lem}
\label{gros} Assume  (H1), (H2) and [(H3) or (H3')]
\begin{enumerate}
    \item \label{gr1} For every $\lambda<0$, $\frac{1}{n}\log\abs{x_{n+1}(\lambda)}$ converges $\P$-a.s as $n$ goes to infinity. The limit is $\gamma(\lambda)$, the Lyapunov exponent defined by (\ref{fur}).
    \item \label{gr2} Assume further (H5).
        Then $\gamma(\lambda)$ converges to $\E_{\pi_b}\log\abs{x}-\E_{\pi_a}\log\abs{x}$ as
        $\lambda$ goes to 0.
\end{enumerate}
\end{lem}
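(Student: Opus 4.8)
The plan is to treat the second-order recurrence \eqref{xn1} as a product of random $2\times 2$ matrices and then extract the Lyapunov exponent via the classical Furstenberg theory, carefully isolating the singular behavior as $\lambda\to 0$. For part \ref{gr1}, I would write $\binom{x_{n+1}}{x_n}=A_n(\lambda)\binom{x_n}{x_{n-1}}$ where
\[
A_n(\lambda)=\begin{pmatrix}\dfrac{\lambda-\abs{a_n}^2-\abs{b_n}^2}{a_{n+1}^\dagger b_n}&-\dfrac{a_n b_{n-1}^\dagger}{a_{n+1}^\dagger b_n}\\[2mm]1&0\end{pmatrix},
\]
so that $\binom{x_{n+1}}{x_n}=A_n(\lambda)\cdots A_1(\lambda)\binom{1}{0}$. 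The sequence $\{A_n(\lambda)\}$ is stationary and ergodic (but not i.i.d., since $a_{n+1}$ couples consecutive matrices; one can either group terms in blocks or work directly with the ergodic multiplicative theorem), and under (H1) one checks the integrability condition $\E\log^+\nrm{A_1(\lambda)}<\infty$ and $\E\log^+\nrm{A_1(\lambda)^{-1}}<\infty$, noting that $\det A_n(\lambda)=a_n b_{n-1}^\dagger/(a_{n+1}^\dagger b_n)$ has $\E\abs{\log\abs{\det A_n}}<\infty$. Kingman's subadditive ergodic theorem then gives $\frac1n\log\nrm{A_n(\lambda)\cdots A_1(\lambda)}\to\gamma(\lambda)$ a.s. To transfer this to $\frac1n\log\abs{x_{n+1}(\lambda)}$ — i.e.\ to see that the specific initial vector $\binom10$ is not asymptotically orthogonal to the top Lyapunov direction — I would invoke that for $\lambda<0$ the matrix $\Mat{H}_M\Mat{H}_M^\dagger-\lambda I$ is positive definite, so the recurrence is non-oscillatory and $x_n(\lambda)$ grows monotonically in modulus at the full exponential rate; alternatively (H2) guarantees a.s.\ genuine growth via the Furstenberg positivity argument, which is where (\ref{fur}) — the definition of $\gamma(\lambda)$ — enters.

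For part \ref{gr2} I would compute the limit $\lambda\to 0$ explicitly. Setting $\lambda=0$ in \eqref{xn1} gives the limiting recurrence
\[
x_{n+1}=-\frac{\abs{a_n}^2+\abs{b_n}^2}{a_{n+1}^\dagger b_n}\,x_n-\frac{a_n b_{n-1}^\dagger}{a_{n+1}^\dagger b_n}\,x_{n-1},
\]
and the key observation is that at $\lambda=0$ the matrix $\Mat{H}_M\Mat{H}_M^\dagger$ is singular with a one-dimensional kernel, so the transfer cocycle at $\lambda=0$ has a very special structure: in fact $x_{n+1}(0)$ can be related to the determinant-type quantity $\det(\Mat{H}_n\Mat{H}_n^\dagger)$, and by \eqref{xn} (which at $\lambda = 0$ degenerates) one expects $\gamma(0)=\lim\frac1n\log\abs{x_{n+1}(0)} = \E_{\pi_b}\log\abs x-\E_{\pi_a}\log\abs x$ under (H5) — indeed, writing the $\lambda=0$ product out, each factor contributes $\log\abs{b_n}-\log\abs{a_{n+1}}$ plus a bounded fluctuation coming from the first row, and under (H5) the dominant growth is governed by the $b$'s. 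The continuity statement $\gamma(\lambda)\to\gamma(0)$ as $\lambda\uparrow 0$ is then the real content: I would establish it by proving that $\lambda\mapsto\gamma(\lambda)$ is continuous on $(-\infty,0]$, using the Furstenberg continuity theorem for the top Lyapunov exponent (continuity of $A_n(\lambda)$ in $\lambda$, plus tightness/uniform integrability of $\log\nrm{A_1(\lambda)}$ near $\lambda=0$ — which is where (H3)/(H3') is needed to control the density of $\abs{a}^2,\abs{b}^2$ near the relevant values and prevent the Lyapunov exponent from jumping), or equivalently via the Thouless formula proved in Appendix \ref{sec:ProductOfRandomMatrices}, which expresses $\gamma(\lambda)$ as a logarithmic potential of the limiting spectral measure and is manifestly continuous off the support.

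The main obstacle I anticipate is precisely the uniform control near $\lambda = 0$: the naive bound $\nrm{A_n(\lambda)}$ blows up when $a_{n+1}^\dagger b_n$ is small, and one must show that these near-singularities do not accumulate to produce a discontinuity or an incorrect limit for $\gamma(\lambda)$ as $\lambda\to 0$. This is exactly what assumptions (H3)/(H3') are designed to handle — they guarantee enough regularity of the law of $\abs{a}^2$ (resp.\ $\abs{b}^2$) on an interval so that the invariant measure of the associated projective Markov chain (the chain on $\mathbb{P}^1$ induced by the cocycle, studied in Section \ref{sec:StudyOfTheMarkovChain}) depends continuously on $\lambda$ up to $\lambda = 0$ and has no atoms. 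Concretely I would: (i) identify the projective Markov chain $z_n = x_n/x_{n-1}$ and show it is a Harris chain with a unique invariant probability $\nu_\lambda$; (ii) show $\gamma(\lambda)=\int \E\log\bigl(\abs{(A_1(\lambda)v)}/\abs v\bigr)\,\nu_\lambda(dv)$; (iii) prove weak continuity of $\nu_\lambda$ in $\lambda$ at $0$ and uniform integrability of the integrand, the latter being the delicate point handled by (H1) together with (H3)/(H3'); and (iv) evaluate the $\lambda=0$ integral to get $\E_{\pi_b}\log\abs x-\E_{\pi_a}\log\abs x$. Steps (i) and (iii) are where the bulk of the work lies and where the Harris-chain machinery announced in the introduction does its job.
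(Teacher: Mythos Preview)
Your approach to part~a) is essentially the paper's Appendix~\ref{sec:ProductOfRandomMatrices}: write the recurrence as a product of $2\times2$ matrices, get $\frac1n\log\nrm{\Mat{M}_n}\to\gamma(\lambda)$ from the multiplicative ergodic theorem, and then pass from the norm to the specific initial vector. The paper does the last step by proving \emph{irreducibility} of the multiplicative system (no invariant measurable family of proper subspaces) rather than by a positivity/non-oscillation argument, but the content is the same.

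For part~b) your outline has a genuine gap. The Thouless-formula shortcut (``continuous off the support'') does not apply: $\lambda=0$ is exactly the left edge of the spectrum of $\Mat{H}_M\Mat{H}_M^\dagger$, so you are at the boundary, and there is no general Furstenberg continuity theorem you can cite here. The paper does follow your scheme (i)--(iv) in spirit, but with two crucial ingredients you are missing. First, instead of working with the complex projective chain $c_n=x_n/x_{n-1}$, it passes via $d_n=c_n a_n^\dagger b_{n-1}$ to the \emph{real} chain $e_n=1+\abs{b_{n-1}}^2/d_n\in(0,1)$ satisfying \eqref{en}; this reduction is what makes the Harris-chain analysis in Proposition~\ref{ergo} tractable and gives the decomposition \eqref{goal}, from which $\gamma(\lambda)=\E_{\pi_b}\log\abs{x}-\E_{\pi_a}\log\abs{x}+\int_0^1(-\log(1-x))\,d\mu_\delta(x)$. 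Second, the entire content of part~b) is showing $\int(-\log(1-x))\,d\mu_\delta\to 0$ as $\delta\to0$, and ``weak continuity plus uniform integrability'' is not available as a soft argument: uniform integrability near $x=1$ is easy (stochastic domination by the law of $\abs{a_0}^2/(\abs{a_0}^2+\abs{b_0}^2)$ via (H1)), but the weak convergence $\mu_\delta([\ep,1])\to 0$ is the hard part. The paper proves it by passing to $z_n=\log e_n$, constructing a dominating truncated chain $w_n$ on $[\log\ep',0]$ with centred increments (Lemma~\ref{jump}), showing via Harris-chain arguments that $w_n$ has an invariant measure $\nu_\delta$ with $\nu_\delta(\{0\})=1/\E_0 T$ (Proposition~\ref{ergo2}), and then using a martingale hitting-time estimate (Lemma~\ref{technic}) to show $\E_0 T\to\infty$ as $\delta\to0$. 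This machinery is not optional: when (H5) holds with equality, the $\delta=0$ chain has zero drift, so simple monotonicity/contraction arguments fail, and you genuinely need the random-walk-with-reflection analysis the paper carries out.
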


\subsubsection{Reduction to a Markov chain}
\label{sec:ReductionToAMarkovChain}

To prove Lemma \ref{gros}, we take $c_n\triangleq x_n/x_{n-1}$,
for $n\geq2$. Note that by (\ref{xn1}) and (H2), $\P$-a.s,
$x_n\neq0$, hence $c_n$ is well defined and non-zero. By
(\ref{xn1}), we get
\[c_{n+1}=\frac{\lambda-\abs{a_n}^2-\abs{b_n}^2}{a_{n+1}^\dagger b_{n}}-\frac{a_n b_{n-1}^\dagger}{c_n a_{n+1}^\dagger b_{n}}.\]
Let $d_n=c_n a_n^\dagger b_{n-1}$. Then,
\[d_{n+1}=\lambda-\abs{a_n}^2-\abs{b_n}^2-\frac{\abs{a_n}^2\abs{b_{n-1}}^2}{d_n}=\lambda-\abs{b_n}^2-\abs{a_n}^2\left(1+\frac{\abs{b_{n-1}}^2}{d_n}\right).\]
Let $e_{n}=\left(1+\frac{\abs{b_{n-1}}^2}{d_n}\right)$. Then
$d_{n+1}=\lambda-\abs{b_n}^2-\abs{a_n}^2e_{n}$, and
\begin{align}
\label{en}
e_{n}=\frac{-\lambda+\abs{a_{n-1}}^2e_{n-1}}{-\lambda+\abs{b_{n-1}}^2+\abs{a_{n-1}}^2e_{n-1}},
\end{align}
with the initial conditions,
\[c_2=\frac{\lambda-\abs{a_1}^2-\abs{b_1}^2}{a_2^\dagger b_1};\]
\[d_2=\lambda-\abs{b_1}^2-\abs{a_1}^2.\]
$d_2\in\R$ and $d_2<-\abs{b_1}^2$, hence, $0<e_2<1$. From
(\ref{en}) we conclude that for all $n$, $e_n\in\R$ and $0<e_n<1$.
Now, for all $n$,
\[c_n=\frac{d_n}{a_n^\dagger b_{n-1}}=\frac{b_{n-1}^\dagger}{a_n}^\dagger\frac{1}{e_n-1}.\]
Then,
\begin{equation}
\label{goal}
\begin{split}
\frac{1}{n}\log\abs{x_{n+1}}&=\frac{1}{n}\sum_{i=2}^{n+1}\log\abs{c_i}\\
&=\frac{1}{n}\sum_{i=2}^{n+1}\left(\log\abs{\frac{b_{i-1}}{{a_i}}}-\log(1-e_i)\right)\\
\end{split}
\end{equation}
$\frac{1}{n}\sum_{i=2}^{n+1}\log\abs{\frac{b_{i-1}}{{a_i}}}$
converges to $\E_{\pi_b}\log\abs{x}-\E_{\pi_a}\log\abs{x}$ by the
LLN. We now study in details the Markov chain $e_n$.

\subsection{Study of the Markov chain $e_n$ and proof
of Lemma \ref{gros}} \label{sec:StudyOfTheMarkovChain}

For simplicity, we write $\delta\triangleq-\lambda$ and we
re-index the chain so that it starts from $e_0$. As in (\ref{en}),
\begin{align}
\label{en1}
e_{n}=\frac{\delta+\abs{a_{n-1}}^2e_{n-1}}{\delta+\abs{b_{n-1}}^2+\abs{a_{n-1}}^2e_{n-1}}.
\end{align}
We denote by $\P_{e_0}$ the law of the sequence starting from
$e_0$ and by $\E_{e_0}$ the associated expectation.

\begin{prop}
\label{ergo} Assume (H2) and [(H3) or (H3')]. The Markov chain
$e_n$ has a unique stationary probability, say, $\mu_\delta$ and
for $s\in \L^1(\mu_\delta)$, for every starting point
$e_0\in[0,1]$, $\P_{e_0}$-a.s,
\[\frac{1}{n}\sum_{i=0}^{n}s(e_i)\tendvers\int s d\mu_\delta.\]
\end{prop}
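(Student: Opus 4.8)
The plan is to establish that the Markov chain $(e_n)$ on the state space $[0,1]$ is a Harris recurrent chain and to invoke the ergodic theorem for Harris chains, which gives both uniqueness of the stationary probability and the a.s.\ convergence of Birkhoff averages $\frac1n\sum s(e_i)$ for $s\in\L^1(\mu_\delta)$ from every starting point. To do this I would verify the two standard ingredients: (i) the chain has a small set (a minorization condition) together with enough irreducibility to make the whole space accessible, and (ii) a recurrence/drift condition. Since the state space $[0,1]$ is compact and the map in \eqref{en1} keeps the chain in $[0,1]$, the delicate point is not tightness but producing the minorization.

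First I would show $\psi$-irreducibility with respect to Lebesgue measure on (a subinterval of) $[0,1]$. Fix a starting point $e_0$; after one step, $e_1 = (\delta + \abs{a_0}^2 e_0)/(\delta + \abs{b_0}^2 + \abs{a_0}^2 e_0)$ is a smooth, strictly monotone function of the pair $(\abs{a_0}^2,\abs{b_0}^2)$ with non-degenerate Jacobian on the region where (H3) or (H3') guarantees the densities of $\abs{a}^2$, $\abs{b}^2$ are positive. Hence the law of $e_1$ given $e_0$ has a component absolutely continuous with respect to Lebesgue measure, with a density bounded below on some fixed interval $I\subset(0,1)$ uniformly over $e_0$ in a suitable set; this is precisely a one-step minorization $\P_{e_0}(e_1\in\cdot)\geq \beta\,\nu(\cdot)$ for a nontrivial measure $\nu$ supported on $I$, making the whole of $[0,1]$ (or the relevant recurrent part) a small set. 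I would use (H3)/(H3') here in exactly the way the authors introduced those hypotheses: to guarantee the density of $\abs{a}^2$ (resp.\ $\abs{b}^2$) is positive on an interval reaching to $\infty$ (H3) or mutually a.c.\ with Lebesgue on $[m_a,\mathcal M_a]$ (H3'), which forces the image interval for $e_1$ to be nondegenerate.

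Next, with a small set in hand and the state space compact, Harris recurrence is automatic: any $\psi$-irreducible chain on a space that is itself a small set is (positive) Harris, so a unique invariant probability $\mu_\delta$ exists, and the Chacon--Ornstein / Hopf ergodic theorem for Harris chains yields $\frac1n\sum_{i=0}^n s(e_i)\to\int s\,d\mu_\delta$ $\P_{e_0}$-a.s.\ for every $e_0\in[0,1]$ and every $s\in\L^1(\mu_\delta)$. I expect the main obstacle to be the careful verification of the minorization: one must pin down an explicit interval $I$ and a uniform lower bound on the transition density over all $e_0\in[0,1]$, handling the degeneracies of \eqref{en1} at $e_0=0,1$ and the interplay between the supports described by (H3) versus (H3'). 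Once that uniform lower bound is secured, the rest is a direct citation of standard Harris-chain theory (e.g.\ Meyn--Tweedie), and I would keep that part brief.
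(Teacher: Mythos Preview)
Your overall strategy---verify $\psi$-irreducibility and a minorization, then quote the Harris-chain ergodic theorem---is exactly the paper's strategy, and the citation of Meyn--Tweedie is appropriate. The gap is in the specific claim that a \emph{one-step} minorization holds uniformly over all $e_0\in[0,1]$. Under (H3') this is false in general. Writing $e_1=\phi_{\alpha,\beta}(e_0)=(\delta+\alpha e_0)/(\delta+\beta+\alpha e_0)$ with $\alpha=\abs{a_0}^2\in[m_a,\mathcal M_a]$ and $\beta=\abs{b_0}^2\in[m_b,\mathcal M_b]$, the one-step reachable set from $e_0$ is the interval $[\phi_{m_a,\mathcal M_b}(e_0),\,\phi_{\mathcal M_a,m_b}(e_0)]$, which genuinely depends on $e_0$; for $e_0=0$ and $e_0=1$ these intervals need not intersect (take $\delta$ small and $m_a>0$), so there is no fixed $I$ in the one-step image of every starting point. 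Even under (H3) there is a degeneracy at $e_0=0$, where $e_1=\delta/(\delta+\beta)$ depends on $\beta$ alone and is confined to a small interval near $0$. So the step ``density bounded below on some fixed interval $I$ uniformly over $e_0$'' cannot be secured in one step, and your concluding shortcut ``the whole space is itself a small set'' does not follow.

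The paper fixes exactly this point by iterating. It shows (Lemma~\ref{hx3}) that for fixed $(\alpha,\beta)$ the map $\phi_{\alpha,\beta}$ has a unique fixed point in $[0,1]$ and its iterates converge to it \emph{uniformly} in the starting point; hence the endpoints $\theta_n^1(e_0)=\phi_{m_a,\mathcal M_b}^n(e_0)$ and $\theta_n^2(e_0)=\phi_{\mathcal M_a,m_b}^n(e_0)$ converge uniformly to limits $\Theta^1<\Theta^2$ independent of $e_0$ (Lemma~\ref{bornes}). This gives, for $n$ large, mutual absolute continuity of the law of $e_n$ with Lebesgue on a \emph{common} interval, from which a uniform lower bound $\inf_{e_0}\P_{e_0}(e_n\in B)>0$ follows for any $B$ of positive measure in $[\Theta^1,\Theta^2]$ (Lemma~\ref{clef}, using continuity in $e_0$ and compactness). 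That multi-step uniform accessibility is what replaces your one-step minorization; with it, irreducibility, Harris recurrence, aperiodicity and regularity of $[0,1]$ are checked directly, and Theorems~10.0.1, 13.0.1 and 17.0.1 of Meyn--Tweedie finish the proof. Your plan becomes correct once you replace the one-step density bound by this iterated contraction argument.
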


\begin{proof}
We start with two lemmas that will be proved later on.

\begin{lem}
\label{hx3} For $\alpha,\beta,\delta\in\R^+$, we define the
function $\phi_{\alpha,\beta}$ (we suppress $\delta$ from the
notation) such that for $e\in[0,1]$
\[\phi_{\alpha,\beta}(e)=\frac{\delta+\alpha e}{\delta+\beta+\alpha e}.\]
For any given $e\in[0,1]$, we define the sequence $(\theta_n(e))$
by $\theta_0=e$ and for $n\geq 1$,
$\theta_n(e)=\phi_{\alpha,\beta}(\theta_{n-1}(e))$. Then,
$\phi_{\alpha,\beta}$ has exactly one fixed point in $[0,1]$, say
$\kappa_{\alpha,\beta}$, and $\theta_n(e)$ converges to
$\kappa_{\alpha,\beta}$. Moreover, the convergence is uniform in
the starting point in the following sense:
\[(\forall \ep>0)(\exists n_0\in\N)(\forall e\in[0,1])(\forall n\geq n_0)(\abs{\theta_n(e)-\kappa_{\alpha,\beta}}<\ep).\]
Finally if $\alpha_1<\alpha_2$ and $\beta_1>\beta_2$, then
$\kappa_{\alpha_1,\beta_1}<\kappa_{\alpha_2,\beta_2}$.
\end{lem}

\begin{lem}
\label{bornes} Assume (H2) and [(H3) or (H3')].
\begin{enumerate}
    \item For $e_0\in[0,1]$, there exist two sequences $(\theta_n^1(e_0))$ and $(\theta_n^2(e_0))$ in $[0,1]$ such that the law of $e_n$ under $\P_{e_0}$ and the Lebesgue-measure on $[(\theta_n^1(e_0)),(\theta_n^2(e_0))]$ are mutually absolutely continuous.
    \item $(\theta_n^1(e_0))$ and $(\theta_n^2(e_0))$ converge to, say $\Theta^1$ and $\Theta^2$ respectively, $\Theta^1$ and $\Theta^2$ are independent of $e_0$ and $\Theta^1<\Theta^2$. Finally, the convergence is uniform in the starting point in the sense of Lemma \ref{hx3}.
  \item If $e_0\in[\Theta^1,\Theta^2]$, then for all $n$, the law of $e_n$ under $\P_{e_0}$ is absolutely continuous with respect to the Lebesgue-measure on $[\Theta^1,\Theta^2]$.
\end{enumerate}
\end{lem}

We recall some definitions from the
theory of Harris Markov chains, which will be used extensively in the proof.
We refer the reader to
\cite{harris} for the relevant background.

\begin{deft}
\label{irreduc} Denote by $(r_n)$ a Markov chain on $I$ an
interval of $\R$. Set $l$ a probability measure on $I$, it is an
\emph{irreducibility measure} if for all measurable set $A$ such
that $l(A)>0$ and for all $r_0\in I$
\[(\exists n)\ \P_{r_0}(r_n\in A)>0.\]
$l$ is a \emph{maximal irreducibility measure} if it satisfies the
following conditions:
\begin{itemize}
    \item $l$ is an irreducibility measure.
    \item For any other irreducibility measure $l'$, $l'$ is absolutely continuous with respect to $l$.
    \item If $l(A)=0$ then $l\{r_0: (\exists n)\ \P_{r_0}(r_n\in A)>0\}=0$.
    \item For any irreducibility measure $l'$, $l$ is equivalent to
       \[\int_I l'(dr_0)\sum_{i=0}^\infty\frac{1}{2^i}\P_{r_0}(r_i\in\cdot).\]
\end{itemize}
\end{deft}

\begin{deft}
Denote by $(r_n)$ a Markov chain on $I$ an interval of $\R$. A set
$A$ is called \emph{Harris recurrent} if for all $r_0\in A$,
$\P_{r_0}$-a.s, the chain $r_n$ visits $A$ an infinite number of
times. The chain $(r_n)$ is called \emph{Harris recurrent} if
given a maximal irreducibility measure $l$, every measurable set
$A$ such that $l(A)>0$ is Harris recurrent.
\end{deft}

\begin{deft}
Denote by $(r_n)$ a Markov chain on $I$ an interval of $\R$.
Denote by $l$ a maximal irreducibility measure. For every
measurable set $A$ such that $l(A)>0$ we denote by $\tau_A$ the
time when the chain $(r_n)$ enters $A$. A measurable set $B$ is
called \emph{regular} if for every measurable set $A$ such that
$l(A)>0$,
\[\sup_{r_0\in B}\E_{r_0}(\tau_A)<\infty.\]
\end{deft}

\begin{deft}
\label{regular} Denote by $(r_n)$ a Markov chain on $I$ an
interval of $\R$. Denote by $A$ and $B$ two measurable sets. We
say that $B$ is \emph{uniformly accessible} from $A$ if there
exists an $\ep>0$ such that
\[\inf_{r_0\in A}\P_{r_0}((\exists n)\ r_n\in B)\geq\ep.\]
\end{deft}

We continue with the proof of Proposition \ref{ergo}. Denote by
$l$ the Lebesgue-measure on $[\Theta^1,\Theta^2]$. By
\cite[Theorem 17.0.1]{harris}, it is enough to prove that the
Markov chain $e_n$ is $l$-irreducible, positive Harris with
invariant probability $\mu_\delta$. Denote ${\cal B}^+$ the set of
Lebesgue-measurable subsets of $[0,1]$ with positive $l$-measure.
Here is a technical lemma that will be proved later on.

\begin{lem}
\label{clef} Assume (H2) and [(H3) or (H3')]. For all $B\in{\cal
B}^+$, there exists $n_0=n_0(B)$ such that for all $n\geq n_0$,
\[p_n\triangleq\inf_{e_{0}\in[0,1]}\P_{e_0}(e_{n}\in B)>0.\]
\end{lem}

We continue with the proof of Proposition \ref{ergo}.

\emph{Step 1: The Markov chain $e_n$ is $l$-irreducible, Harris
and admits an invariant measure unique up to a constant multiple.}
By Lemma \ref{clef}, for $e_0\in[0,1]$ and $B\in{\cal B}^+$, the
chain has a positive probability to reach $B$ in $n_0$ steps
starting from $e_0$. Therefore, the Markov chain $e_n$ is
$l$-irreducible and by Lemma \ref{bornes} c), $l$ is a maximal
irreducibility measure for the chain $e_n$. For a given $B\in{\cal
B}^+$, by Lemma \ref{clef}, the chain $e_n$ has a probability at
least $p_{n_0}$ to reach $B$ in $n_0$ steps, hence the chain will
eventually reach $B$ and hence come back to $B$ an infinite number
of times, therefore $B$ is Harris-recurrent and the Markov chain
$e_n$ is Harris. By \cite[Theorem 10.0.1]{harris}, the Markov
chain $e_n$ admits an invariant measure unique up to a constant
multiple.

\vskip 7pt

\emph{Step 2: The Markov chain $e_n$ is aperiodic.} By
\cite[Theorem 5.4.4]{harris}, there exists an integer $d$, the
period of the chain, such that there exist disjoint measurable
sets $D_0,\ldots,D_{d-1}$ such that
\begin{itemize}
    \item For $i=0\ldots d-1$, if $e_i\in D_i$, then $\P_{e_i}(e_{i+1}\in D_{i+1})=1$ (mod $d$).
    \item $l\left((\cup_{i=1}^d D_i)^c\right)=0$.
\end{itemize}
By Lemma \ref{bornes}, for $n_1\geq n_0$ large enough and $n\geq n_1$, the
Lebesgue-measure on
$J\triangleq[(2\Theta^1+\Theta^2)/3,(\Theta^1+2\Theta^2)/3]$ is
absolutely continuous with respect to the law of $e_n$ under
$\P_{e_0}$. Therefore, for any
$n\geq n_1$, if $e_n\in D_i$, then $J\subset D_i$,
and then, if $d>1$, $e_{n+1}\in D_{i+1}$ and thus
also $J\subset D_{i+1}$, a contradiction.
Hence, $d=1$.

\vskip 7pt

\emph{Step 3: The set $[0,1]$ is regular for the Markov chain
$e_n$.} Take $B\in{\cal B}^+$. By Lemma \ref{clef}, the time it
will take for the chain $e_n$ to enter $B$ is a.s bounded above by
$n_0$ times a geometric random variable of parameter $p_{n_0}$,
hence it expectation is bounded above by $n_0/p_{n_0}$, hence
$[0,1]$ is regular.

\vskip 7pt

Now we apply \cite[Theorem 13.0.1]{harris} and get that the Markov
chain $e_n$ is positive Harris, hence has a unique invariant
probability that we denote $\mu_\delta$.

\end{proof}

\begin{proof}[Proof of Lemma \ref{clef}]

The Lebesgue-measure on $[\Theta^1,\Theta^2]$ is regular hence
there exists an $\ep>0$ such that
$B\cap[\Theta^1+\ep,\Theta^2-\ep]$ has positive Lebesgue-measure.
By Lemma \ref{bornes} a) and b), we can take $n_0$ such that for
any given $n\geq n_0$ and any given starting point $e_0$,
$\P_{e_0}(e_n\in B)>0$. Fix $n\geq n_0$. Set
$\psi(e_0)=\P_{e_0}(e_n\in B)$. By (H2), $\psi$ is a continuous
function on $[0,1]$. By compactness,
\[\inf_{e_0\in[0,1]}\P_{e_0}(e_n\in B)>0.\]
\end{proof}

\begin{proof}[Proof of Lemma \ref{bornes}]
Let us start assuming (H3').

a) We first assume that ${\cal M}_a,{\cal M}_b\in\R^+$. We use the
notation of Lemma \ref{hx3}. For $e_0\in[0,1]$ and $n$, we define
$\theta_n^1(e_0)=\phi_{m_a,{\cal M}_b}^n(e_0)$ and
$\theta_n^2(e_0)=\phi_{{\cal M}_a,m_b}^n(e_0)$, where $\phi^n$ is
the $n$-th iteration of the function $\phi$. Note that for $e1\leq
e_2\in[0,1]$, $\alpha 1<\alpha_2\in\R^+$ and $\beta
1<\beta_2\in\R^+$,
\begin{align*}
\psi:[e_1,e_2]\times[\alpha_1,\alpha_2]\times[\beta_1,\beta_2]&\longrightarrow[\phi_{\alpha_1,\beta_2}(e_1),\phi_{\alpha_2,\beta_1}(e_2)]\\
(x,\alpha,\beta)&\longmapsto\phi_{\alpha,\beta}(e)
\end{align*}
is well defined and onto and the inverse image of an interval
which is not a singleton has positive Lebesgue-measure. Therefore,
by induction, the Lebesgue-measure on $[\theta_n^1,\theta_n^2]$ is
absolutely continuous with respect to the law of $e_n$ under
$\P_{e_0}$. Moreover, by (H2) and (\ref{en1}), the
Lebesgue-measure on $[\theta_n^1,\theta_n^2]$ and the law of $e_n$
under $\P_{e_0}$ are mutually absolutely continuous.

b) It is a direct consequence of Lemma \ref{hx3} and we get
$\Theta^1=\kappa_{m_a,{\cal M}_b}$ and $\Theta^2=\kappa_{{\cal
M}_a,m_b}$. By Lemma \ref{hx3} and (H3'), $\kappa_{m_a,{\cal
M}_b}<\kappa_{{\cal M}_a,m_b}$, hence $\Theta^1<\Theta^2$.

c) $\phi_{m_a,{\cal M}_b}$ is increasing and $\kappa_{m_a,{\cal
M}_b}$ a fixed point hence if $\kappa_{m_a,{\cal M}_b}\leq e_0$,
then for all $n$, $\kappa_{m_a,{\cal M}_b}\leq \theta_n^1(e_0)$.
In the same way, for all $n$, $\kappa_{{\cal M}_a,m_b}\geq
\theta_n^2(e_0)$.

If ${\cal M}_a=\infty$ (resp. ${\cal M}_b=\infty$), we take for
all $n\geq1$, $\theta_n^2=1$ (resp. $\theta_n^1=0$) and
$\Theta^2=1$ (resp. $\Theta^1=0$) and the proof is the same.

Let us now assume (H3). The proof is the same with for all
$n\geq1$ and all $e_0\in[0,1]$, $\theta_n^1(e_0)=0$, for all
$n\geq1$ and all $e_0\in[0,1]$ (except for $n=1$ and $e_0=0$),
$\theta_n^2(e_0)=0$. We get $\Theta^1=0$ and $\Theta^2=1$.
\end{proof}

\begin{proof}[Proof of Lemma \ref{hx3}]
For $e\in[0,1]$,
\[\phi_{\alpha,\beta}'(e)=\frac{\alpha\beta}{(\delta+\beta+\alpha e)^2}.\]
$\phi_{\alpha,\beta}'$ is decreasing and
$\phi_{\alpha,\beta}'(1)<1$. If $\phi_{\alpha,\beta}'(0)<1$, then
$\phi_{\alpha,\beta}$ is contracting hence admits a fixed point
and its iteration on any starting point converges to the fixed
point. Suppose $\phi_{\alpha,\beta}'(0)\geq1$. Denote by
$\overline{e}$ the only point of $[0,1]$ such that
$\phi_{\alpha,\beta}'(\overline{e})=1$. Set
$\tilde{\phi}_{\alpha,\beta}(e)=\phi(e)_{\alpha,\beta}-e$. Then
$\tilde{\phi}_{\alpha,\beta}(0)>0$,
$\tilde{\phi}_{\alpha,\beta}(1)\leq0$, and
$\tilde{\phi}_{\alpha,\beta}$ is increasing on $[0,\overline{e}]$
and decreasing on $[\overline{e},1]$. Hence,
$\tilde{\phi}_{\alpha,\beta}(\overline{e})>0$ and
$\tilde{\phi}_{\alpha,\beta}$ is 0 on exactly one point which is a
fixed point for $\phi_{\alpha,\beta}$. We denote that fixed point
$\kappa_{\alpha,\beta}$. If $e\in[\kappa_{\alpha,\beta},1]$, since
$\phi_{\alpha,\beta}$ is increasing, for all $n$,
$\theta_n(e)\in[\kappa_{\alpha,\beta},1]$ and
$\phi_{\alpha,\beta}$ is contracting on
$[\kappa_{\alpha,\beta},1]$ hence $\theta_n(e)$ converges to
$\kappa_{\alpha,\beta}$. If $e\in[0,\kappa_{\alpha,\beta}]$, for
all $n$, $\theta_n(e)\in[0,\kappa_{\alpha,\beta}]$, and
$\tilde{\phi}_{\alpha,\beta}$ is non-negative on that interval,
hence $\theta_n(e)$ is non-decreasing. Therefore, it converges and
since $\phi_{\alpha,\beta}$ is continuous, the only possible limit
is $\kappa_{\alpha,\beta}$. To prove the uniformity in the
starting point, we use the fact that $\phi_{\alpha,\beta}$ is
increasing, hence for all $e\in[0,1]$ and $n$,
\[\theta_n(0)\leq\theta_n(e)\leq\theta_n(1).\]
That gives the uniformity. Finally, assume $\alpha_1<\alpha_2$ and
$\beta_1>\beta_2$. $\phi_{\alpha,\beta}(e)$ is non-decreasing in
$\alpha$, decreasing in $\beta$ and non-decreasing in $e$ hence by
induction,
$\phi_{\alpha_1,\beta_1}^n(0)\leq\phi_{\alpha_2,\beta_2}^n(0)$,
where $\phi^n$ is the $n$-th iteration of the function $\phi$.
Hence, $\kappa_{\alpha_1,\beta_1}\leq\kappa_{\alpha_2,\beta_2}$.
If $\kappa_{\alpha_1,\beta_1}=\kappa_{\alpha_2,\beta_2}$, then
\[\kappa_{\alpha_1,\beta_1}=\phi_{\alpha_1,\beta_1}(\kappa_{\alpha_1,\beta_1})<\phi_{\alpha_2,\beta_2}(\kappa_{\alpha_1,\beta_1})=\phi_{\alpha_2,\beta_2}(\kappa_{\alpha_2,\beta_2})=\kappa_{\alpha_2,\beta_2},\]
which gives a contradiction.

\end{proof}

We continue with the proof of Lemma \ref{gros}. Recall that $0\leq
e_n\leq1$, hence $\mu_\delta$ is stochastically dominated by an
atom at $1$. $\mu_\delta$ is the invariant measure, since the
function $\phi_{\alpha,\beta}(\cdot)$ is increasing in $e$,
$\mu_\delta$ is stochastically dominated by the law of the chain
started at $1$ after one step:
\[\mu_\delta\preceq{\cal L}\left(\frac{\delta+\abs{a_{0}}^2}{\delta+\abs{b_{0}}^2+\abs{a_{0}}^2}\right)\preceq    {\cal L}\left(\frac{\abs{a_{0}}^2}{\abs{b_{0}}^2+\abs{a_{0}}^2}\right).\]
Thus, denoting by $\pi_0$ the law of
$\frac{\abs{a_{0}}^2}{\abs{b_{0}}^2+\abs{a_{0}}^2}$, and using
(H1),
\[\int-\log(1-x)d\mu_\delta(x)\leq\int-\log(1-x)d\pi_0(x)<\infty.\]
That is
\begin{align}
\label{L1} -\log(1-\cdot)\in \L^1(\mu_\delta).
\end{align}
With Proposition \ref{ergo}, we get
\begin{align}
\label{A}
\frac{1}{n}\sum_{k=2}^{n+1}-\log(1-e_k)\tendvers\int_0^1-\log(1-x)d\mu_\delta(x)\
\ \ \ \ \ \P_{e_2}-\textrm{a.s}.
\end{align}
With (\ref{goal}), it gives a proof of Lemma \ref{gros} a).

Let us prove Lemma \ref{gros} b). Take $\eta>0$ and $\ep>0$ small.
\begin{equation}
\label{trois}
\begin{split}
&\int_0^1-\log(1-x)d\mu_\delta(x)\\
&=\int_0^{\ep}-\log(1-x)d\mu_\delta(x)+\int_{\ep}^{1-\eta}-\log(1-x)d\mu_\delta(x)+\int_{1-\eta}^1-\log(1-x)d\mu_\delta(x)\\
&\leq-\ep\log(1-\ep)-\log\eta \mu_\delta([\ep,1])+\int_{1-\eta}^1-\log(1-x)d\mu_\delta(x).\\
\end{split}\end{equation}
By (\ref{L1}), the last term converges to 0 as $\eta$ goes to 0.
By (\ref{goal}), (\ref{A}) and (\ref{trois}), to prove Lemma
\ref{gros} b), we only have to prove that for any given $\ep>0$,
\[\mu_\delta([\ep,1])\xrightarrow[\delta\rightarrow0]{}0.\]
For that, by Proposition \ref{ergo}, we need to show that the
proportion of the time that the chain $e_n$ spends above $\ep$
converges to 0 as $\delta$ goes to 0. We take $0<\ep<\ep_0<1$,
where $\ep_0$ will be chosen later. We consider the Markov chain
$z_n\triangleq\log e_n$ and the random function $g_n$ such that
$z_n=g_n(z_{n-1})$. It is enough to show that the proportion of
the time that $z_n$ spends above $\log\ep$ goes to $0$ as $\delta$
goes to $0$. Let us couple $z_n$ with another Markov chain $w_n$,
such that $w_n\geq z_n$ a.s. and that the proportion of the time
that $w_n$ spends above $\log\ep$ goes to $0$ as $\delta$ goes to
$0$.

For that, we need good information on the jumps of $z_n$.

\begin{lem}
\label{jump} Assume (H1) and (H5). Set
\begin{align*}
j_n(z_{n-1})&\triangleq z_n-z_{n-1}\\
&=\log\left(\frac{\delta}{e^{z_{n-1}}}+\abs{a_{n-1}}^2\right)-\log\left(\delta+\abs{b_{n-1}}^2+\abs{a_{n-1}}^2e^{z_{n-1}}\right).
\end{align*}
$\left(\forall\delta>0\right)\left(\exists\ep'>0\right)\left(\forall
x\geq\log\ep'\right)$
\begin{enumerate}
    \item $\E j_n(x)\leq0$,
    \item $\var j_n(x)\leq V\triangleq\E\left(\left(\log(\abs{a_{n-1}}^2+\abs{b_{n-1}}^2)\right)^2+\left(\log(\abs{a_{n-1}}^2)\right)^2\right)+C$.
\end{enumerate}
C is a constant independent of everything. $\ep'$ is a function of
$\delta$ but we will not write it to keep the notation clear.
Moreover,
\[\lim_{\delta\rightarrow0}\ep'=0.\]
\end{lem}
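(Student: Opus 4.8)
The plan is to analyze the jump $j_n(x) = z_n - z_{n-1}$ directly from its explicit formula and show that when the chain sits at a point $x \geq \log \epsilon'$ with $\epsilon'$ small (depending on $\delta$), the increment has nonpositive mean and a variance controlled by second moments of $\log|a|^2$ and $\log|b|^2$ plus a universal constant. First I would rewrite
\[
j_n(x) = \log\!\left(\frac{\delta}{e^{x}} + |a_{n-1}|^2\right) - \log\!\left(\delta + |b_{n-1}|^2 + |a_{n-1}|^2 e^{x}\right),
\]
and split the analysis according to whether $x \leq 0$ (i.e. $e_{n-1} \leq 1$, which is always the case here) is close to $0$ or bounded away from it. The key qualitative point is that for $x$ not too negative, the $\delta/e^x$ term in the first logarithm is negligible compared to $|a_{n-1}|^2$, and the $\delta$ term in the second logarithm is negligible compared to $|b_{n-1}|^2$; so $j_n(x) \approx \log|a_{n-1}|^2 - \log(|b_{n-1}|^2 + |a_{n-1}|^2 e^x)$, and since $e^x \leq 1$, this is at most $\log|a_{n-1}|^2 - \log(|b_{n-1}|^2 + |a_{n-1}|^2 e^x) \leq \log|a_{n-1}|^2 - \log|a_{n-1}|^2 - \log e^x = -x \geq 0$ in one regime and dominated by $\log|a_{n-1}|^2 - \log|b_{n-1}|^2$ in the other.

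For the mean bound (item 1), I would argue as follows. Taking expectations and using $e^x \le 1$, one gets $\E j_n(x) \le \E\log|a_{n-1}|^2 - \E\log(|b_{n-1}|^2 + |a_{n-1}|^2 e^x) + (\text{correction from the }\delta\text{ terms})$. Dropping $|a_{n-1}|^2 e^x \ge 0$ inside the second log gives $\E j_n(x) \le \E\log|a_{n-1}|^2 - \E\log|b_{n-1}|^2 + (\text{correction})$, and by (H5) the difference $\E\log|a|^2 - \E\log|b|^2 \le 0$. The correction term coming from replacing $\frac{\delta}{e^x}+|a_{n-1}|^2$ by $|a_{n-1}|^2$ is a positive quantity of the form $\E\log(1 + \frac{\delta}{|a_{n-1}|^2 e^x})$, which is small but not obviously zero; this is exactly why $\epsilon'$ must depend on $\delta$. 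The argument is: fix $\delta$; as $x$ ranges over $[\log\epsilon', 0]$ with $\epsilon'$ chosen small, $e^x$ can be as small as $\epsilon'$, making $\frac{\delta}{e^x}$ potentially large — so actually one wants $\epsilon'$ not too small relative to $\delta$. I would instead choose $\epsilon'$ so that $\delta/\epsilon'$ is a fixed small constant (e.g. $\epsilon' = \sqrt{\delta}$ would give $\delta/\epsilon' = \sqrt\delta \to 0$), guaranteeing $\frac{\delta}{e^x} \le \frac{\delta}{\epsilon'} \to 0$; then the correction term is uniformly small by dominated convergence using (H1), and for $\delta$ small enough it is dominated by the strict part of the gap, or — if (H5) holds with equality — one uses the monotone term $-\log e^x = -x > 0$... but wait, that term has the wrong sign. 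Let me reconsider: actually the $\log e^x$ contribution works in our favor because $e^x \le 1$ makes $\log(|b|^2 + |a|^2 e^x) \le \log(|b|^2 + |a|^2)$, pushing $\E j_n$ down further; combined with (H5) this suffices. So the mean bound follows from (H5), the $e^x\le 1$ bound, and controlling the $\delta$-corrections via the choice $\epsilon' \sim \sqrt\delta$ and (H1).

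For the variance bound (item 2), I would use $\var j_n(x) \le \E (j_n(x))^2$ and the triangle-type inequality $(\log u - \log v)^2 \le 2(\log u)^2 + 2(\log v)^2$ applied to $u = \frac{\delta}{e^x} + |a_{n-1}|^2$ and $v = \delta + |b_{n-1}|^2 + |a_{n-1}|^2 e^x$. For $x \ge \log\epsilon'$ with $\delta/\epsilon'$ bounded, one has $|a_{n-1}|^2 \le u \le |a_{n-1}|^2 + c$ and $|b_{n-1}|^2 \le v \le |a_{n-1}|^2 + |b_{n-1}|^2 + c'$ for constants $c, c'$; so $(\log u)^2$ and $(\log v)^2$ are bounded by $(\log|a_{n-1}|^2)^2$, $(\log(|a_{n-1}|^2+|b_{n-1}|^2))^2$, and universal constants (using that for $t \ge 0$, $(\log(t+c))^2 \le 2(\log t)^2 + C_c$ when $t$ is bounded away from $0$, and is just bounded when $t$ is small). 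Taking expectations and invoking (H1) to ensure finiteness gives the stated bound with $V = \E[(\log(|a|^2+|b|^2))^2 + (\log|a|^2)^2] + C$. The main obstacle is the bookkeeping around the $\delta$-dependent cutoff $\epsilon'$: one must choose $\epsilon'(\delta)$ carefully so that simultaneously (i) $\delta/e^x$ stays bounded on $x \ge \log\epsilon'$ so all the "correction" terms are controlled, (ii) $\epsilon' \to 0$ as $\delta \to 0$ as required by the lemma, and (iii) the mean bound item 1 survives even in the degenerate case $\E\log|a|^2 = \E\log|b|^2$. Picking $\epsilon' = \sqrt{\delta}$ (or any $\epsilon'$ with $\delta/\epsilon' \to 0$ and $\epsilon' \to 0$) reconciles all three, and the rest is the routine logarithmic estimates sketched above.
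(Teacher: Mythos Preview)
Your approach to point 1 has a genuine gap. You try to bound $\E j_n(x)$ by splitting off the $\delta$-corrections and invoking (H5) to get $\E\log|a|^2 - \E\log|b|^2 \le 0$, with a positive correction $\E\log(1+\delta/(|a|^2 e^x))$ that you control by choosing $\ep' = \sqrt\delta$. The problem is the degenerate case $\E_{\pi_a}\log|x| = \E_{\pi_b}\log|x|$, which (H5) allows. At the worst point $x=\log\ep'$ your ``main term'' $\E\log|a|^2 - \E\log(|b|^2+|a|^2 e^x)$ is then of order $-\sqrt\delta\,\E[|a|^2/|b|^2]$, while the correction is of order $+\sqrt\delta\,\E[1/|a|^2]$; there is no reason the first dominates the second, so the sum need not be $\le 0$. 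Your own parenthetical (``but wait, that term has the wrong sign'') is precisely where the argument breaks, and the attempted fix does not work: observing $\log(|b|^2+|a|^2 e^x)\le\log(|b|^2+|a|^2)$ gives a \emph{lower} bound on $-\log(\cdot)$, not an upper bound on $\E j_n$.

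The paper's route bypasses all of this through one structural observation you are missing: $j_n(x)$ is nonincreasing in $x$ (both $\partial_x\log(\delta e^{-x}+|a|^2)$ and $-\partial_x\log(\delta+|b|^2+|a|^2 e^x)$ are $\le 0$). Given this, point 1 is immediate: $j_n(0)=\log(\delta+|a|^2)-\log(\delta+|a|^2+|b|^2)<0$ a.s., so $\E j_n(0)<0$, and by monotonicity and continuity there is $\ep'<1$ with $\E j_n(\log\ep')\le 0$, hence $\E j_n(x)\le 0$ for all $x\ge\log\ep'$. For $\ep'\to 0$: fix any $L<0$; at $\delta=0$ one has $\E j_n(L)=\E\log|a|^2-\E\log(|b|^2+|a|^2 e^L)<\E\log|a|^2-\E\log|b|^2\le 0$ by (H5), the first inequality being strict since $|a|^2 e^L>0$ a.s.; by continuity of $\E j_n(L)$ in $\delta$, the same holds for small $\delta$, forcing $\log\ep'<L$. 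This handles the degenerate case without any rate comparison.

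Your variance argument (point 2) is in the right spirit and essentially parallel to the paper's; the paper compresses the logarithmic bookkeeping into the single elementary inequality $(\log(t+1))^2\le(\log t)^2+C$ for $t>0$ and a universal $C$. The factor-of-two mismatch between your bound and the stated $V$ is harmless, since only finiteness and $\delta$-independence of $V$ are used downstream.
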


The proof will be done at the end of the section.

We continue with the proof of Lemma \ref{gros} b). We take
$\delta>0$ such that $0<\ep'<\ep<\ep_0<1$. We define $w_n$ in a
way that it stays between $\log \ep'$ and $0$. Set $w_0=z_0$, for
$\delta$ small enough, $w_0>\log{\ep'}$. For $x\in[\log\ep';0]$,
denote
\[h_n(x)=g_n(x)-\E j_n(x)\geq g_n(x).\]
That is
\begin{equation}
\label{hn}
\begin{split}
&h_n(x)=x+\log\left(\frac{\frac{\delta}{e^{x}}+\abs{a_{n-1}}^2}{\delta+\abs{b_{n-1}}^2+\abs{a_{n-1}}^2e^{x}}\right)-\\
&\ \ \ \ \ \ \ \ \ \ \ \ \ \ \ \ \ \ \ \
\E\log\left(\frac{\frac{\delta}{e^{x}}+\abs{a_{n-1}}^2}{\delta+\abs{b_{n-1}}^2+\abs{a_{n-1}}^2e^{x}}\right).
\end{split}
\end{equation}
Note that
\begin{align}
\label{mart} \E(h_n(z_{n-1})-z_{n-1}|z_{n-1})=0.
\end{align}
\begin{itemize}
    \item If $h_n(w_{n-1})>0$, set $w_{n}=0$.
    \item If $h_n(w_{n-1})<\log\ep'$, set $w_{n}=\log\ep'$.
    \item Otherwise, set $w_n=h_n(w_{n-1})$.
\end{itemize}

In the first two case, we say that the chain is \emph{truncated}.
Note that for all $n$, $w_n\geq z_n$. Indeed, either $w_n=0\geq
z_n$ or $w_n\geq h_n(w_{n-1})\geq g_n(w_{n-1})\geq
g_n(z_{n-1})=z_n$, by induction and using the fact that $g_n$ is
a.s non-decreasing. Therefore, the proportion of the time that the
chain $w_n$ spends above $\log\ep$ is larger that the proportion
of the time that chain $z_n$ spends above $\log\ep$.

\begin{prop}
\label{ergo2} Assume (H2).
\begin{enumerate}
    \item The Markov chain $w_n$ has a unique stationary probability, say, $\nu_\delta$ and for $s\in L^1(\nu_\delta)$, for every starting point $w_0\in[\log\ep',0]$, $\P_{w_0}$-a.s,
\[\frac{1}{n}\sum_{i=0}^{n}s(w_i)\tendvers\int s d\nu_\delta.\]
  \item We denote $T$ the return time to $0$, starting from $0$. Then $\nu_\delta(0)=1/\E_0 T$.
\end{enumerate}
\end{prop}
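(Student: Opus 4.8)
The plan is to run the same Harris–chain machinery already used for Proposition~\ref{ergo}, the simplification being that the truncation built into the definition of $w_n$ manufactures a genuine atom at $0$, so that part~2 becomes an instance of Kac's formula. Throughout I would use that $w_n$ takes its values in the compact interval $[\log\ep',0]$, and that for $x\in[\log\ep',0]$ one has $\P_x(w_1=0)=\P(h_n(x)>0)$ with $h_n(x)=x+j_n(x)-\E j_n(x)$ as in \eqref{hn}, while $\P_x(w_1=\log\ep')=\P(h_n(x)<\log\ep')$ and otherwise $w_1=h_n(x)$.

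For part~1 I would first establish that the singleton $\{0\}$ is an atom that is accessible from every starting point in a bounded number of steps, with a probability bounded away from $0$. Near $0$ this is a one-step statement: a direct computation gives $\E j_n(0)<0$ (indeed $j_n(0)=\log\big((\delta+\abs{a_{n-1}}^2)/(\delta+\abs{b_{n-1}}^2+\abs{a_{n-1}}^2)\big)<0$ a.s.\ under (H2)), and since, by (H2), $j_n(x)$ is a non-degenerate absolutely continuous random variable whose essential supremum exceeds its mean, continuity in $x$ of $x\mapsto\E j_n(x)$ and of $x\mapsto\operatorname{ess\,sup}j_n(x)$ shows that $\P(h_n(x)>0)>0$ for $x$ in a neighbourhood of $0$. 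From an arbitrary $x\in[\log\ep',0]$ one reaches such a neighbourhood with positive probability because the law of $h_n(x)$ has an absolutely continuous component charging a non-trivial interval (again (H2)), Lemma~\ref{jump} providing the quantitative control of the jumps $j_n$ that guarantees enough room for this. By the continuity in $x$ of the relevant probabilities (via (H2)) and compactness of $[\log\ep',0]$, there are $N\in\N$ and $p>0$ with $\inf_x\P_x(\exists n\le N:\ w_n=0)\ge p$; hence the return time $T$ to $0$ is stochastically dominated by $N$ times a geometric variable of parameter $p$, so $\sup_x\E_x T\le N/p<\infty$, and the chain is $\delta_0$-irreducible. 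Exactly as in the proof of Proposition~\ref{ergo} this yields, via \cite[Theorems~10.0.1, 13.0.1 and 17.0.1]{harris}, that $w_n$ is positive Harris with a unique invariant probability $\nu_\delta$ and satisfies the stated ergodic theorem, proving part~1.

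For part~2 I would invoke Kac's formula for a Harris chain with an accessible atom (\cite[Theorem~10.2.2]{harris}); equivalently, one checks directly that $A\mapsto\E_0\big(\sum_{n=0}^{T-1}\1_A(w_n)\big)$ is an invariant measure of total mass $\E_0 T<\infty$, hence, by the uniqueness from part~1, equals $\E_0T\cdot\nu_\delta$. Evaluating this identity at $A=\{0\}$ and noting that, since $T=\inf\{n\ge1:\ w_n=0\}$ and $w_0=0$, the excursion sum $\sum_{n=0}^{T-1}\1_{\{0\}}(w_n)$ is exactly $1$, one gets $\nu_\delta(\{0\})=1/\E_0 T$, which is the claim.

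The step I expect to be the main obstacle is the accessibility of the atom $\{0\}$ \emph{uniformly} over all of $[\log\ep',0]$, in particular from the lower endpoint $x=\log\ep'$, where the state is as far as $\log(1/\ep')$ below $0$: this is precisely the point where one must combine the absolute-continuity and spreading afforded by (H2) with the quantitative control of the jump law $j_n(x)$ from Lemma~\ref{jump} (and, ultimately, with the heaviness of the fading amplitudes coming from (H3) or (H3')) in order to be sure the truncated chain can climb back up to $0$ with probability bounded away from $0$. Everything else is a routine transcription of the arguments already carried out for the chain $e_n$.
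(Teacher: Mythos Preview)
Your overall architecture matches the paper's: establish that $\{0\}$ is an atom uniformly accessible from $[\log\ep',0]$, apply the Harris machinery (\cite[Theorems 10.0.1, 10.2.2, 17.0.1]{harris}) to get positive Harris recurrence and the ergodic theorem, then read off part~2 as Kac's formula. Part~2 of your sketch is exactly what the paper does (it applies \cite[Theorem 10.0.1]{harris} to $A=\{0\}$, which is the excursion-measure identity you wrote out).

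The one place where you and the paper diverge is precisely the step you flag as the main obstacle, and here the paper's argument is cleaner than what you propose. You split the accessibility into ``near $0$'' versus ``far from $0$'' and suggest that Lemma~\ref{jump}, and possibly the tail conditions (H3)/(H3'), are needed to climb up from $\log\ep'$. Neither is necessary. The paper instead proves a dedicated lemma (Lemma~\ref{rec0}): because $h_n$ was built so that $\E(h_n(x)-x)=0$ for \emph{every} $x\in[\log\ep',0]$, and because (H2) makes $h_n(x)-x$ non-degenerate with law depending continuously on $x$, compactness yields uniform constants $c>0$, $\theta>0$ with $\P(h_n(x)\geq x+c)>\theta$ for all $x$. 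Then $N=\lceil-\log\ep'/c\rceil$ consecutive up-jumps of size $\geq c$ force truncation at $0$, so the hitting time of $0$ is dominated by $N$ times a geometric of parameter $\theta^N$, uniformly in the starting point. This uses only (H2); Lemma~\ref{jump} plays no role here (it is used later, for the martingale estimate on $\E_0 T$), and (H3)/(H3') are not invoked in this proposition at all. Your two-zone argument would also work, but the uniform ``drift-up'' observation removes the obstacle you anticipated.
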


\begin{proof}
See \cite{harris} and Definitions \ref{irreduc}-\ref{regular} for
the theory of Harris Markov chains that we will use extensively in
the proof. Define the following probability measure on
$[\log\ep',0]$. For $B$ a Borel set,
\[\overline{l}(B)\triangleq\sum_{n=0}^{\infty}\frac{1}{2^{n+1}}\P_0(w_n\in B).\]
Let us prove that the Markov chain $w_n$ is
$\overline{l}$-irreducible, positive Harris with invariant
probability $\nu_\delta$. By \cite[Theorem 17.0.1]{harris}, that
will prove a). We use the following lemma that will be proved
later on.

\begin{lem}
\label{rec0} Assume (H2).
\begin{enumerate}
    \item There exist $c>0$ and $\theta>0$ such that for all $x\in[\log\ep';0]$,
      \[\P\left(h_n(x)\geq x+c\right)>\theta.\]
  \item Set $N=\left\lceil\frac{-\log\ep'}{c}\right\rceil$. 0 is a recurrent point for the chain $w_n$ and the time between two visits at 0 is a.s bounded above by $N$ times a geometric random variable of parameter $\theta^N$.
\end{enumerate}
\end{lem}

We continue with the proof of Lemma \ref{ergo2}. The sets which
have positive $\overline{l}$-measure are exactly the sets that
have a positive probability to be visited starting from 0.
Moreover 0 is a recurrent point. Therefore, the Markov chain $w_n$
is $\overline{l}$-irreducible and $\overline{l}$ is a maximal
irreducibility measure. Moreover, take $B$ with positive
$\overline{l}$-measure, $B$ is uniformly accessible from $\{0\}$.
Therefore, we can apply \cite[Theorem 9.1.3 (i)]{harris} and since
0 is Harris-recurrent, $B$ is also Harris-recurrent, therefore,
the chain $w_n$ is Harris-recurrent. By Lemma \ref{rec0} b), the
time between two visits at 0 has finite expectation (bounded above
by $N/\theta^n$). Therefore, by \cite[Theorem 10.2.2]{harris}, the
chain $w_n$ is positive-Harris and admits a unique invariant
probability measure. That finishes the proof of point a). The
point b) is a consequence of
\[1=\nu_\delta([\log\ep',0])=\nu_\delta(0)\E_0[T],\]
which comes from \cite[Theorem 10.0.1]{harris}, which we apply to
$A=\{0\}$, which has positive $\overline{l}$-measure.
\end{proof}

\begin{proof}[Proof of Lemma \ref{rec0}]
a) We consider here $\delta\geq0$. We denote by $\supp(X)$ the
support of the law of a random variable $X$. We take $\delta_0$
small enough. We consider for $x\in[\log\ep';0]$ the function
\[\phi(x)=\max\{y;y\in\supp(h_n(x)-x)\},\]
which by (H2) and (\ref{hn}) is a continuous function of $x$.
Moreover, since $\E(h_n(x)-x)=0$, $\phi$ is strictly positive. By
compactness, there exists $c>0$ such that for $x\in[\log\ep';0]$,
\[\phi(x)>2c,\]
\[\P(h_n(x)\geq x+c)>0.\]
By (H2) and (\ref{hn}), $\P(h_n(x)\geq x+c)$ is continuous and
once again, by compactness, there exists $\theta>0$ such that for
$x\in[\log\ep';0]$  ,
\[\P(h_n(x)\geq x+c)>\theta.\]

b) If there are at least $N$ steps in a row such that
$h_n(w_{n-1})\geq x+c$, then the chain reaches $0$. By the point
a), that happens with probability at least $\theta^N>0$, hence $0$
is a recurrent point for the chain $w_n$ and the time between two
visits at 0 is a.s bounded above by $N$ times a geometric random
variable of parameter $\theta^N$.
\end{proof}

We continue with the proof of Lemma \ref{gros} b). By Proposition
\ref{ergo2} a), to prove that the proportion of the time that
$w_n$ spends above $\log\ep$ goes to $0$ as $\delta$ goes to $0$,
we only need to prove that
\[\nu_\delta([\log\ep,0])\xrightarrow[\delta\rightarrow0]{}0.\]
Let us first prove that $\E
T\xrightarrow[\delta\rightarrow0]{}\infty$, which by Proposition
\ref{ergo2} b) will prove that
\[\nu_\delta(0)\xrightarrow[\delta\rightarrow0]{}0.\] We use the
following lemma.

\begin{lem}
\label{technic} Assume (H2).
\begin{enumerate}
    \item There exist $u>0$ and $\alpha>0$ dependent on $\ep$ and independent of $\delta$ such that for all $x\in[2\log\ep;0]$,
      \[\P\left(h_n(x)\geq x+u\right)>\alpha.\]
  \item There exist $v>0$ and $\beta>0$ dependent on $\ep$ and independent of $\delta$ such that
      \[\P\left(\log\ep<h_1(0)<-v\right)>\beta.\]
\end{enumerate}
\end{lem}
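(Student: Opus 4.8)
\emph{Proof strategy for Lemma~\ref{technic}.} We treat the two parts separately. Part~1 is the $\delta$-uniform version of Lemma~\ref{rec0}(a), and part~2 is obtained from the elementary identity
\[
\P(\log\ep<h_1(0)<-v)=\P(h_1(0)<-v)-\P(h_1(0)\le\log\ep).
\]
The one genuinely delicate point is uniformity in $\delta$: the constants produced by Lemma~\ref{rec0} degrade as $\delta\to0$ because the window $[\log\ep';0]$ is then unbounded, so we must work on the $\delta$-free window $[2\log\ep;0]$ instead. Throughout we use that (H1) is in force, so that $\log\abs{a_{n-1}}^2,\log\abs{b_{n-1}}^2\in\L^1$, and that under (H2) the variables $\abs{a_{n-1}}^2$, $\abs{b_{n-1}}^2$ are a.s.\ positive and non-degenerate.

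For part~1 I would run the argument of Lemma~\ref{rec0}(a) verbatim, but on the compact set $[2\log\ep;0]\times[0,\delta_0]$ of parameters $(x,\delta)$. Writing $j_n^{(\delta)}(x)=\log(\delta e^{-x}+\abs{a_{n-1}}^2)-\log(\delta+\abs{b_{n-1}}^2+\abs{a_{n-1}}^2 e^x)$ as in \eqref{hn}, so that $h_n^{(\delta)}(x)-x=j_n^{(\delta)}(x)-\E j_n^{(\delta)}(x)$ has mean zero, one checks by dominated convergence (with a dominator built from $\abs{\log\abs{a_{n-1}}^2}$ and $\abs{\log\abs{b_{n-1}}^2}$, integrable by (H1)) that $(x,\delta)\mapsto\E j_n^{(\delta)}(x)$ is finite and continuous; a direct computation shows that $(x,\delta)\mapsto\max\{y:y\in\supp j_n^{(\delta)}(x)\}$ equals $-x$ or the value of $j_n^{(\delta)}(x)$ at the essential extremes of $\abs{a_{n-1}}^2,\abs{b_{n-1}}^2$, and is in either case continuous, so $(x,\delta)\mapsto\max\{y:y\in\supp(h_n^{(\delta)}(x)-x)\}$ is continuous. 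Because $j_n^{(\delta)}(x)$ is strictly decreasing in $\abs{b_{n-1}}^2$, the variable $h_n^{(\delta)}(x)-x$ is non-degenerate with zero mean, so that maximum is strictly positive, hence bounded below by some $2u>0$ on the compact set; therefore $\P(h_n^{(\delta)}(x)\ge x+u)>0$ for every $(x,\delta)$. Finally this probability, written as $\E\,\1_{\{j_n^{(\delta)}(x)\ge u+\E j_n^{(\delta)}(x)\}}$, is continuous in $(x,\delta)$ — the boundary $\{j_n^{(\delta)}(x)=c\}$ is the zero set of a polynomial in $\abs{a_{n-1}}^2,\abs{b_{n-1}}^2$ that is not identically zero, hence Lebesgue-null by (H2) and Lemma~\ref{giab} — so by compactness it is bounded below by some $\alpha>0$.

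For part~2, write $h_1(0)=j_1(0)-\E j_1(0)$ with $j_1(0)=\log\frac{\delta+\abs{a_0}^2}{\delta+\abs{b_0}^2+\abs{a_0}^2}\le 0$, and note that $\E j_1(0)=-B(\delta)$, where $B(\delta)=\E\log\big(1+\abs{b_0}^2/(\delta+\abs{a_0}^2)\big)$ is non-increasing in $\delta$ and satisfies $0<B(\delta)\le B_0:=B(0)<\infty$, finiteness of $B_0$ coming from (H1). Since $\abs{b_0}^2/\abs{a_0}^2$ is positive and non-degenerate, there is a level $r^*>0$ with $\log(1+r^*)>B_0$ and $\P(\abs{b_0}^2/\abs{a_0}^2>r^*)>0$; fix $v\in(0,\log(1+r^*)-B_0)$ and then $\delta_0>0$ small enough (by continuity of measure) that $3\beta_1:=\P(\abs{b_0}^2>r^*(\delta_0+\abs{a_0}^2))>0$. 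For $\delta\le\delta_0$, using $B(\delta)\le B_0$, $e^{B_0+v}-1<r^*$ and $\delta\le\delta_0$,
\[
\P(h_1(0)<-v)=\P\big(j_1(0)<-B(\delta)-v\big)\ge\P\big(j_1(0)<-B_0-v\big)=\P\big(\abs{b_0}^2>(e^{B_0+v}-1)(\delta+\abs{a_0}^2)\big)\ge 3\beta_1 .
\]
On the other hand $\{h_1(0)\le\log\ep\}\subseteq\{j_1(0)\le\log\ep\}\subseteq\{(1-\ep)\abs{a_0}^2\le\ep\abs{b_0}^2\}$, whose probability tends to $0$ as $\ep\to0$ since $\abs{a_0}^2>0$ a.s.; so, shrinking $\ep_0$ if necessary, $\P(h_1(0)\le\log\ep)<\beta_1$ and $v<\log(1/\ep)$ whenever $\ep<\ep_0$. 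Combining the two displays gives $\P(\log\ep<h_1(0)<-v)\ge 3\beta_1-\beta_1=:\beta>0$, with $v,\beta$ independent of $\delta$.

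The main obstacle is exactly the $\delta$-uniformity just described; once one commits to the window $[2\log\ep;0]$ together with joint compactness in $(x,\delta)$ in part~1, and to the monotonicity $B(\delta)\le B_0$ together with a null-set estimate in part~2, the remaining work — finiteness and continuity of the relevant expectations (from (H1)) and nullity of the level sets of $j_n^{(\delta)}(x)$ (from (H2) and Lemma~\ref{giab}) — is routine.
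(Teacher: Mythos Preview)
Your proof is correct. Part~1 coincides with the paper's argument: both work on the compact set $[2\log\ep,0]\times[0,\delta_0]$ and obtain $u$ and $\alpha$ from continuity of $(x,\delta)\mapsto\max\supp(h_n(x)-x)$ and of $(x,\delta)\mapsto\P(h_n(x)\ge x+u)$, using (H2).

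Part~2, however, follows a genuinely different route. The paper argues purely by soft continuity--compactness: for each fixed $\delta\in[0,\delta_0]$ the non-degeneracy of $h_1(0)$ (from (H2)) gives some $\ep_0(\delta),v(\delta)$ with $\P(\log\ep_0<h_1(0)<-v)>0$; these choices and the resulting probability vary continuously in $\delta$, so compactness of $[0,\delta_0]$ yields $\delta$-independent $\ep_0,v,\beta$. You instead decompose $\P(\log\ep<h_1(0)<-v)=\P(h_1(0)<-v)-\P(h_1(0)\le\log\ep)$ and bound each piece explicitly via the monotonicity $B(\delta)\le B_0$ and a tail estimate on $|b_0|^2/|a_0|^2$. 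Your approach is more constructive and yields an explicit $v$ and $\beta$ in terms of the laws $\pi_a,\pi_b$, at the cost of invoking (H1) to ensure $B_0<\infty$ (which is harmless here since (H1) is already needed to define $h_n$ at $\delta=0$). The paper's version is shorter but less quantitative. One small imprecision in your part~1: the claim that $\max\supp j_n^{(\delta)}(x)$ ``equals $-x$ or the value at the essential extremes'' glosses over the mixed cases (e.g.\ $\operatorname{ess\,sup}|a|^2=\infty$ with $\operatorname{ess\,inf}|b|^2>0$), but the conclusion you need --- continuity in $(x,\delta)$ --- holds in every case.
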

The lemma will be proved later on.

We continue with the proof of Lemma \ref{gros} b). We denote
${\cal A}$ the event $\log\ep<h_1(0)<-v$. On ${\cal A}$, we define
the stopping time
\[\tilde{T}=1+\inf\{n\geq1 ; h_{n+1}(w_{n})>0\ \textrm{or}\ h_{n+1}(w_{n})<\log\ep'\}.\]
We now condition on the event ${\cal A}$ and on $x_0=h_1(0)$,
denote by $\tilde{\P}$ and $\tilde{\E}$ the associated probability
and expectation. $\tilde{T}\leq T$ is the first time the chain is
truncated. Moreover, for $n<\tilde{T}$, $w_n=h_n(w_{n-1})$, so
with (\ref{mart}), by classical martingale arguments,
\[\tilde{\E}\left(h_{\tilde{T}}(w_{\tilde{T}-1})\right)=x_0.\]

We denote by ${\cal A}_0$ the event that $w_n$ reaches $\log\ep'$
before $0$, we set $p=\tilde{\P}({\cal A}_0)$,
$X_0=\tilde{\E}\left(h_{\tilde{T}}(w_{\tilde{T}-1})|{\cal
A}_0^c\right)$ and
$X_1=\tilde{\E}\left(h_{\tilde{T}}(w_{\tilde{T}-1})|{\cal
A}_0\right)$.
\[x_0=pX_1+(1-p)X_0.\]
\[p=\frac{X_0-x_0}{X_0-X_1}.\]
$X_0\geq0$ and $X_1\leq\log\ep'\leq\log\ep<x_0<-v$ hence,
\begin{align}
\label{situveux} p\geq\frac{-x_0}{-X_1}\geq\frac{v}{-X_1}.
\end{align}
Using $X_1\leq \log\ep'$, (\ref{situveux}) and $w_n^2-Vn$, which
is a super-martingale by Lemma \ref{jump} b),
\begin{align*}
\tilde{\E}(T)\geq\tilde{\E}(\tilde{T})&\geq\frac{\tilde{\E}\left(h_{\tilde{T}}(w_{\tilde{T}-1})\right)^2-x_0^2}{V}\\
&\geq\frac{pX_1^2+(1-p)X_0^2-x_0^2}{V}\\
&\geq\frac{pX_1^2-x_0^2}{V}\\
&\geq\frac{v(-X_1)-x_0^2}{V}\\
&\geq\frac{v(-\log\ep')-x_0^2}{V}.\\
\end{align*}
We integrate over $x_0$ and use $\P({\cal A})>\beta$ and
$\E(h_1(0)^2|{\cal A})<(\log\ep)^2$.
\[\E(T|{\cal A})\geq\frac{v(-\log\ep')-\E(h_1(0)^2|{\cal A})}{V}.\]
\[\E(T)\geq\beta\frac{v(-\log\ep')-(\log\ep)^2}{V}.\]

We have proved that $\E
T\xrightarrow[\delta\rightarrow0]{}\infty$, which proves that
$\nu_\delta(0)\xrightarrow[\delta\rightarrow0]{}0$.

Using Lemma \ref{technic} and the invariance of $\nu_\delta$, let
us prove by induction that for ${\cal
N}\leq\left\lceil\frac{-\log\ep}{u}\right\rceil$,
\[\nu_\delta\left([-{\cal N}u;0]\right)\leq \alpha^{-{\cal N}}\nu_\delta(0).\]
\begin{align*}
\nu_\delta([-({\cal N}-1)u;0])&\geq\int\nu_\delta(dw_0)\P_{w_0}(w_1\in[-({\cal N}-1)u;0])\\
&\geq\int_{[-{\cal N}u;0]}\nu_\delta(dw_0)\P_{w_0}(w_1\in[-({\cal N}-1)u;0])\\
&\geq\int_{[-{\cal N}u;0]}\nu_\delta(dw_0)\P_{w_0}(h_1(w_0)\geq u+w_0)\\
&\geq\alpha\nu_\delta([-{\cal N}u;0]).
\end{align*}

Therefore,
\[\nu_\delta\left([\log\ep;0]\right)\leq\alpha^{\left\lceil\frac{-\log\ep}{u}\right\rceil}\nu_\delta(0).\]
So,
\[\nu_\delta\left([\log\ep;0]\right)\xrightarrow[\delta\rightarrow0]{}0.\]
That concludes the proof of Lemma \ref{gros} b).

\begin{proof}[Proof of Lemma \ref{technic}]
We consider here $\delta\geq0$. We denote by $\supp(X)$ the
support of the law of a random variable $X$. We take $\delta_0$
small enough.

a) We consider for $x\in[2\log\ep;0]$ and
$0\leq\delta\leq\delta_0$ the function
\[\phi(x,\delta)=\max\{y;y\in\supp(h_n(x)-x)\},\]
which by (H2) is a continuous function of $(x,\delta)$ because
$(h_n(x)-x)$ is continuous in $(x,\delta)$. Moreover, since
$\E(h_n(x)-x)=0$, $\phi$ is strictly positive. By compactness,
there exists $u>0$ such that for $x\in[2\log\ep;0]$ and
$0\leq\delta\leq\delta_0$,
\[\phi(x,\delta)>2u,\]
\[\P(h_n(x)\geq x+u)>0.\]
By (H2), $\P(h_n(x)\geq x+u)$ is continuous and once again, by
compactness, there exists $\alpha>0$ such that for
$x\in[2\log\ep;0]$ and $0\leq\delta\leq\delta_0$,
\[\P(h_n(x)\geq x+u)>\alpha.\]
b) For all $0\leq\delta\leq \delta_0$, there exist $\ep_0>0$ and
$v>0$ such that $\P(\log\ep_0<h_1(0)<-v)>0$. Like in the proof of
a, by (H2), we can chose $\ep_0>0$ and $v>0$ continuous in
$\delta$. By compactness, we can chose $\ep_0>0$ and $v>0$
independent of $\delta$ such that for all $0\leq\delta\leq
\delta_0$, $\P(\log\ep_0<h_1(0)<-v)>0$ and like in the proof of
a), by (H2), that probability can be chosen continuous in
$\delta$. Therefore, by compactness again, there exists $\beta>0$
dependent on $\ep$ and independent of $\delta$ such that
$\P(\log\ep_0<h_1(0)<-v)>\beta$. Take $\ep<\ep_0$, we have
$\P(\log\ep<h_1(0)<-v)>\beta$.
\end{proof}

\begin{proof}[Proof of Lemma \ref{jump}]
Note that by (H1), $V<\infty$. $j_n(x)$ is a non-increasing
continuous function of $x$ and so is $\E j_n(x)$. $\E j_n(0)<0$,
hence given $\delta$, there exist $0<\ep'<1$ such that $\E
j_n(\log\ep')\leq0$, and for $x\geq\log\ep'$, $\E j_n(x)\leq0$.
That gives point 1. For point 2, take $C$ such that for all
$x\geq0$,
\[(\log(x+1))^2\leq(\log(x))^2+C.\]
To prove that $\lim_{\delta\rightarrow0}\ep'=0$, it is enough to
prove that for a given $L<0$, we can find $\delta$ small enough
such that $\E j_n(L)\leq0$. That is true because for a given $L$,
$\E j_n(x)$ is a continuous function of $\delta$ which, by (H4) is
negative for $\delta=0$.
\end{proof}

\subsection{Proof of Theorem \ref{principalKtext}}
\label{sec:KUsersModel}

We reformulate the problem in the spirit of Appendix
\ref{sec:DefinitionsAndMainResult}. Let $K>1$. The $a_i$ (resp.
$b_i$) are now independent complex vectors of size $K$ whose
coefficients are independent and distributed according to $\pi_a$
(resp. $\pi_b$). We denote by $\P$ the probability associated with
those random sequences and by $\E$ the associated expectation. We
consider the following $M\times K(M+1)$ channel transfer matrix:
\[\Mat{H}_M=
\begin{pmatrix}
\vct{a}_1&\vct{b}_1&0&\cdots&0\\
0&\ddots&\ddots&\ddots&\vdots\\
\vdots&\ddots&\ddots&\ddots&0\\
0&\cdots&0&\vct{a}_M&\vct{b}_M\\
\end{pmatrix}.\]

We consider the following variable
\[\mathcal{C}_{M}(P)=\frac{1}{M}\tr\left\{\log\left(I+\frac{P}{K}\Mat{H}_M\Mat{H}_M^\dagger\right)\right\},\]
where $P=K\rho$. Note that,
\[\Mat{H}_M\Mat{H}_M^\dagger=\begin{pmatrix}
\abs{\vct{a}_1}^2+\abs{\vct{b}_1}^2&<\vct{a}_2;\vct{b}_1>&0&\cdots&0\\
<\vct{a}_2;\vct{b}_1>^\dagger&\abs{\vct{a}_2}^2+\abs{\vct{b}_2}^2&<\vct{a}_3;\vct{b}_2>&\ddots&\vdots\\
0&\ddots&\ddots&\ddots&0\\
\vdots&\ddots&\ddots&\ddots&<\vct{a}_M;\vct{b}_{M-1}>\\
0&\cdots&0&<\vct{a}_M;\vct{b}_{M-1}>^\dagger&\abs{\vct{a}_M}^2+\abs{\vct{b}_M}^2\\
\end{pmatrix},\]
where $\abs{\vct{a}_i}^2=\sum_{k=1}^K\abs{a_{i,k}}^2$ and
$<\vct{a}_i,\vct{b}_j>=\sum_{k=1}^K (a_{i,k})^\dagger b_{j,k}$.

\begin{thm}
\label{principalK} Assume (H1), (H2) and (H4)
\begin{enumerate}
    \item For every $\rho>0$, $\mathcal{C}_M(P)$
        converges $\P$-a.s as $M$ goes to infinity. We call the limit $\mathcal{C}(P)$.
    \item As $P$ goes to infinity,
        \[\mathcal{C}(P)=\log P+
        \E\log\left(\frac{e+\abs{\vct{b}}^2}{K}\right)+o(1),\]
    where the expectation is taken in the following way. $e$ and $b$ are independent. $\vct{b}$ is a complex $K$-vector whose coefficients are independent and distributed according to $\pi_b$. The law of $e$ is $m_0$, which is the unique invariant probability of the Markov chain defined by
    \[e_{n+1}=\abs{\vct{a}_n}^2\left(\frac{e_n+\abs{\vct{b}_{n-1}}^2\sin^2(\vct{a}_n,\vct{b}_{n-1})}{e_n+\abs{\vct{b}_{n-1}}^2}\right).\]
\end{enumerate}
\end{thm}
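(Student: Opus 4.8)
The plan is to reproduce, for $K>1$, the argument that proves Theorem \ref{principal}. Part a) (a.s.\ convergence of $\mathcal{C}_M(P)$) is again a consequence of the general product-of-random-matrices results of Appendix \ref{sec:ProductOfRandomMatrices}: the eigenvalue equation for $\Mat{H}_M\Mat{H}_M^\dagger$ remains a second-order linear recursion, hence governed by a product of $2\times2$ transfer matrices — here, as for $K=1$, a stationary ergodic (indeed finitely dependent) sequence — to which that appendix applies. So the substance is part b).

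For part b) I would first redo the reduction of Sections \ref{sec:AuxiliarySequence}--\ref{sec:ReductionToAMarkovChain} (equivalently, Narula's LDU recursion, cf.\ Section \ref{sec: Background}). Put $\lambda=-\delta$ with $\delta=K/P$. From $x_0=0$, $x_1=1$ and the recursion driven by $\Mat{H}_M\Mat{H}_M^\dagger$ one obtains, $\P$-a.s.,
\[x_{M+1}(\lambda)=\prod_{i=1}^M\langle\vct{a}_{i+1};\vct{b}_i\rangle^{-1}\prod_{i=1}^M(\lambda-\mu_i),\]
with $\mu_1,\dots,\mu_M\ge0$ the eigenvalues of $\Mat{H}_M\Mat{H}_M^\dagger$; one only needs $\langle\vct{a}_{i+1};\vct{b}_i\rangle\neq0$ $\P$-a.s.\ (from (H2), by the reasoning of Lemmas \ref{discriminant}--\ref{giab}), and $\lambda<0$ makes $x_n(\lambda)\neq0$ automatic since the leading principal submatrices of $\Mat{H}_M\Mat{H}_M^\dagger$ are SPD. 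Setting $c_n=x_n/x_{n-1}$, $d_n=c_n\langle\vct{a}_n;\vct{b}_{n-1}\rangle$ gives $d_{n+1}=\lambda-\abs{\vct{a}_n}^2-\abs{\vct{b}_n}^2-\abs{\langle\vct{a}_n;\vct{b}_{n-1}\rangle}^2/d_n$; using $\abs{\langle\vct{a}_n;\vct{b}_{n-1}\rangle}^2=\abs{\vct{a}_n}^2\abs{\vct{b}_{n-1}}^2\bigl(1-\sin^2(\vct{a}_n,\vct{b}_{n-1})\bigr)$ and substituting $d_n=-(e_n+\abs{\vct{b}_{n-1}}^2)$ (one checks inductively $d_n<0$, so $e_n>0$) this becomes precisely
\[e_{n+1}=\delta+\abs{\vct{a}_n}^2\,\frac{e_n+\abs{\vct{b}_{n-1}}^2\sin^2(\vct{a}_n,\vct{b}_{n-1})}{e_n+\abs{\vct{b}_{n-1}}^2},\]
that is, (\ref{eq-130707a}) with an extra $+\delta$; it is a genuine Markov chain on $[0,\infty)$ driven by the i.i.d.\ pairs $(\vct{a}_n,\vct{b}_{n-1})$, which are independent of $e_n$. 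Since $\log\abs{c_n}=\log(e_n+\abs{\vct{b}_{n-1}}^2)-\log\abs{\langle\vct{a}_n;\vct{b}_{n-1}\rangle}$ and $\mathcal{C}_M(P)=\log(P/K)+\tfrac1M\log\abs{x_{M+1}(\lambda)}+\tfrac1M\sum_{i=1}^M\log\abs{\langle\vct{a}_{i+1};\vct{b}_i\rangle}$, the inner-product contributions cancel exactly after shifting the summation index, leaving
\[\mathcal{C}_M(P)=\log\frac{P}{K}+\frac1M\sum_i\log\bigl(e_i+\abs{\vct{b}_{i-1}}^2\bigr)\qquad\P\text{-a.s.}\]

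It then remains to show: (i) for each fixed $\delta>0$ the chain $e_n$ has a unique invariant probability $m_\delta$, with $\tfrac1M\sum_i\log(e_i+\abs{\vct{b}_{i-1}}^2)\to\iint\log(e+\abs{\vct{b}}^2)\,m_\delta(de)\,\pi_b^{\otimes K}(d\vct{b})$ $\P$-a.s.; and (ii) $m_\delta\Rightarrow m_0$ as $\delta\downarrow0$, with enough uniform integrability to pass the double integral to the limit. For (i) I would run the Harris-chain program of Proposition \ref{ergo}. The structural fact that tames the noncompactness of $[0,\infty)$ is the one-step ``reset'' $0<e_{n+1}\le\delta+\abs{\vct{a}_n}^2$, whose law does not depend on $e_n$: it makes the chain return geometrically fast to a set $[r,R]$ (pick $R$ with $\P(\delta+\abs{\vct{a}}^2\le R)$ near $1$, and $r$ small, using $e_{n+1}\ge\abs{\vct{a}_n}^2\abs{\vct{b}_{n-1}}^2\sin^2(\vct{a}_n,\vct{b}_{n-1})/(e_n+\abs{\vct{b}_{n-1}}^2)>0$ for the lower end), and on $[r,R]$, (H2) together with smoothness of $(\vct{a}_n,\vct{b}_{n-1})\mapsto e_{n+1}$ yields a minorization; (H4) enters — as in the $K=1$ analysis — to secure irreducibility, aperiodicity and non-atomicity of $m_\delta$, whence positive Harris recurrence and the ergodic theorem. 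For (ii), $e\mapsto\delta+\abs{\vct{a}}^2(e+\abs{\vct{b}}^2\sin^2(\vct{a},\vct{b}))/(e+\abs{\vct{b}}^2)$ is increasing in $e$ with $e$-derivative independent of $\delta$, so iteration is monotone, $m_\delta$ is the law of the pull-back attractor for every $\delta\ge0$, and $m_\delta\Rightarrow m_0$ as $\delta\downarrow0$; and since $\log\abs{\vct{b}}^2\le\log(e+\abs{\vct{b}}^2)\le\log^+e+\log^+\abs{\vct{b}}^2+\log2$ with $e\preceq\delta+\abs{\vct{a}}^2$, the double integral is uniformly integrable for $\delta\le1$ by (H1). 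Combining, $\mathcal{C}(P)=\log P-\log K+\iint\log(e+\abs{\vct{b}}^2)\,m_\delta\otimes\pi_b^{\otimes K}$, and $\delta=K/P\to0$ gives $\mathcal{C}(P)=\log P+\E\log\bigl((e+\abs{\vct{b}}^2)/K\bigr)+o(1)$ with $e\sim m_0$.

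The step I expect to be the main obstacle is (i): positive Harris recurrence of $e_n$ on the \emph{noncompact} half-line, which in the $K=1$ case was sidestepped because there $e_n\in[0,1]$. One must combine the ``reset'' near $e=\infty$ with a non-degeneracy estimate near $e=0$ — where $e_{n+1}\to\abs{\vct{a}_n}^2\sin^2(\vct{a}_n,\vct{b}_{n-1})$, so the chain has to be shown to be pushed upward with enough probability — and feed both into the Harris minorization; pinning down precisely how (H4) delivers irreducibility, and the strict contractivity of the $\delta=0$ dynamics (the analogue of the last line of the proof of Lemma \ref{jump}), is the delicate part.
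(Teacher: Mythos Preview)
Your reduction to the Markov chain $e_n$ and the resulting formula for $\mathcal{C}_M(P)$ are exactly the paper's, and your identification of the two remaining tasks---(i) positive Harris recurrence plus the ergodic theorem for each fixed $\delta>0$, and (ii) weak continuity $m_\delta\Rightarrow m_0$ with uniform integrability---is on the mark. The approach is essentially the paper's.

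Where you diverge is precisely at the ``main obstacle'' you flag. The paper does not fight noncompactness with a drift/minorization scheme on $[0,\infty)$. Instead it compactifies: view $(e_n)$ as a chain on $[0,\infty]$. Since $\P_e(e_n=\infty)=0$ for all $n\geq1$, the point at infinity is harmless, and one is now on a compact state space. Under (H2) the law of $e_2$ given $e_1\in[0,\infty)$ is absolutely continuous with respect to Lebesgue measure on $[\delta,\infty]$, and (H4) upgrades this to \emph{mutual} absolute continuity; from there the Harris-chain argument runs exactly as in the proof of Proposition \ref{ergo}, with no separate treatment of large or small $e$ needed. Your one-step reset bound $e_{n+1}\leq\delta+\abs{\vct{a}_n}^2$ is correct and is in fact used by the paper---but not for a drift condition: it gives the stochastic domination $m_\delta\preceq\mathcal{L}(\delta+\abs{\vct{a}}^2)$, from which (H1) yields $\log(\cdot+\abs{\vct{b}}^2)\in L^1(\mu_\delta)$.

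For (ii) the paper also takes a shorter path than your monotone pull-back-attractor argument: on the compact $[0,\infty]$ the family $\{m_\delta\}$ is automatically tight, so it suffices to show that every weak limit point as $\delta\to0$ is invariant for the $\delta=0$ chain; uniqueness of $m_0$ then forces $m_\delta\Rightarrow m_0$. Your monotonicity argument is correct and would also work (indeed it is what underlies Proposition \ref{boundsK}), but the compactness route avoids having to pin down any contractivity of the $\delta=0$ dynamics.
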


The rest of this appendix is devoted to the proof of Theorem
\ref{principalK}.

As in Appendix \ref{sec:DefinitionsAndMainResult}, we define the
sequence $x_n$ as follows. $x_0=0$, $x_1=1$, and for $n\geq1$,
\begin{align}
\label{xn1K}
x_{n+1}=\frac{\lambda-\abs{\vct{a}_n}^2-\abs{\vct{b}_n}^2}{<\vct{a}_{n+1};\vct{b}_{n}>}x_n-\frac{<\vct{a}_n;\vct{b}_{n-1}>^\dagger}{<\vct{a}_{n+1};\vct{b}_{n}>}x_{n-1}.
\end{align}
We get, like in (\ref{xn}), for $\lambda=-1/\rho$,
\begin{align}
\label{xnK}
\mathcal{C}_M(P)=\log(P/K)+\frac{1}{M}\log\abs{x_{M+1}(\lambda)}+\frac{1}{M}\sum_{i=1}^M\log\abs{<\vct{a}_{i+1};\vct{b}_i>}\
\ \ \ \ \ \ \P-\textrm{a.s}.
\end{align}
Set $c_n\triangleq x_n/x_{n-1}$, for $n\geq2$. By (\ref{xn1K}), we
get
\[c_{n+1}=\frac{\lambda-\abs{\vct{a}_n}^2-\abs{\vct{b}_n}^2}{<\vct{a}_{n+1};\vct{b}_{n}>}-\frac{<\vct{a}_n;\vct{b}_{n-1}>^\dagger}{c_n<\vct{a}_{n+1};\vct{b}_{n}>}.\]
Let $d_n=c_n<\vct{a}_n;\vct{b}_{n-1}>$. Then,
\begin{align*}
d_{n+1}&=\lambda-\abs{\vct{a}_n}^2-\abs{\vct{b}_n}^2-\frac{\abs{<\vct{a}_n;\vct{b}_{n-1}>}^2}{d_n}\\
&=\lambda-\abs{\vct{b}_n}^2-\abs{\vct{a}_n}^2\left(1+\frac{\abs{\vct{b}_{n-1}}^2\cos^2(\vct{a}_n,\vct{b}_{n-1})}{d_n}\right),
\end{align*}
where
\[\cos^2(\vct{a}_n,\vct{b}_{n-1})\triangleq\abs{<\vct{a}_n;\vct{b}_{n-1}>}^2/\abs{\vct{a}_n}^2\abs{\vct{b}_{n-1}}^2.\]

Note that $0\leq\cos^2\leq1$. Let
$e_{n}=-d_n-\abs{\vct{b}_{n-1}}^2$.
\begin{align}
\label{enK}
e_{n+1}=-\lambda+\abs{\vct{a}_n}^2\left(\frac{e_n+\abs{\vct{b}_{n-1}}^2\sin^2(\vct{a}_n,\vct{b}_{n-1})}{e_n+\abs{\vct{b}_{n-1}}^2}\right),
\end{align}
where $\sin^2\triangleq 1-\cos^2$. With the initial conditions,
$d_2<-\abs{\vct{b}_1}^2$, hence $e_2>0$ and for all $n$, $e_n>0$.
Note that $(e_n)$ is a Markov chain and that for all $n$, $e_n$ is
independent of $\vct{a}_n$ and $\vct{b}_{n-1}$. By (\ref{xnK}), we
get

\begin{equation}
\label{goalK}
\begin{split}
\mathcal{C}_M(P)
&=\log(P/K)+\frac{1}{M}\sum_{i=2}^{M+1}\log\abs{c_i(\lambda)}+\frac{1}{M}\sum_{i=1}^M\log\abs{<\vct{a}_{i+1};\vct{b}_i>}\\
&=\log(P/K)+\frac{1}{M}\sum_{i=2}^{M+1}\log(\abs{d_i})+o(1)\\
&=\log(P)+\frac{1}{M}\sum_{i=2}^{M+1}\log\left(\frac{e_i(\lambda)+\abs{\vct{b}_{i-1}}^2}{K}\right)+o(1)\\
\end{split}
\end{equation}

We only need to study the Markov chain $(e_n,\vct{b}_{n-1})$. For
convenience, we set $\delta=-\lambda$ and we allow $\delta=0$. We
also assume without loss of generality that the chain starts at
$(e_1,\vct{b}_0)$.

\begin{prop}
\label{ergoK} Assume (H2) and (H4). Take $\delta\geq0$. The Markov
chain $(e_n(\delta),\vct{b}_{n-1})$ has a unique stationary
probability, say, $\mu_\delta$ and for $s\in L^1(\mu_\delta)$, for
every starting point $(e_1,\vct{b}_0)\in\R_+\times\C^K$,
$\P_{(e_1,\vct{b}_0)}$-a.s,
\[\frac{1}{n}\sum_{i=0}^{n}s(e_i,\vct{b}_{i-1})\tendvers\int s d\mu_\delta.\]
Moreover, $\mu_\delta$ is weakly continuous in $\delta=0$.
\end{prop}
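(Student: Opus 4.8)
Before seeing the authors' argument, here is how I would attack Proposition~\ref{ergoK}.

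The plan is to run the Harris-chain program already used for Proposition~\ref{ergo}, the new difficulties being that the state space $(0,\infty)\times\C^K$ is non-compact and that we must also allow $\delta=0$ and prove continuity there. Everything rests on one elementary remark about the recursion \eqref{enK}: since $\sin^2(\vct a_n,\vct b_{n-1})\le1$ we have $e_n\ge e_n\sin^2(\vct a_n,\vct b_{n-1})$, hence
\[
\abs{\vct a_n}^2\,\sin^2(\vct a_n,\vct b_{n-1})\ \le\ e_{n+1}\ \le\ \delta+\abs{\vct a_n}^2\qquad(\delta\ge0),
\]
both bounds being functions of the fresh input $\vct a_n$ (and of $\vct b_{n-1}$) that are independent of $e_n$. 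For $K>1$ and under (H2) the lower bound is a.s.\ strictly positive (the set $\{\vct a\parallel\vct b'\}$ is Lebesgue-null in $\C^K$), and the laws of $\abs{\vct a}^2$ and of $\abs{\vct a}^2\sin^2(\vct a,\vct b')$ are fixed; consequently the family of laws of $\zeta_n\triangleq(e_n,\vct b_{n-1})$ under $\P_{(e_1,\vct b_0)}$ is tight in $(0,\infty)\times\C^K$, uniformly in $\delta\in[0,1]$ and in the starting point. In particular every stationary law of the chain lives on $(0,\infty)\times\C^K$, and $\{\mu_\delta:\delta\in[0,1]\}$ will be tight.

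For irreducibility and recurrence I would argue as in Lemmas~\ref{hx3}--\ref{clef}. Fix a compact set $C=[c_0,R_0]\times\{\abs{\vct b}\le r\}\subset(0,\infty)\times\C^K$. First, from the two-sided bound one chooses $R_0$ (via $e_{n+1}\le\delta+\abs{\vct a_n}^2$ and $\delta\le1$) and $c_0$ (via $e_{n+1}\ge\abs{\vct a_n}^2\sin^2(\vct a_n,\vct b_{n-1})$, using Dini's theorem for uniformity in the direction of $\vct b_{n-1}$, which ranges over a compact space) so that $\P_\zeta(\zeta_1\in C)\ge\kappa>0$ for \emph{every} $\zeta$; thus $C$ is reached from everywhere in one step and return times to $C$ are dominated by a geometric variable. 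Second, using (H2) and (H4) — which provide an $\abs{\vct a}^2$-density (resp.\ $\abs{\vct b}^2$-density) strictly positive on a half-line — the one-step law of $\zeta_1$ from any $\zeta\in C$ dominates $\kappa'\cdot(\text{Lebesgue on a fixed interval})\otimes(\text{a fixed restriction of }\pi_b^{\otimes K})$, so $C$ is a small set and the chain is irreducible (with this Lebesgue-type measure a maximal irreducibility measure); aperiodicity follows as in Step~2 of the proof of Proposition~\ref{ergo}. Every positive set is then Harris recurrent, and $C$ has finite expected return time, so the chain is positive Harris; by \cite[Thms.~10.0.1, 13.0.1, 17.0.1]{harris} it has a unique invariant probability $\mu_\delta$ and, for every starting point and every $s\in L^1(\mu_\delta)$, $\frac1n\sum_{i=0}^{n}s(e_i,\vct b_{i-1})\to\int s\,d\mu_\delta$ a.s. This argument applies verbatim at $\delta=0$: (H4) is exactly what keeps the lower bound non-degenerate and $C$ reachable there. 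For weak continuity, note that the kernels are jointly Feller in $(\delta,\zeta)$: for $f\in C_b$, $(P_\delta f)(e,\vct b)=\E\,f\!\big(\delta+\abs{\vct a}^2\tfrac{e+\abs{\vct b}^2\sin^2(\vct a,\vct b)}{e+\abs{\vct b}^2},\ \vct b_1\big)$ is bounded and, by dominated convergence, jointly continuous in $(\delta,e,\vct b)$, hence uniformly continuous on compacts. Given $\delta_k\downarrow0$, tightness of $\{\mu_{\delta_k}\}$ yields a weak limit $\mu_*$; passing to the limit in $\int f\,d\mu_{\delta_k}=\int(P_{\delta_k}f)\,d\mu_{\delta_k}$ — splitting over a large compact set (uniform convergence $P_{\delta_k}f\to P_0f$ there, plus $\mu_{\delta_k}\Rightarrow\mu_*$) and its complement (tightness) — gives $\int f\,d\mu_*=\int(P_0f)\,d\mu_*$, so $\mu_*$ is $P_0$-invariant and hence $\mu_*=\mu_0$ by the uniqueness above. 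Every subsequential limit being $\mu_0$, we conclude $\mu_\delta\Rightarrow\mu_0$ as $\delta\downarrow0$.

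The main obstacle is the second half of the middle step: making rigorous the minorization of the one-step kernel on $C$. One must exhibit, uniformly over $\zeta=(e,\vct b)\in C$, an absolutely continuous component of the law of $e_1=\delta+\abs{\vct a}^2\tfrac{e+\abs{\vct b}^2\sin^2(\vct a,\vct b)}{e+\abs{\vct b}^2}$ with density bounded below on a fixed interval. This is the $K>1$, two-parameter analogue of Lemmas~\ref{bornes} and~\ref{clef}: it requires a careful change of variables in $\vct a$-space controlling $\abs{\vct a}^2$ and the angle $\sin^2(\vct a,\vct b)$ simultaneously — absolute continuity supplied by (H2), the location of the support by (H4) — together with a compactness argument in $\zeta$. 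Everything else is either the deterministic estimate of the first paragraph or a routine transcription of the Harris-chain bookkeeping already carried out for $K=1$.
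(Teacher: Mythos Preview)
Your plan is sound and would go through, but you are working harder than necessary because you miss the structural observation that drives the paper's argument. From the recursion \eqref{enK}, $e_{n+1}$ is a function of $e_n,\vct a_n,\vct b_{n-1}$, while $e_n$ itself was built from $e_{n-1},\vct a_{n-1},\vct b_{n-2}$; hence $\vct b_{n-1}$ is \emph{independent} of the whole past $(e_1,\ldots,e_n)$, and in particular $(e_n)$ is already a Markov chain on its own, with $e_n\perp\vct b_{n-1}$ for every $n\ge1$. The joint chain therefore has product structure: any invariant law is automatically $\mu_\delta=m_\delta\otimes\Pi_{\vct b}$, and the whole problem collapses to the scalar chain $(e_n)$. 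The paper then compactifies to $[0,\infty]$ (noting $\P_e(e_n=\infty)=0$), so tightness is free, and uses (H2) plus (H4) to get that for $n\ge3$ the law of $e_n$ and Lebesgue on $[\delta,\infty)$ are mutually absolutely continuous --- the Lemma~\ref{bornes}/\ref{clef} machinery then applies essentially verbatim. Weak continuity at $\delta=0$ is likewise immediate: compactness of $[0,\infty]$ forces subsequential limits, and any limit is invariant for the $\delta=0$ kernel, hence equals $m_0$.

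Your route --- a direct Harris analysis of the joint chain on $(0,\infty)\times\C^K$ via the two-sided bound $|\vct a_n|^2\sin^2(\vct a_n,\vct b_{n-1})\le e_{n+1}\le\delta+|\vct a_n|^2$, a one-step reachable compact $C$, a small-set minorization, and a Feller-plus-tightness continuity argument --- is a genuine alternative and has the merit of not relying on any special independence (so it would survive, e.g., correlated $\vct b$'s). The price is exactly the ``main obstacle'' you flag: the uniform minorization on $C$ requires controlling the joint law of $|\vct a|^2$ and $\sin^2(\vct a,\vct b)$ simultaneously, which is delicate (and your appeal to Dini's theorem for uniformity over directions of $\vct b$ is not quite the right tool). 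Once you notice the product structure, that obstacle simply disappears: you only need absolute continuity of the scalar $e_2$ in $e_1$, which follows from (H2) and (H4) after integrating out $\vct b_{n-1}$.
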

\begin{proof}
We consider the Markov chain $(e_n)$ on the compact $[0,\infty]$.
By (\ref{enK}), for $n\geq1$ and $e\in[0,\infty]$,
$\P_e(e_n=\infty)=0$. Consider (\ref{enK}), by (H2), for
$e_1\in[0,\infty)$, the law of $e_2$ under $P_{e_1}$ is absolutely
continuous with respect to the Lebesgue measure on
$[\delta,\infty]$. Moreover, by (H4), the law of $e_2$ under
$P_{e_1}$ and the Lebesgue measure on $[\delta,\infty]$ are
mutually absolutely continuous. Therefore, for $e_1\in[0,\infty)$
and $n\geq3$, the law of $e_n$ under $P_{e_1}$ and the Lebesgue
measure on $[\delta,\infty]$ are mutually absolutely continuous.
That fact allows us to prove like in Appendix
\ref{sec:StudyOfTheMarkovChain} that the Markov chain $(e_n)$ is
$l$-irreducible, positive Harris with invariant probability
$m_\delta$, where $l$ is the Lebesgue measure on
$[\delta,\infty]$. Since $\P_e(e_n=\infty)=0$, $m_\delta$ does not
charge $\{\infty\}$. We identify $m_\delta$ and the measure it
induces on $\R_+$. We denote by $\Pi_{\vct{b}}$ the law of
$\vct{b}$. Since for $n\geq1$, $e_n$ and $\vct{b}_{n-1}$ are
independent, the Markov chain $(e_n,\vct{b}_{n-1})$ is
$l\times\Pi_{\vct{b}}$-irreducible, positive Harris with invariant
probability $\mu_\delta=m_\delta\times\Pi_{\vct{b}}$. By
\cite[Theorem 17.0.1]{harris}, the Markov chain
$(e_n(\lambda),\vct{b}_{n-1})$ has a unique stationary probability
$\mu_\delta$ and for $s\in L^1(\mu_\delta)$, for every starting
point $(e_1,\vct{b}_0)\in\R_+\times\C^K$,
$\P_{(e_1,\vct{b}_0)}$-a.s,
\[\frac{1}{n}\sum_{i=0}^{n}s(e_i,\vct{b}_{i-1})\tendvers\int s d\mu_\delta.\]

Let us prove that $m_\delta$ converges weakly to $m_0$ when
$\delta$ converges to 0, which will finish the proof.
$\{m_\delta,\delta\geq 0\}$ are measures on the compact
$[0,\infty]$ hence it is enough to show that $m_0$ is the only
limit point when $\delta$ goes to 0. By (H2), for a point $x$ and
an interval $A$ in $[0,\infty]$, $\P_{e_1}(e_2(\delta)\in A)$
converges to $\P_{e_1}(e_2(0)\in A)$. It implies that a limit
point must be an invariant measure for the chain with $\delta=0$.
The only possibility is $m_0$.
\end{proof}

By (\ref{enK}), $m_\delta$ is stochastically dominated by the law
of $\abs{\vct{a}_n}^2+\delta$. Therefore, by (H1),
$(x,y)\rightarrow\log(x+y)\in L^1(\mu_\delta)$. (\ref{goalK}) and
Proposition \ref{ergoK} conclude the proof of Theorem
\ref{principalK}.

\subsection{Product of
random matrices}\label{sec:ProductOfRandomMatrices}

We prove Lemma \ref{gros} a) assuming only (H1) and (H2). We use
the theory of product of random matrices theory. For a general
introduction to the aspects of the theory we use here, the reader
may consult \cite{enorme}, \cite{groslivre}, \cite{petitlivre}-\nocite{oseledec}\cite{petit}.

Let us take $\abs{\cdot}$ any norm on $\C^2$ and $\nrm{\cdot}$ the
associated operator norm on matrices. For a given $\lambda$,
\[\begin{pmatrix}x_{n+1}\\x_n\end{pmatrix}=
\begin{pmatrix}
\frac{\lambda-\abs{a_n}^2-\abs{b_n}^2}{a_{n+1}^\dagger b_{n}}&-\frac{a_n b_{n-1}^\dagger}{a_{n+1}^\dagger b_{n}}\\
1&0
\end{pmatrix}
\begin{pmatrix}x_{n}\\x_{n-1}\end{pmatrix}\]
For $a,a',b,b'\in\C-0$, we define the following invertible matrices
\[\Mat{g}(\lambda,a,a',b,b')\triangleq\begin{pmatrix}
\frac{\lambda-\abs{a}^2-\abs{b'}^2}{a'^\dagger b'}&-\frac{ab^\dagger }{a'^\dagger b'}\\
1&0
\end{pmatrix}.\]
Finally, we define
\[\Mat{g}_n(\lambda)\triangleq \Mat{}g(\lambda,a_n,a_{n-1},b_{n-1},b_n)=\begin{pmatrix}
\frac{\lambda-\abs{a_n}^2-\abs{b_n}^2}{a_{n+1}^\dagger b_{n}}&-\frac{a_n b_{n-1}^\dagger}{a_{n+1}^\dagger b_{n}}\\
1&0
\end{pmatrix},\]
\[\Mat{M}_n\triangleq \Mat{g}_{n}\ldots \Mat{g}_1.\]
So that
\[\begin{pmatrix}x_{n+1}\\x_n\end{pmatrix}=\Mat{M}_n\begin{pmatrix}1\\0\end{pmatrix}.\]

Set ${\cal E}=(\C-{0})^4$ which is a Borel set of a separable and
complete metric space.
$\vct{X}_n\triangleq(a_{n+1},a_n,b_n,b_{n-1})$ is a Markov chain
on ${\cal E}$, with invariant measure
$\Pi\triangleq\pi_a\times\pi_a\times\pi_b\times\pi_b$. With (H1),
\[\E_{\Pi}\left(\log^+\nrm{\Mat{g}(\lambda,a,a',b,b')}+\log^+\nrm{{\Mat{g}(\lambda,a,a',b,b')}^{-1}}\right)<\infty.\]
Notice that $\Mat{g}_n(\lambda)$ is a continuous function of
$\vct{X}_{n}$, therefore $((\vct{X}_n,\Mat{M}_n),\Pi)$ is a
multiplicative Markovian process. By \cite[Example 1 and
Proposition 2.5]{bougerol}, $1/n\log\nrm{\Mat{M}_n(\lambda)}$
converges $\P$-almost surely and in $\L_1(\Omega)$, we set
\begin{align}
\label{fur}
\gamma(\lambda)=\lim_{n\rightarrow\infty}\frac{1}{n}\log\nrm{\Mat{M}_n(\lambda)}.
\end{align}

$\gamma(\lambda)$ is the first Lyapunov exponent.

The $\L_1(\Omega)$ convergence already gives an easy upper bound
for $\gamma(\lambda)$. By the property of operator norm,
\[\gamma(\lambda)\leq\E_{\Pi}\log\nrm{\Mat{g}_1(\lambda)}.\]
Moreover, we can refine that bound into a whole family of upper
bounds, for $k\in\N$,
\begin{align}
\label{upperbound}
\gamma(\lambda)\leq\frac{1}{k}\E_{\Pi}\log\nrm{\Mat{g}_1(\lambda)...\Mat{g}_{k}(\lambda)}.
\end{align}
Note that this upper bound is getting better as $k$ increases and
tight as $k\rightarrow\infty$.

Let us now prove that
\[\frac{1}{n}\log\abs{x_{n+1}(\lambda)}\tendvers\gamma(\lambda).\]

\begin{deft}
The multiplicative system $((\vct{X}_n,\Mat{M}_n),\Pi)$ is
irreducible if there is no measurable non-random family
$\{V(\vct{X}),\vct{X}\in E\}$ of proper subspaces of $\C^2$ such
that
\[\Mat{M}_nV(\vct{X}_0)=V(\vct{X}_n),\ \ \ \ \ \P\textrm{-a.s},\ \forall n\in\N.\]
\end{deft}

\begin{lem}
Assume (H2). The multiplicative system
$((\vct{X}_n,\Mat{M}_n),\Pi)$ is irreducible
\end{lem}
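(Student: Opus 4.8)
The plan is to argue by contradiction. Suppose there is a measurable, non-random assignment $\vct{X}\mapsto V(\vct{X})$ of proper (hence one-dimensional) subspaces of $\C^2$ with $\Mat{M}_n V(\vct{X}_0)=V(\vct{X}_n)$ $\P$-a.s.\ for every $n$. Each $\Mat{g}_n$ is invertible $\P$-a.s.\ — its determinant is $-a_n b_{n-1}^\dagger/(a_{n+1}^\dagger b_n)$, which is non-zero $\P$-a.s.\ by (H2) — so comparing the relations for $n$ and $n-1$ yields the one-step identity $\Mat{g}_n\,V(\vct{X}_{n-1})=V(\vct{X}_n)$; for $n=1$ this reads $\Mat{g}(\lambda,a_1,a_2,b_0,b_1)\,V(\vct{X}_0)=V(\vct{X}_1)$.

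The first step I would carry out is to cut down the coordinates on which $V$ depends. Both $\Mat{g}(\lambda,a_1,a_2,b_0,b_1)$ and $V(\vct{X}_1)$ are $\sigma(a_1,a_2,b_0,b_1)$-measurable, whereas $V(\vct{X}_0)$ is $\sigma(a_1,a_0,b_0,b_{-1})$-measurable; from $V(\vct{X}_0)=\Mat{g}(\lambda,a_1,a_2,b_0,b_1)^{-1}V(\vct{X}_1)$ it is measurable with respect to both $\sigma$-algebras, hence — because all of the $a_i$ and $b_j$ are independent — a.s.\ measurable with respect to $\sigma(a_1,b_0)$. So there is a measurable map $W$ from $(\C\setminus\{0\})^2$ into the set of lines of $\C^2$ with $V(\vct{X})=W(a,b')$ a.s., where $(a,b')$ are the first and third coordinates of $\vct{X}$; by stationarity the one-step identity becomes
\[\Mat{g}(\lambda,a_1,a_2,b_0,b_1)\,W(a_1,b_0)=W(a_2,b_1)\quad\text{for }\pi_a\otimes\pi_a\otimes\pi_b\otimes\pi_b\text{-a.e.\ }(a_1,a_2,b_0,b_1).\]

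Next I would solve this functional equation. Representing a line $\C\binom{z}{w}$ by $\zeta=z/w\in\C\cup\{\infty\}$, the matrix $\Mat{g}(\lambda,a_1,a_2,b_0,b_1)=(\begin{smallmatrix}s&t\\1&0\end{smallmatrix})$ acts by $\zeta\mapsto s+t/\zeta$, with $s=(\lambda-\abs{a_1}^2-\abs{b_1}^2)/(a_2^\dagger b_1)$ and $t=-a_1 b_0^\dagger/(a_2^\dagger b_1)$. Writing $\zeta(a,b)$ for the parameter of $W(a,b)$, a brief independence argument (using that $\abs{a_1}^2$ has a density on $[0,\infty)$) rules out $\zeta(a,b)\in\{\infty,0\}$ on a set of positive measure, so the equation is
\[\zeta(a_2,b_1)\,a_2^\dagger b_1=\lambda-\abs{b_1}^2-\Bigl(\abs{a_1}^2+\frac{a_1 b_0^\dagger}{\zeta(a_1,b_0)}\Bigr).\]
The left-hand side depends only on $(a_2,b_1)$ and the bracketed term only on $(a_1,b_0)$, so by independence the bracket equals a deterministic constant $\kappa$; this forces $\zeta(a,b)=a b^\dagger/(\kappa-\abs{a}^2)$ for a.e.\ $(a,b)$. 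Feeding this back in gives $\abs{a_2}^2\abs{b_1}^2=(\lambda-\kappa-\abs{b_1}^2)(\kappa-\abs{a_2}^2)$ for a.e.\ $(a_2,b_1)$; since $\abs{a_2}^2$ and $\abs{b_1}^2$ are independent and each has a density, this polynomial identity holds on a set of positive planar Lebesgue measure, hence identically, and matching coefficients forces $\kappa=\lambda$ and $\kappa=0$, i.e.\ $\lambda=0$ — contradicting $\lambda<0$ (recall that in Lemma~\ref{gros} we have $\lambda<0$).

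I expect the main obstacle to be this first reduction — showing that an invariant line field cannot depend on $a_0$ or $b_{-1}$, and hence is a function of $(a_1,b_0)$ alone. It rests on the measure-theoretic fact that, for independent random variables, a quantity measurable with respect to each of two $\sigma$-algebras is a.s.\ measurable with respect to the sub-$\sigma$-algebra they share; and throughout the ensuing algebra one must keep in mind that every identity holds only almost everywhere, so the divisions and the degenerate values $\zeta\in\{0,\infty\}$ must be dispatched by the independence / positive-measure argument rather than by formal substitution.
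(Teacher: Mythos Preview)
Your argument is correct and follows essentially the same route as the paper: assume an invariant measurable line field, parameterize it projectively, use the one-step M\"obius action of $\Mat{g}$, strip off the two ``extra'' coordinates via independence, and then push the remaining functional equation to an algebraic identity that forces $\lambda=0$. The only cosmetic differences are that the paper reads the reduction to two variables directly off the recursion (rather than via your $\sigma$-algebra intersection) and performs the final elimination in two successive steps (introducing $d(a,b)=a^\dagger b\,c(a,b)$ and then showing $d$ depends only on $b$ with $d(b)=L\abs{b}^2$), whereas you isolate a single constant $\kappa$ in one go; the contradictions $L=-1$ versus $\kappa=\lambda=0$ are equivalent.
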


The proof is an adaptation of the proof of \cite[Proposition
6.1.1]{bougerol2}.

\begin{proof}
The proof is by contradiction. Assume that there is a measurable
family $\{V(\vct{X}),\vct{X}\in E\}$ of proper subspaces of $\C^2$
such that
\[\Mat{g}_3V(\vct{X}_2)=V(\vct{X}_3),\ \ \ \ \ \P\textrm{-a.s.},\ \forall n\in\N.\]
We parameterize the proper subspaces of $\C^2$ by
$\begin{pmatrix}c\\1\end{pmatrix}$ for $c$ in $(-\infty,\infty]$.
There is a measurable family $\{c(\vct{X}),\vct{X}\in E\}$ such
that $\Mat{g}_3\begin{pmatrix}c(\vct{X}_2)\\1\end{pmatrix}$ and
$\begin{pmatrix}c(\vct{X}_3)\\1\end{pmatrix}$ are
$\P\textrm{-a.s.}$ collinear. A direct computation gives
\[c(a_4,a_3,b_3,b_2)=\frac{\lambda-\abs{a_3}^2-\abs{b_3}^2}{a_4^\dagger b_3}-\frac{a_3 b_2^\dagger}{c(a_3,a_2,b_2,b_1)a_4^\dagger b_3},\ \ \ \ \ \P\textrm{-a.s.},\]
that is
\[c(a_3,a_2,b_2,b_1)=\frac{a_3 b_2^\dagger }{a_4^\dagger b_3\left(\frac{\lambda-\abs{a_3}^2-\abs{b_3}^2}{a_4^\dagger b_3}-c(a_4,a_3,b_3,b_2)\right)},\ \ \ \ \ \P\textrm{-a.s.}.\]
Note that the RHS does not depend on $a_2$ and $b_1$, hence,
$c(a,a',b,b')$ does not depend on $a'$ and $b'$. Setting
$d(a,b)=a^\dagger b\;c(a,b)$, we get
\begin{align}
\label{d}
d(a_4,b_3)=\lambda-\abs{a_3}^2-\abs{b_3}^2-\frac{\abs{a_3}^2\abs{b_2}^2}{d(a_3,b_2)},\
\ \ \ \ \P\textrm{-a.s.}.
\end{align}
The RHS does not depend on $a_4$, hence, $d(a,b)$ does not depend
on $a$. From (\ref{d}), we get
\[\frac{d(b_2)}{\abs{b_2}^2}=-\frac{{\abs{a_3}^2}}{d(b_3)-\lambda+\abs{a_3}^2+\abs{b_3}^2},\ \ \ \ \ \P_{\pi}\textrm{-a.s.}.\]
The RHS does not depend on $b_2$, hence, $d(b)/\abs{b}^2$ does not
depend on $b$, set $d(b)=L\abs{b}^2$, where $L$ is a fixed
constant. Then,
\[(L+1)\abs{b_3}^2=\lambda-\abs{a_3}^2\left(1+\frac{1}{L}\right),\ \ \ \ \ \P_{\pi}\textrm{-a.s.}.\]
If $L\neq-1$, $\abs{b_3}^2$ is a measurable function of $a_3$ and
since it is also independent of $a_3$, it is a constant, which is
in contradiction with (H2). Hence $L=-1$, which gives a
contradiction with $\lambda<0$.
\end{proof}

By \cite[Lemma 2.6]{bougerol}, irreducibility implies that
\[\lim_{n\rightarrow\infty}\frac{1}{n}\log\abs{\begin{pmatrix}x_{n+2}\\x_{n+1}\end{pmatrix}}=\gamma\ \ \ \ \P-\textrm{a.s}.\]
The following lemma completes the proof.
\begin{lem}
Assume (H1). \label{bc}
\[\lim_{n\rightarrow\infty}\frac{1}{n}\left(\log\abs{\begin{pmatrix}x_{n+2}\\x_{n+1}\end{pmatrix}}-\log\abs{x_{n+1}}\right)=0\ \ \ \ \P-\textrm{a.s}.\]
\end{lem}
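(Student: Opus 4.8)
\noindent\textit{Proof plan.} The plan is to reduce Lemma~\ref{bc} to a sublinear-growth estimate for the ratios $c_n\triangleq x_n/x_{n-1}$ and then to read off that estimate from the elementary a priori bound $0<e_n<1$ that is already available from the construction in Section~\ref{sec:ReductionToAMarkovChain}. Throughout we freely use (H2) (so that $x_n\neq 0$ and $a_n,b_n\neq 0$ $\P$-a.s., hence the $c_n$ are well defined and nonzero) and $\lambda<0$, both in force in this appendix. Since $x_{n+1}\neq 0$ $\P$-a.s., I would first factor
\[\begin{pmatrix}x_{n+2}\\x_{n+1}\end{pmatrix}=x_{n+1}\begin{pmatrix}c_{n+2}\\1\end{pmatrix},\qquad\text{so that}\qquad
\log\abs{\begin{pmatrix}x_{n+2}\\x_{n+1}\end{pmatrix}}-\log\abs{x_{n+1}}=\log\abs{\begin{pmatrix}c_{n+2}\\1\end{pmatrix}}.\]
By equivalence of norms on $\C^2$ there are constants $0<\kappa_-\le\kappa_+$ with $\kappa_-\le\abs{(c,1)}\le\kappa_+(1+\abs{c})$ for all $c\in\C$, so the right-hand side lies between the constant $\log\kappa_-$ and $\log(2\kappa_+)+\log^+\abs{c_{n+2}}$. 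Dividing by $n$, it therefore suffices to prove $\frac1n\log^+\abs{c_{n+2}}\tendvers 0$ $\P$-a.s.

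Next I would obtain an a priori bound on $\abs{c_{n+2}}$. Recall $d_n\triangleq c_n a_n^\dagger b_{n-1}$ satisfies $d_{n+1}=\lambda-\abs{b_n}^2-\abs{a_n}^2e_n$ with $0<e_n<1$; since $\lambda<0$ this forces
\[\abs{b_n}^2\le\abs{d_{n+1}}\le\abs{\lambda}+\abs{a_n}^2+\abs{b_n}^2.\]
Because $c_{n+1}=d_{n+1}/(a_{n+1}^\dagger b_n)$ and $\abs{a_{n+1}^\dagger b_n}=\abs{a_{n+1}}\abs{b_n}$, this gives $\abs{c_{n+1}}\le(\abs{\lambda}+\abs{a_n}^2+\abs{b_n}^2)/(\abs{a_{n+1}}\abs{b_n})$, and hence, using $\log^+(xyz)\le\log^+x+\log^+y+\log^+z$ together with $\log^+(x+y+z)\le\log 3+\log^+x+\log^+y+\log^+z$,
\[\log^+\abs{c_{n+2}}\le\log 3+\log^+\abs{\lambda}+2\log^+\abs{a_{n+1}}+2\log^+\abs{b_{n+1}}+(-\log\abs{a_{n+2}})^++(-\log\abs{b_{n+1}})^+.\]

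To conclude, I would invoke (H1), which gives $\E_{\pi_a}\abs{\log\abs{x}}<\infty$ and $\E_{\pi_b}\abs{\log\abs{x}}<\infty$, hence finiteness of $\E\log^+\abs{a}$, $\E\log^+\abs{b}$, $\E(-\log\abs{a})^+$, $\E(-\log\abs{b})^+$. For any i.i.d.\ sequence $(Z_n)$ with $\E\abs{Z_1}<\infty$ one has $\sum_n\P(\abs{Z_n}>\epsilon n)<\infty$ for every $\epsilon>0$, so Borel--Cantelli yields $\frac1n\abs{Z_n}\tendvers 0$ $\P$-a.s.; applying this with $Z_n$ equal to each of $\log^+\abs{a_n}$, $\log^+\abs{b_n}$, $(-\log\abs{a_n})^+$, $(-\log\abs{b_n})^+$ and inserting the result into the displayed inequality gives $\frac1n\log^+\abs{c_{n+2}}\tendvers 0$ $\P$-a.s., which is exactly what the first paragraph requires. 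The one delicate point is the lower bound $\abs{d_{n+1}}\ge\abs{b_n}^2$: this is what prevents $\abs{c_{n+2}}$ from growing super-exponentially, and it is precisely here that the a priori information $0<e_n<1$ (together with $\lambda<0$) is used; everything else is routine $\log^+$-bookkeeping and the standard sublinearity of i.i.d.\ sequences with finite first moment of $\abs{\log\abs{\cdot}}$.
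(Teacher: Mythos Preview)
Your proof is correct and follows essentially the same route as the paper: reduce to controlling $\log\lvert(c_{n+2},1)^\top\rvert$, bound $\lvert c_{n+2}\rvert$ in terms of $\lvert a_i\rvert,\lvert b_i\rvert$ via the a~priori estimate $0<e_n<1$ (equivalently $\lvert d_{n+1}\rvert\le\lvert\lambda\rvert+\lvert a_n\rvert^2+\lvert b_n\rvert^2$), and conclude by Borel--Cantelli. The only differences are cosmetic: the paper phrases the bound through $1/(1-e_n)$ and invokes Chebyshev with the second moment from (H1), whereas you use the first-moment Borel--Cantelli $\sum_n\P(\lvert Z\rvert>\epsilon n)<\infty$; also, your closing remark singles out the \emph{lower} bound $\lvert d_{n+1}\rvert\ge\lvert b_n\rvert^2$, but in fact only the \emph{upper} bound on $\lvert d_{n+1}\rvert$ (from $e_n<1$) is used in your argument.
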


\begin{proof}
\[\log\abs{\begin{pmatrix}x_{n+2}\\x_{n+1}\end{pmatrix}}-\log\abs{x_{n+1}}=\log\abs{\begin{pmatrix}c_{n+2}\\1\end{pmatrix}}\geq0.\]
Let us prove that for $\ep>0$,
$\P\left(\frac{1}{n}\log\abs{\begin{pmatrix}c_n\\1\end{pmatrix}}\geq\ep\right)$
is a summable series, which by the Borel-Cantelli Lemma will prove
the lemma. We have
\begin{align}
\label{eq-080807a}
\P\left(\frac{1}{n}\log\abs{\begin{pmatrix}c_n\\1\end{pmatrix}}\geq\ep\right)
&\leq\P\left(\frac{1}{n}\log\left(\abs{c_n}+1\right)\geq\ep\right)\nonumber\\
&\leq\P\left(\abs{c_n}\geq e^{n\ep}-1\right)\nonumber\\
&\leq\P\left(\abs{c_n}\geq e^{\frac{n\ep}{2}}\right)\nonumber\\
&\leq\P\left(\frac{\abs{b_{n-1}}}
{\abs{a_n}}\frac{1}{1-e_n}\geq e^{\frac{n\ep}{2}}\right)\nonumber\\
&\leq\P\left(\frac{\abs{b_{n-1}}}{\abs{a_n}}\geq
e^{\frac{n\ep}{4}}\right)+ \P\left(\frac{1}{1-e_n}\geq
e^{\frac{n\ep}{4}}\right).
\end{align}
We analyze the  right side of (\ref{eq-080807a}). We use the fact
that $\log\abs{a_n}$ and $\log\abs{b_{n-1}}$ have a second moment
by (H1) and that it does not depend on $n$. By the
Bienaym\'e-Tchebicheff inequality, we get
\begin{equation}\label{term}\begin{split}
\P\left(\frac{\abs{b_{n-1}}}{\abs{a_n}}\geq e^{\frac{n\ep}{4}}\right)&=\P\left(\log\abs{b_{n-1}}-\log\abs{a_n}\geq \frac{n\ep}{4}\right)\\
&\leq\frac{16\E\left(\left(\log\abs{b_{n-1}}-
\log\abs{a_n}\right)^2\right)}{n^2\ep^2},
\end{split}\end{equation}
implying that the first term in the right side of
(\ref{eq-080807a}) forms a summable series. Moreover
\[\log\frac{1}{1-e_n}\leq\log\frac{-\lambda+
\abs{b_{n-1}}^2+\abs{a_{n-1}}^2}{\abs{b_{n-1}}^2},\] which has a
second moment by (H1), hence, by a computation like (\ref{term})
and the Bienaym\'e-Tchebicheff inequality,
$\P\left(\frac{1}{1-e_n}\geq e^{\frac{n\ep}{4}}\right)$ is a
summable series. The Borel-Cantelli Lemma applied to the right
side of (\ref{eq-080807a}) concludes the proof.
\end{proof}

\subsection{Determinants of Jacobi Matrices}
\label{app: Jacobi det} An interesting and useful characterization
of an $M\times M$ Jacobi matrix is that its determinant can be
expressed by the following recursive formula \cite{Horn-1985}
\begin{equation}\label{eq: Narula recursive formula}
    \det \Mat{G}_m = [\Mat{G}_m]_{m,m}\det
    \Mat{G}_{m-1}-[\Mat{G}_m]_{m,m-1}[\Mat{G}_m]_{m-1,m}\det\Mat{G}_{m-2}\quad;\quad
    m=3,\ldots,M\ ,
\end{equation}
with
\begin{equation}\label{eq: Narula initial conditions}
\begin{aligned}
    \det \Mat{G}_1 &= [\Mat{G}_m]_{1,1}\\
    \det \Mat{G}_2 &=
    [\Mat{G}_m]_{1,1}[\Mat{G}_m]_{2,2}-[\Mat{G}_m]_{1,2}[\Mat{G}_m]_{2,1}\
    ,
\end{aligned}
\end{equation}
where $\Mat{G}_{m}$ is the principle submatrix of $\Mat{G}_M$,
obtained by deleting its last $(M-m)$ columns. This
characterization already used by Narula \cite{Narula-1997}, can be
easily proved by straight forward calculation of the determinant
of $\Mat{G}_M$, starting from its last row.

Examining \eqref{eq: Narula recursive formula}, it is observed
that the determinant of a square Jacobi matrix is dependent on a
weighted sum of its two largest principle matrices' determinants
only. Furthermore, $\det\Mat{G}_{m-1}$ and $\det\Mat{G}_{m-2}$ are
independent of the entries $[\Mat{G}_m]_{m,m}$,
$[\Mat{G}_m]_{m,m-1}$, and $[\Mat{G}_m]_{m+1,m}$.

It is worth mentioning that this approach can not be extended for
matrices with a number of non-zero diagonal higher than 3. Hence,
a similar formula, can not be obtained even for five-diagonal
matrices and the resulting formula involves $O(M)$ determinants of
submatrices (not necessarily principle submatrices).

\bibliographystyle{ieeetr}
\bibliography{Reference_List}

\begin{thebibliography}{10}

\bibitem{Shamai-Somekh-Zaidel-JWCC-2004}
S.~{Shamai (Shitz)}, O.~Somekh, and B.~M. Zaidel, ``Multi-cell communications:
  An information theoretic perspective,'' in {\em Proceedings of the Joint
  Workshop on Communications and Coding (JWCC'04)}, (Donnini, Florence, Italy),
  Oct.14--17, 2004.

\bibitem{Somekh-Simeone-Barness-Haimovich-Shamai-BookChapt-07}
O.~Somekh, O.~Simeone, Y.~{Bar-Ness}, A.~M. Haimovich, U.~Spagnolini, and
  S.~{Shamai (Shitz)}, ``An information theoretic view of distributed antenna
  processing in cellular systems,'' in {\em Distributed Antenna Systems: Open
  Architecture for Future Wireless Communications} (H.~Hu, Y.~Zhang, and
  J.~Luo, eds.), Auerbach Publications, CRC Press, May 2007.

\bibitem{Viterbi-Vehic-1991}
A.~J. Viterbi, ``On the capacity of a cellular {CDMA} system,'' {\em IEEE
  Trans. Veh. Technol.}, vol.~40, pp.~303--311, May 1991.

\bibitem{Hanly-Whiting-Telc-1993}
S.~V. Hanly and P.~A. Whiting, ``Information-theoretic capacity of
  multi-receiver networks,'' {\em Telecommun. Syst.}, vol.~1, pp.~1--42, 1993.

\bibitem{Zorzi-Vehic-1996}
M.~Zorzi, ``Power control and diversity in mobile radio cellular systems in the
  presence of ricean fading and log-normal shadowing,'' {\em IEEE Trans. Veh.
  Technol.}, vol.~45, pp.~373--382, May 1996.

\bibitem{Zorzi-Comm-1996}
M.~Zorzi, ``On the analytical computation of the interference statistics in
  cellular systems,'' {\em IEEE Trans. Commun.}, vol.~45, pp.~103--109, Jan.
  1996.

\bibitem{Rimoldi-Li-1996}
B.~Rimoldi and Q.~Li, ``Potential impact of rate-splitting multiple access on
  cellular communications,'' in {\em IEEE GLOBECOM}, (London, England),
  pp.~178--182, November 18-22 1996.

\bibitem{Jafar-Foschini-Goldmisth-2004}
S.~A. Jafar, G.~Foschini, and A.~J. Goldsmith, ``Phantomnet: Exploring optimal
  multicellular multiple antenna systems,'' {\em EURASIP Journal on Applied
  Signal Processing, Special Issue on {MIMO} Communications and Signal
  Processing}, pp.~591--605, May 2004.

\bibitem{Wyner-94}
A.~D. Wyner, ``Shannon-theoretic approach to a {G}aussian cellular
  multiple-access channel,'' {\em IEEE Transactions on Information Theory},
  vol.~40, pp.~1713--1727, Nov. 1994.

\bibitem{Somekh-Zaidel-Shamai-CWIT-05}
O.~Somekh, B.~M. Zaidel, and S.~{Shamai (Shitz)}, ``Sum-rate characterization
  of multi-cell processing,'' in {\em Proceedings of the {C}anadian workshop on
  information theory (CWIT'05)}, (McGill University, Montreal, Quibec, Canada),
  Jun. 5--8, 2005.

\bibitem{Somekh-Zaidel-Shamai-IT-2005}
O.~Somekh, B.~M. Zaidel, and S.~{Shamai (Shitz)}, ``Sum rate characterization
  of joint multiple cell-site processing,'' {\em IEEE Transactions on
  Information Theory}.
\newblock To appear, December 2007.

\bibitem{Nasser-Hasswa-Hassanein-Comm-Mag-2006}
N.~Nasser, A.~Hasswa, and H.~Hassanein, ``Handoffs in fourth generation
  heterogeneous networks,'' {\em IEEE Communications Magazine}, vol.~44,
  pp.~96--103, Oct. 2006.

\bibitem{Jing-Tse-Hou-Soriaga-Smee-Padovani-ISIT-2007}
S.~Jing, D.~N.~C. Tse, J.~Hou, J.~Soriaga, J.~E. Smee, and R.~Padovani,
  ``Downlink macro-diversity in cellular networks,'' in {\em the IEEE Intl.
  Symp. on Information Theory (ISIT'07)}, (Nice, France), pp.~1--5, Jun. 2007.

\bibitem{Narula-1997}
A.~Narula, {\em Information Theoretic Analysis of Multiple-Antenna Transmission
  Diversity}.
\newblock {PhD} thesis, Massachusetts Institute of Technology (MIT), Boston,
  MA, June 1997.

\bibitem{Lifang-Goldsmith-Globecom06}
Y.~Liang and A.~Goldsmith, ``Symmetric rate capacity of cellular systems with
  cooperative base stations,'' in {\em Proceedings of Globecom 2006},
  pp.~CTH13--3, Nov. 27-Dec. 1 2006.

\bibitem{Tulino-Lozano-Verdu-Ant-Corr-IT-2005}
A.~M. Tulino, A.~Lozano, and S.~Verd{\'u}, ``Impact of antenna correlation on
  the capacity of multiantenna channels,'' {\em IEEE Trans. Inform. Theory},
  vol.~51, pp.~2491--2509, Jul. 2005.

\bibitem{HW}
S.~Hanly and P.~A. Whiting, ``Constraint on capacity in a multi-user channel,''
  in {\em Proc. Int. Symp. Inf. Theory (ISIT 94)}, (Trondheim, Norway), p.~54,
  Jun. 26-- Jul. 1 1994.

\bibitem{Tulino-Verdu-Random-Matrix-Review-2004}
A.~M. Tulino and S.~Verd\'{u}, ``Random matrix theory and wireless
  communications,'' in {\em Foundations and Trends in Communications and
  Information Theory}, vol.~1, (Hanover, MA, USA), Now Publishers, 2004.

\bibitem{anderson-zeitouni}
G.~W. Anderson and O.~Zeitouni, ``A clt for a band matrix model,'' {\em Prob.
  Theory Rel. Fields}, vol.~134, pp.~283--338, 2005.

\bibitem{Shamai-Verdu-2001-fading}
S.~{Shamai (Shitz)} and S.~Verd\'{u}, ``The impact of frequency-flat fading on
  the spectral efficiency of {CDMA},'' {\em IEEE Transactions on Information
  Theory}, vol.~47, pp.~1302--1327, May 2001.

\bibitem{Verdu-paper-low-snr-regime-02}
S.~Verd{\'u}, ``Spectral efficiency in the wideband regime,'' {\em IEEE
  Transactions on Information Theory}, vol.~48, pp.~1329--1343, June 2002.

\bibitem{Lozano-Tulino-Verdu-high-SNR-IT05}
A.~Lozano, A.~Tulino, and S.~Verd\'{u}, ``High-snr power offset in
  multi-antenna communications,'' {\em IEEE Transactions on Information
  Theory}, vol.~51, pp.~4134--4151, Dec. 2005.

\bibitem{SOW}
S.~{Shamai (Shitz)}, L.~H. Ozarow, and A.~D. Wyner, ``Information rates for a
  discrete-time gaussian channel with intersymbol interference and stationary
  inputs,'' {\em IEEE Trans. Inform. Theory}, vol.~37, pp.~1527--1539, Nov.
  1991.

\bibitem{Biglieri-Proakis-Shamai-IEEE-T-IT-98}
E.~Biglieri, J.~Proakis, and S.~Shamai, ``Fading channels:
  Information-theoretic and communications aspects,'' {\em IEEE Transactions on
  Information Theory}, vol.~44, pp.~2619--2692, Oct. 1998.

\bibitem{enorme}
L.~Pastur and A.~Figotin, {\em Spectra of random and almost-periodic
  operators}, vol.~297 of {\em Grundlehren der Mathematischen Wissenschaften
  [Fundamental Principles of Mathematical Sciences]}.
\newblock Berlin: Springer-Verlag, 1992.

\bibitem{groslivre}
R.~Carmona and J.~Lacroix, {\em Spectral theory on random {S}chr\"odinger
  operators}.
\newblock Probability and its Applications, Boston, MA: Birkh\"auser Boston
  Inc., 1990.

\bibitem{Somekh-Shamai-2000}
O.~Somekh and S.~{Shamai (Shitz)}, ``Shannon-theoretic approach to a {G}aussian
  cellular multi-access channel with fading,'' {\em IEEE Transactions on
  Information Theory}, vol.~46, pp.~1401--1425, July 2000.

\bibitem{Gradshteyn-Ryzhik-6th}
I.~S. Gradshteyn and I.~M. Ryzhik, {\em Table of Integrals, Series, and
  Products}.
\newblock Academic Press, 6~ed., 2000.

\bibitem{Somekh-Simeone-Barness-Haimovich-Shamai-IT07}
O.~Somekh, O.~Simeone, {Y. Bar-Ness}, A.~M. Haimovich, and S.~{Shamai (Shitz)},
  ``Distributed multi-cell zero-forcing beamforming in celullar downlink
  channels.'' Submitted to the IEEE Transactions on Information Theory, 2007.

\bibitem{David-Book-1981}
H.~A. David, {\em Order Statistics}.
\newblock J. Wiley \& Sons, 2nd~ed., 1981.

\bibitem{Sharif-Hassibi-Feb-2005}
M.~Sharif and B.~Hassibi, ``On the capacity of {MIMO} broadcast channel with
  partial side information,'' {\em IEEE Transactions on Information Theory},
  vol.~51, pp.~506--522, Feb. 2005.

\bibitem{Craig-Simon}
W.~Craig and B.~Simon, ``Log {H}\"older continuity of the integrated density of
  states for stochastic {J}acobi matrices,'' {\em Commun. Math. Phys.},
  vol.~90, no.~2, pp.~207--218, 1983.

\bibitem{nathanthesis}
N.~Levy, {\em Ph.D. Thesis, forthcoming}.
\newblock Technion.

\bibitem{harris}
S.~Meyn and R.~Tweedie, {\em Markov chains and stochastic stability}.
\newblock Communications and Control Engineering Series, London:
  Springer-Verlag London Ltd., 1993.

\bibitem{petitlivre}
P.~Bougerol and J.~Lacroix, {\em Products of random matrices with applications
  to {S}chr\"odinger Operators}, vol.~8 of {\em Progress in Probability and
  Statistics}.
\newblock Boston, MA: Birkh\"auser Boston Inc., 1985.

\bibitem{oseledec}
J.~E. Cohen, H.~Kesten, and C.~M. Newman, ``Oseledec's multiplicative ergodic
  theorem: a proof,'' in {\em Random matrices and their applications
  (Brunswick, Maine, 1984)}, vol.~50 of {\em Contemp. Math.}, pp.~23--30,
  Providence, RI: Amer. Math. Soc., 1986.

\bibitem{petit}
I.~Y. Goldsheid and B.~A. Khorunzhenko, ``The {T}houless formula for random
  non-{H}ermitian {J}acobi matrices,'' {\em Israel J. Math.}, vol.~148,
  pp.~331--346, 2005.

\bibitem{bougerol}
P.~Bougerol, ``Th\'eor\`emes limites pour les syst\`emes lin\'eaires \`a
  coefficients markoviens,'' {\em Probability Theory and Related Fields},
  vol.~78, no.~2, pp.~193--221, 1988.

\bibitem{bougerol2}
P.~Bougerol, ``Comparaison des exposants de {L}yapounov des processus
  markoviens multiplicatifs,'' {\em Ann. Inst. H. Poincaré Probab. Statist.},
  vol.~24, no.~4, pp.~439--489, 1988.

\bibitem{Horn-1985}
R.~A. Horn and C.~R. Johnson, {\em Matrix Analysis}.
\newblock New York: Cambridge Univ. Press, 1985.

\end{thebibliography}

\newpage

\begin{figure}
\begin{center}
\psfrag{a1k\r}{\scriptsize$a_{1,k}$}
\psfrag{a2k\r}{\scriptsize$a_{2,k}$}
\psfrag{a3k\r}{\scriptsize$a_{3,k}$}
\psfrag{b1k\r}{\scriptsize$b_{1,k}$}
\psfrag{b2k\r}{\scriptsize$b_{2,k}$}
\psfrag{b3k\r}{\scriptsize$b_{3,k}$}
\includegraphics[scale=0.9]{}
\end{center}
\caption{Soft-Handoff setup ($M=3$)}
\label{fig: Soft-handoff
setup}
\end{figure}

\begin{figure}
\begin{center}
\includegraphics[scale=0.69]{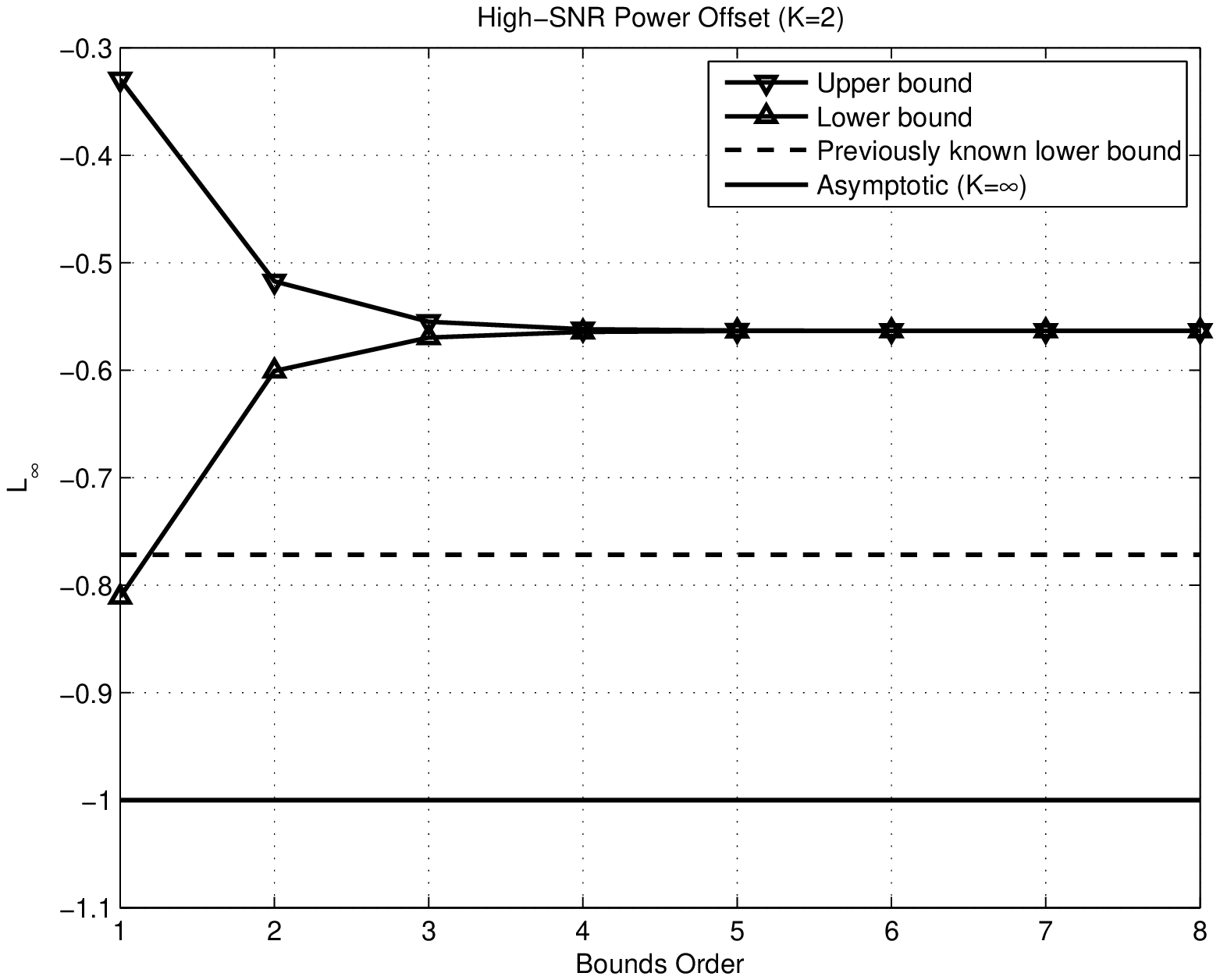}
\end{center}
\caption{High-SNR power offset bounds for Rayleigh fading, $K=2$,
and bounds order $n=1,2\cdots,8$} \label{fig: Linf order K=2}
\end{figure}

\begin{figure}
\begin{center}
\includegraphics[scale=0.65]{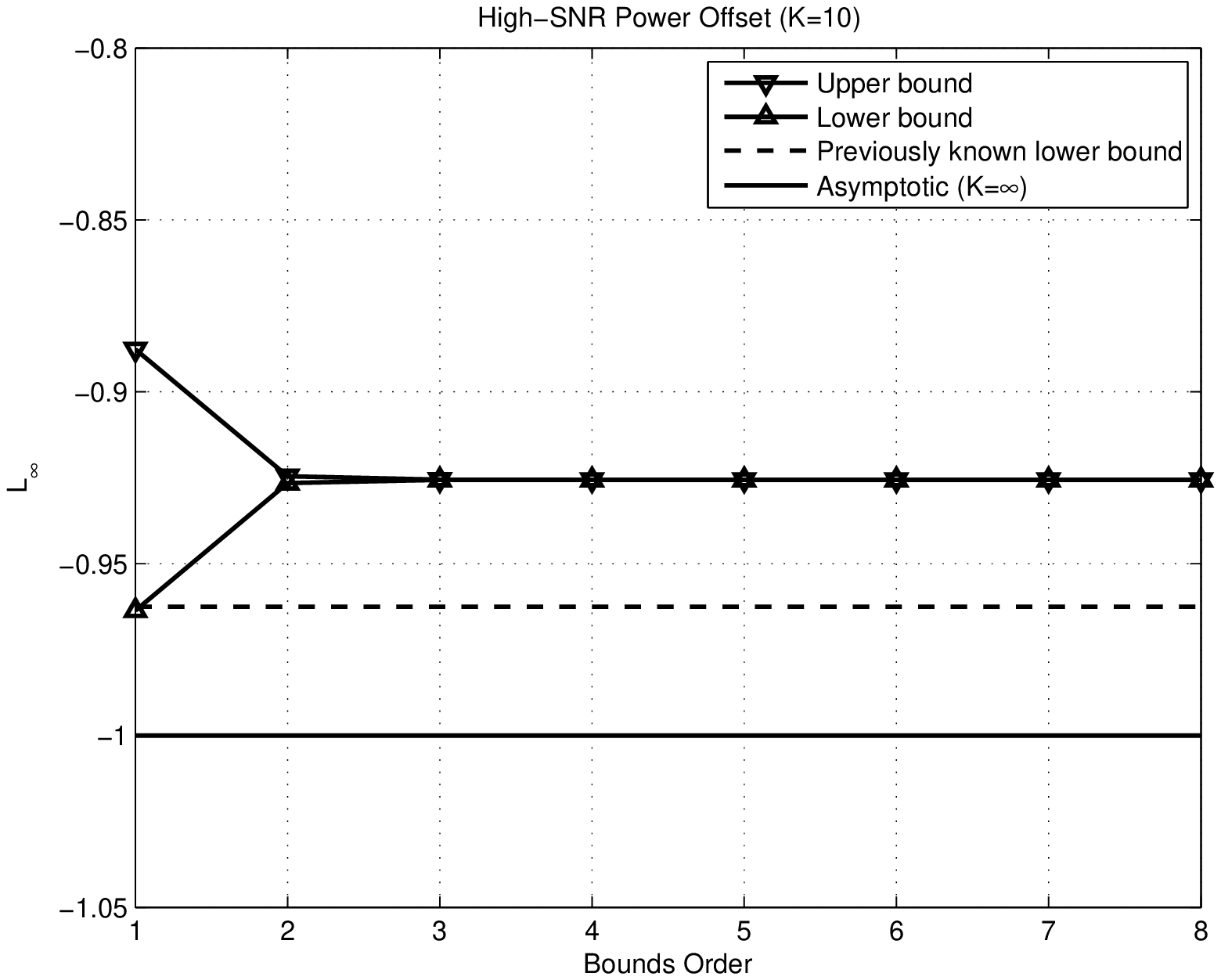}
\end{center}
\caption{High-SNR power offset bounds for Rayleigh fading, $K=10$,
and bounds order $n=1,2\cdots,8$} \label{fig: Linf order K=10}
\end{figure}

\begin{figure}
\begin{center}
\includegraphics[scale=0.65]{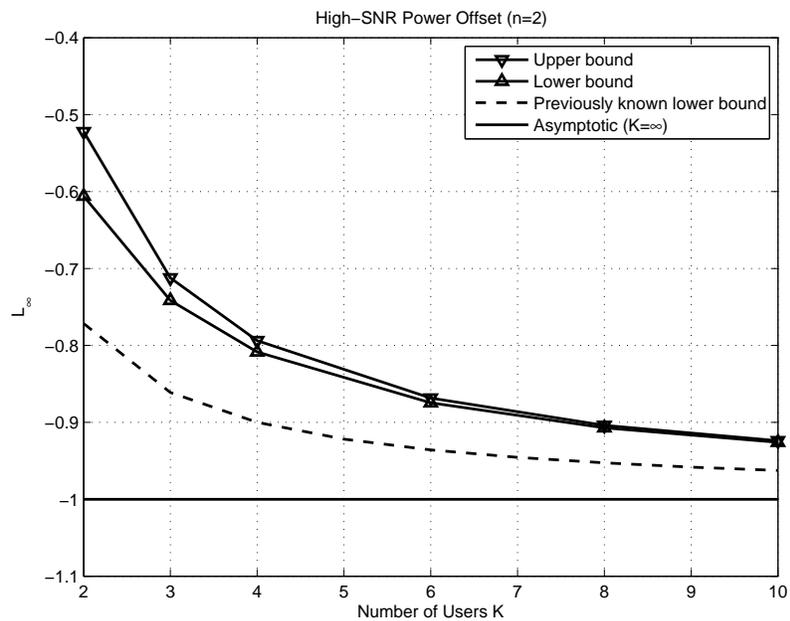}
\end{center}
\caption{High-SNR power offset bounds (order $n=2$) for Rayleigh,
and $K=2,3,4,6,8,10$. Note that for $K=1$, ${\cal
L}_\infty=\frac{\gamma}{\log 2}\approx 0.833$.} \label{fig: Linf
users n=2}
\end{figure}

\end{document}